\newtheorem{lemma}[]{Lemma}
\newtheorem{theorem}[]{Theorem}
\newtheorem{corollary}[]{Corollary}
\newtheorem{assumption}[]{Assumption}
\newtheorem{definition}[]{Definition}
\renewcommand{\cite}{\citeasnoun}
\numberwithin{equation}{section}
\begin{document}
\title{Finite Sample Inference for the Maximum Score Estimand\thanks{We thank Tim Armstrong, Federico Bugni, Bryan Graham, Michal Koles{\'a}r,  Sokbae Lee, Matt Masten, Francesca Molinari, Ulrich M{\"u}ller, Whitney Newey, Pepe Montiel Olea, Chris Sims, and seminar and conference participants at Columbia, Duke, Penn State, Princeton, Vanderbilt, Yale, the 2019 Triangle Econometrics Conference, the 2019 Southern Economic Association Conference, the 2019 California Econometrics Conference, the 2019 Young Econometricians Conference, the Cemmap/WISE Workshop on  Advances in Econometrics, and the University of Tokyo Workshop on Advances in Econometrics for insightful comments and discussion. Kuong (Lucas) Do and Cheuk Fai Ng provided excellent research assistance. Financial support from the Economic and Social Research Council ESRC Large Research Grant ES/P008909/1 to the Centre for Microdata Methods and Practice is gratefully acknowledged.}}
\author{Adam M. Rosen \\ Duke University and CeMMAP\thanks{Address: Adam Rosen, Department of Economics, Duke University, 213 Social Sciences Box 90097, Durham, NC 27708; Email: adam.rosen@duke.edu}
\and Takuya Ura \\ University of California, Davis\thanks{Address: Takuya Ura, Department of Economics, University of California, Davis, One Shields Avenue, Davis, CA 95616; Email: takura@ucdavis.edu} }
\maketitle
\begin{abstract}
We provide a finite sample inference method for the structural parameters of a semiparametric binary response model under a conditional median restriction originally studied by Manski (1975, 1985). Our inference method is valid for any sample size and irrespective of whether the structural parameters are point identified or partially identified, for example due to the lack of a continuously distributed covariate with large support. Our inference approach exploits distributional properties of observable outcomes conditional on the observed sequence of exogenous variables. Moment inequalities conditional on this size $n$ sequence of exogenous covariates are constructed, and the test statistic is a monotone function of violations of sample moment inequalities. The critical value used for inference is provided by the appropriate quantile of a known function of $n$ independent Rademacher random variables.   Simulation studies compare the performance of the test to two alternative tests using an infeasible likelihood ratio statistic and Horowitz's (1992) smoothed maximum score estimator.
\begin{description}
\item JEL classification: C12 C14.
\item Keywords: Finite sample inference, Maximum score estimation, Moment inequalities, Partial identification.
\end{description}
\end{abstract}

\newpage

\section{Introduction}

In Chapter 41 of Volume 4 of the \emph{Handbook of Econometrics} on the estimation of semiparametric models, \cite{powell1994estimation} on page 2488 cites \cite{manski:1975} as the earliest example of semiparametric analysis of limited dependent variable models. \cite{manski1985semiparametric} provided further analysis for the binary outcome version of the model, in which the outcome is determined by the linear index threshold-crossing specification
$$
Y=1\{X\beta +U\geq 0\} \text{,}  
$$
for observable variables $Y\in \{ 0,1\} $ and $X$ a row vector in $\mathbb{R}^{K}$, where the unobservable variable $U$ is restricted to satisfy the zero conditional median restriction 
$$
\mathrm{median}\left(U\mid X\right)=0.
$$
This semiparametric model is thus distribution-free with regard to unobservable $U$.\footnote{As noted in \cite{manski1985semiparametric}, his analysis easily generalizes to cover the restriction that the conditional $\tau$th quantile of $U$ given $X$ is $0$, where $\tau\in(0,1)$ is known. The analysis in this paper can be similarly generalized.} Full stochastic independence between $U$ and $X$ is not required, allowing for the conditional distribution of $U$ given $X=x$ to vary with the conditioning value $x$, and thus accommodating general forms of heteroskedasticity. Under both a rank condition and a large support condition on a continuous regressor \cite{manski1985semiparametric} established point identification of $\beta $ as well as the large deviations convergence rate of the maximum score estimator.

Several further analyses of the maximum score and similar estimators for this and closely related semiparametric binary response models have since been provided, and the literature on the \emph{asymptotic} properties of the maximum score estimator is now vast. \cite{kim1990} showed that the convergence rate of the maximum score estimator is cube-root and established its nonstandard asymptotic distribution after appropriate centering and scaling. \cite{horowitz1992smoothed} developed a smoothed maximum score estimator that converges faster than the $n^{-1/3}$ rate and is asymptotically normal under some additional smoothness assumptions. Additional papers that study large sample estimation and inference applicable in the maximum score context include \cite{Manski/Thompson:1986}, \cite{Delgado/Rodriguez-Poo/Wolf:01}, \cite{Abrevaya/Huang:2005}, \cite{Leger/MacGibbon:2006}, \cite{KOMAROVA201314}, \cite{blevins2015non}, Jun, Pinkse,  and Wan (2015, 2017)\nocite{jun/pinkse/wang:2015,jun/pinkse/wang:2017}, Chen and Lee (2018, 2019)\nocite{Chen/Lee:2018}\nocite{chen/lee:2017}, \cite{Patra/Seijo/Sen:18},  \cite{seo2018}, \cite{cattaneo/jansson/nagasawa:2018}, and \cite{mukherjee2019nonstandard}.

In contrast to prior approaches for inference on $\beta$ that employ asymptotic distributional approximations, in this paper we develop a method for conducting \emph{finite sample} inference on $\beta$. To do this we employ a conditional moment inequality characterization of the observable implications of the binary response model in the finite sample. Moment inequality characterizations of the model's implications have been previously used by \cite{KOMAROVA201314}, \cite{blevins2015non}, and \cite{chen/lee:2017}, but none of these papers proposed an inference method which is valid  in the finite sample.  As was the case in the analysis provided in these papers, we do not require that $\beta$ is point identified.  For instance, we do not require that any component of $X$ is continuously distributed, much less with large support.

In fact, even if $\beta$ is point identified, and regardless of the support of $X$ in the population, the set of observed values of $X$ in any finite sample is discrete.  Indeed, \cite[page 320]{manski1985semiparametric} defines, ``the maximum score estimate $\hat{B}_n$ to be the \emph{set} of solutions to the problem $\max_{b \in \mathcal{B}}S_n(b)$'' where $\mathcal{B}$ is the parameter space and $S_n(\cdot)$ denotes the sample score function.\footnote{Manski (1985) used $B$ to denote the parameter space and upper case $N$ to denote sample size, which we have changed to $\mathcal{B}$ and lower case $n$ to match our notation.}  He shows that if the given sufficient conditions for point identification hold, then the distance between $\hat{B}_n$ and $\beta$ converges almost surely to zero, implying consistency of any sequence of $\hat{\beta}_n \in \hat{B}_n$ for $\beta$. Intuitively, the set of possible maximum score point estimators shrinks to a point as $n \rightarrow \infty$.  Given that our aim in this paper is to conduct finite sample inference, we must own up to the fact that even if $\beta$ is point identified, there is a proper set of values $b$ to which $\beta$ is observationally equivalent on the basis of only values of $X$ observed in the finite sample.

We thus introduce the concept of the \emph{finite sample identified set} as the set of parameter vectors $b \in \mathcal{B}$ that satisfy the observable implications of the binary response model \emph{conditional} on a size $n$ sequence of observable covariate vectors $\mathcal{X}_n \equiv (X_1,\ldots,X_n)$.  This set can be thought of as what the population identified set would be if the support of $X$ were in fact $\mathcal{X}_n$.  It differs from the set estimator $\hat{B}_n$ because it is not a function of the outcomes $Y_1,....,Y_n$, instead only incorporating the observable implications of the model for the distribution of outcomes conditional on $\mathcal{X}_n$.  This newly defined finite sample identified set is useful because our finite sample inference approach is driven only by observable implications regarding $Y_1,....,Y_n$ conditional on $\mathcal{X}_n$, and will be explicit in not being able to detect violations of conditional moment inequalities that condition on values of $X$ not observed in the sample.

The approach taken here exploits the implication of the binary response model that conditional on $\mathcal{X}_n$, each outcome $Y_i$ is distributed Bernoulli with parameter $p\left(X_i,\beta \right) \equiv \mathbb{P}\left(U_i \geq -X_i \beta \mid \mathcal{X}_n\right)$. In practice the Bernoulli probabilities $p\left(X_i,\beta \right)$ are unknown. Nonetheless, conditional on $\mathcal{X}_n$, each $p\left(X_i,\beta \right) $ is bounded from above or below by $1/2$ according to the sign of $X_i \beta $. Consequently, for any known nonnegative-valued function $g\left( \cdot \right) : \mathcal{X}_n \rightarrow \mathbb{R}$, the finite sample distributions of $\omega_{ui}(\beta,g) \equiv (2Y_i-1) 1\{X_i\beta \geq 0\}g\left( X_i \right) $ and $\omega_{li}(\beta,g) \equiv (1-2Y_i)1\{X_i\beta \leq 0\}g\left( X_i\right) $ conditional on $\mathcal{X}_n$ can be bounded from below.  The test statistic $T_n(b)$ that we use to implement our test of the null hypothesis $H_0: \beta=b$ is a supremum of weighted sample averages of $-\omega_{ui}(\beta,g)$ and $-\omega_{li}(\beta,g)$, where the supremum is taken over particular collections of functions $g\left( \cdot \right)$.  The test statistic $T_n(b)$ is shown to be bounded from above by a function $T^{\ast}_n\left(b\right)$ of $n$ independent Rademacher random variables, such that the finite sample distribution of $T^{\ast}_n\left(b\right)$ given $\mathcal{X}_n$ is known. Then, under the null hypothesis $\beta = b$, we have 
$$
\mathbb{P}\left(T_{n}(b)>q_{1-\alpha }\mid \mathcal{X}_n\right)\leq \alpha,
$$ 
where $q_{1-\alpha }$ is the $1-\alpha $ quantile of $T^{\ast}_n\left(b\right)$ given $\mathcal{X}_n$. We establish that if particular functions $g\left( \cdot \right)$ are used, the moment functions which $T_{n}(b)$ incorporates preserve all the identifying information contained in the finite sample identified set. We further establish a power result for alternatives that lie outside this set.

For the sake of comparison we also consider likelihood ratio tests. Optimality of the likelihood ratio test for the null hypothesis of $\beta=b$ depends on the specification of the alternative hypothesis. The likelihood ratio test is a most powerful test when the alternative hypothesis is simple, meaning that it specifies a unique conditional distribution $\mathbb{P}\left(Y_1,...,Y_n \mid \mathcal{X}_n \right)$. However, when the alternative hypothesis is of the form $\beta\ne b$ or $\beta=\tilde{b}$, many possible distributions of $\mathbb{P}\left(Y_1,...,Y_n \mid \mathcal{X}_n \right)$ are permitted under the conditional median restriction. These alternatives are therefore composite. The likelihood ratio test of the null hypothesis $\beta = b$ against either of these alternatives corresponds to a coin flip independent of the data, and has only trivial power against any distribution under either of these composite alternative hypotheses. We establish a lower bound on the power of our test, and we show that this power bound is increasing in the degree to which a hypothesized parameter vector $b$ violates the inequalities that characterize the finite sample identified set. Thus values of $b$ that are sufficiently far from the finite sample identified set by this measure are guaranteed to be rejected with probability exceeding the size of the test.  

Among the aforementioned papers from the literature on maximum score, the most closely related is that of \cite{chen/lee:2017}, who also cast the implications of Manski's (1985) model as conditional moment inequalities for the sake of delivering a new insight, albeit one that is entirely different from ours. \cite{chen/lee:2017} expand on the conditional moment inequalities used by \cite{KOMAROVA201314} and \cite{blevins2015non} to develop a novel conditional moment inequality characterization of the identified set which involves conditioning on two linear indices instead of on the entire exogenous covariate vector. They apply intersection bound inference from \cite{Chernozhukov/Lee/Rosen:09}\ to this conditional moment inequality characterization to achieve asymptotically valid inference. This cleverly exploits the model's semiparametric linear index restriction in order to side step the curse of dimensionality. Although a good deal of focus is given to Manski's (1985) binary response model, their method can also be applied to other semiparametric models.

To the best of our knowledge, this paper is the first to propose a finite sample inference method for $\beta$ in Manski's (1985) semiparametric binary response model. The test proposed is shown to control size for any sample size $n$. This paper is also the first to introduce the concept of a finite sample identified set, both within the context of the semiparametric binary response model and more broadly, explicitly defining the set of model parameters logically consistent with the modeling restrictions and only information that can be gathered from observable implications conditional on realizations of exogenous variables observed in the finite sample.  There are however a handful of precedents for employing finite sample inference with other partially identifying models. \cite{Manski:2007} considers the problem of predicting choice probabilities for the choices individuals would make if subjected to counterfactual variation in their choice sets. In the absence of the structure afforded by commonly used random utility models, he shows that counterfactual choice probabilities are partially identified, and proposes a procedure for inference using results from \cite{Clopper/Pearson:34}. \cite{chernozhukov/hansen/Jansson:2009} propose a finite sample inference method in quantile regression models in which the outcome is continuously distributed. Their approach exploits a ``conditionally pivotal property'' to bound the finite sample distribution of a GMM criterion incorporating moment equalities, but which does not require point identification for its validity.  \cite{Syrgkanis/Tamer/Ziani:18} conduct inference on partially identified parameters of interest in auction models imposing weak assumptions on bidders' information.  They propose a method to conduct finite sample inference on moments of functions of the underlying valuation distribution using concentration inequalities. \cite{Armstrong/Kolesar:2018} provide methods for optimal inference on average treatments that are finite sample valid in the special case in which regression errors are normal, and asymptotically valid more generally. Their conditions cover cases where identification may fail due to lack of overlap of the support of conditioning variables. The approach taken in this paper for finite sample inference in the context of Manski's (1985) binary response model is different from all of these.

The rest of this paper is organized as follows. Section \ref{Section: Testing Problem} formally sets out the testing problem and the moment inequality representation of the finite sample identified set. Section \ref{Section:Test Statistic and Critical Value} lays out the construction of the test statistic and corresponding critical value, and establishes the finite sample validity of the test. It also establishes a finite sample (lower) power bound for our test. Section \ref{Section: likelihood ratio test} considers the likelihood ratio test. Section \ref{Section: Monte Carlos} demonstrates the performance of our approach by reporting results from Monte Carlo simulations with comparison to inference using an infeasible likelihood ratio test and Horowitz's (1992) smoothed maximum score estimator. Section \ref{Section: Conclusion} concludes and discusses avenues for future research. All proofs are in the Appendix. Unless otherwise stated, our analysis throughout this paper should be read as conditional on observable covariate vectors $\mathcal{X}_n \equiv (X_1,\ldots,X_n)$.

\section{Model and Moment Restrictions}\label{Section: Testing Problem}

This section is divided into three subsections, the first of which formally presents the modeling restrictions imposed.  The second subsection describes the observable implications of the binary response model \emph{conditional} on a size $n$ sequence of covariate vectors, $\mathcal{X}_n$, in contrast to those observable implications obtainable from knowledge of the population distribution of observable variables.  Based on these observable implications, this second subsection introduces our definition of the finite sample identified set. It clarifies what violations of our model's implications the proposed test can feasibly detect, which is useful for power considerations.  The third subsection uses the special case in which covariates $X_i$ are bivariate as an illustrative example. The developments of Subsections \ref{Subsection: FSIDset} and \ref{Section: Two Covariates} are however not essential for establishing size control of the test presented in Section \ref{Section:Test Statistic and Critical Value}.

\subsection{Model}

The following assumption formalizes the restrictions of the semiparametric
binary response model under study and the requirements on the sampling
process. We maintain these assumptions throughout this paper. 

\begin{assumption}\label{Assumption: Model and Sampling Process}
(i) Random vectors $\{ \left(Y_i,X_i,U_i\right) :i=1,\ldots,n\} $ reside on a probability space $\left( \Omega ,\mathfrak{F} ,\mathbb{P}\right) $, where $\mathfrak{F} $ contains the Borel sets on $\Omega $. 
(ii) Variables $\{\left(Y_i,X_i\right) :i=1,\ldots,n\} $ are observed. 
(iii) There is a column vector $\beta\in\mathbb{R}^{K}$ such that $\mathbb{P}\left( Y_i=1\{ X_i\beta +U_i\geq 0\}\mid \mathcal{X}_n\right) =1$ and $\mathbb{P}\left( U_i\geq 0\mid \mathcal{X}_n\right) =1/2$ for every $i=1,\ldots ,n$, where $\mathcal{X}_n \equiv (X_1,\ldots,X_n)$. 
(iv) There is a known set $\mathcal{B}\subseteq \mathbb{R}^{K}$ to which $\beta $ belongs. 
(v) The unobservable variables $\left(1\{U_1\geq 0\},\ldots,1\{U_n\geq 0\}\right)$ are mutually independent conditional on $\mathcal{X}_n$.
\end{assumption}

The requirements of Assumption \ref{Assumption: Model and Sampling Process} are slightly weaker than the assumptions used in the existing literature (e.g., Manski, 1975, 1985). Parts (i), (ii) and (iv)\ are standard. Although it is not necessary in this paper because we employ partial identification analysis, the parameter space $\mathcal{B}$ can be restricted by imposing one of the usual scale normalizations from the literature, such as $|b_1|=1$ for all $b \in \mathcal{B}$. Part (iii) imposes the binary response structure and the requirement that $\mathbb{P}\left( U_i\geq 0\mid \mathcal{X}_n\right)=1/2$ for each $i$. Binary response models typically require that $U_i$ is continuously distributed in a neighborhood of zero, in which case this is implied by the usual conditional median restriction. Strictly speaking, we do not need to impose that each $U_i$ is continuously distributed at zero, and hence we replace the median restriction with this requirement. Part (v) holds if $\left(Y_i,X_i,U_i\right) $ are independent and identically distributed, but is much more general. The observations $\{ \left( Y_i,X_i\right):i=1,\ldots,n\} $ can be non-independent and non-identically distributed. Indeed, even if $\{ \left( Y_i,X_i\right):i=1,\ldots,n\} $ were restricted to be i.i.d., it would remain the case that $\{Y_i: i=1,\ldots,n\}$ would not be i.i.d. conditional on $\mathcal{X}_n$. Throughout the text, $\mathbb{E}[\cdot]$ is used to denote population expectations taken with respect to $\mathbb{P}$, and $\mathbb{E}_n[\cdot] \equiv n^{-1}\sum_{i=1}^{n}[\cdot]$.

The power result presented in Theorem \ref{theorem:power_bound} in Section \ref{Section:Test Statistic and Critical Value} and the results of Section \ref{Section: likelihood ratio test} on the likelihood ratio test additionally invoke the following assumption.
\begin{assumption}\label{Assumption: Uindendepednece}
\label{Assumption: independent U} The unobservable variables $\left(U_{1},...,U_{n}\right)$ are mutually independent conditional on $\mathcal{X}_n$.
\end{assumption}
\noindent This restriction is common in the prior literature on maximum score estimation, but is not required for many of the results in this paper, and in particular is not necessary to establish size control for our test.  The assumption is satisfied for example in models that restrict $(X_i, U_i)$ to be i.i.d.\footnote{For instance this is implied by Assumption 3 of \cite{manski1985semiparametric}.} Note however that Assumption \ref{Assumption: Uindendepednece} does not require $U_1,...,U_n$ to be i.i.d. given $\mathcal{X}_n$, but only mutually independent.

\subsection{Observable Implications Conditional on $\mathcal{X}_n$}\label{Subsection: FSIDset}

To conduct finite sample inference, we focus solely on the implications obtainable from a sequence of $n$ draws of $\left( Y,X\right) $ in a sample $\{ \left( Y_i,X_i\right) :i=1,\ldots,n\} $ and not on features of the population distribution of these variables that could only be obtained on the basis of infinitely many observations. Consequently our focus is not on the identified set that could be obtained from knowledge of the population distribution of $\left( Y,X\right) $ in an infinitely large population, but rather on the set obtainable solely from knowledge of a size $n$ sample of observations in accord with Assumption \ref{Assumption: Model and Sampling Process}. By definition, this is the set of parameter vectors $b \in \mathcal{B}$ such that the distribution of $Y_1,\ldots,Y_n$ conditional on $\mathcal{X}_n$ matches that of $\tilde{Y}_i \equiv 1\{X_i b + \tilde{U}_i \geq 0\}$ for a sequence of random variables $\tilde{U}_1,\ldots,\tilde{U}_n$ that satisfy the restrictions placed on the conditional distribution of $U_1,\ldots,U_n$ in Assumption \ref{Assumption: Model and Sampling Process}. We refer to this set as the finite sample identified set and denote it as $\mathcal{B}_{n}^{\ast }$. 

\begin{definition}\label{Def: Finite Sample Identified Set}
The \textbf{finite sample identified set} for $\beta$, denoted $\mathcal{B}_{n}^{\ast }$, is the set of $b\in\mathcal{B}$ for which there exist random variables $\{\tilde{Y}_{i} :i=1,\ldots,n\} $ and $\{\tilde{U}_{i} :i=1,\ldots,n\} $ such that:
\begin{itemize} 
\item[(i):]  $\mathbb{P}\left( \tilde{Y}_i=1\{ X_ib +\tilde{U}_{i}\geq 0\}\mid \mathcal{X}_n\right) =1$,
\item[(ii):] $(\tilde{Y}_{1},\ldots,\tilde{Y}_{n}) $ and $\left(Y_{1},\ldots,Y_{n}\right)$ have the same distribution conditional on $\mathcal{X}_n$,
\item[(iii):] $\mathbb{P}\left(\tilde{U}_{i}\geq 0\mid \mathcal{X}_n\right) =1/2$ for every $i=1,\ldots ,n$,
\item[(iv):] $\{1\{\tilde{U}_{i}\geq 0\}:i=1,\ldots ,n\}$ are mutually independent given $\mathcal{X}_n$.
\end{itemize}
\end{definition}

Our next task is to express $\mathcal{B}_{n}^{\ast }$ with a moment inequality representation useful for inference. The following lemma sets out two observable implications that will be useful for this purpose. 

\begin{lemma}
\label{Lemma:Conditional Moment Inequalities} 
Under Assumption \ref{Assumption: Model and Sampling Process}, 
\begin{eqnarray} \label{positive if inequality}
X_i\beta  &\geq &0\implies \mathbb{E}[2Y_i-1\mid \mathcal{X}_{n}]\geq 0%
\text{,}   \\ \label{negative if inequality}
X_i\beta  &\leq &0\implies \mathbb{E}[2Y_i-1\mid \mathcal{X}_{n}]\leq 0%
\text{.}
\end{eqnarray}
\end{lemma}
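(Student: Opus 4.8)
The plan is to reduce each implication to a comparison between the conditional probability $\mathbb{P}(U_i \geq -X_i\beta \mid \mathcal{X}_n)$ and the median restriction $\mathbb{P}(U_i \geq 0 \mid \mathcal{X}_n) = 1/2$ supplied by Assumption \ref{Assumption: Model and Sampling Process}(iii). First I would observe that, conditional on $\mathcal{X}_n$, the scalar $X_i\beta$ is nonrandom, so the hypotheses $X_i\beta \geq 0$ and $X_i\beta \leq 0$ merely select which of the two bounds is in force. Invoking Assumption \ref{Assumption: Model and Sampling Process}(iii), we have $Y_i = 1\{X_i\beta + U_i \geq 0\}$ almost surely given $\mathcal{X}_n$, and hence
$$
\mathbb{E}[2Y_i - 1 \mid \mathcal{X}_n] = 2\mathbb{P}(Y_i = 1 \mid \mathcal{X}_n) - 1 = 2\mathbb{P}(U_i \geq -X_i\beta \mid \mathcal{X}_n) - 1.
$$

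The two cases then follow from monotonicity of probability under event inclusion. For \eqref{positive if inequality}, when $X_i\beta \geq 0$ the threshold satisfies $-X_i\beta \leq 0$, so $\{U_i \geq 0\} \subseteq \{U_i \geq -X_i\beta\}$, giving $\mathbb{P}(U_i \geq -X_i\beta \mid \mathcal{X}_n) \geq \mathbb{P}(U_i \geq 0 \mid \mathcal{X}_n) = 1/2$; substituting into the display yields $\mathbb{E}[2Y_i - 1 \mid \mathcal{X}_n] \geq 0$. Symmetrically, for \eqref{negative if inequality}, when $X_i\beta \leq 0$ we have $-X_i\beta \geq 0$, so $\{U_i \geq -X_i\beta\} \subseteq \{U_i \geq 0\}$, the inequality reverses, and $\mathbb{E}[2Y_i - 1 \mid \mathcal{X}_n] \leq 0$.

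I do not anticipate a genuine obstacle: the result is a direct consequence of the threshold-crossing specification combined with the conditional median restriction, and parts (i), (ii), (iv), and (v) of Assumption \ref{Assumption: Model and Sampling Process} are not needed—only part (iii). The one point requiring care is the bookkeeping on inequality directions, in particular the sign flip when passing from $X_i\beta$ to the threshold $-X_i\beta$, together with the fact that the weak inequalities in the hypotheses make the two implications overlap at $X_i\beta = 0$, where both bounds bind and force $\mathbb{E}[2Y_i - 1 \mid \mathcal{X}_n] = 0$.
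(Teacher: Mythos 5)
Your proof is correct and is essentially the paper's own argument: the paper writes the key step as the pointwise indicator inequality $Y_i = 1\{X_i\beta + U_i \geq 0\} \gtrless 1\{U_i \geq 0\}$ and takes conditional expectations, which is exactly your event-inclusion $\{U_i \geq 0\} \subseteq \{U_i \geq -X_i\beta\}$ (or its reverse) combined with monotonicity of conditional probability and the restriction $\mathbb{P}(U_i \geq 0 \mid \mathcal{X}_n) = 1/2$. Your observation that only part (iii) of Assumption \ref{Assumption: Model and Sampling Process} is used, and that the two implications jointly force $\mathbb{E}[2Y_i - 1 \mid \mathcal{X}_n] = 0$ when $X_i\beta = 0$, also matches the paper's discussion following the lemma.
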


From the inequalities of the lemma, it further follows that if $X_i \beta = 0$ then $\mathbb{E}[2Y_i-1\mid \mathcal{X}_{n}]=0$.
Moreover (2.1) and \eqref{negative if inequality} and the implications of them described above hold with $\beta$ replaced by any $b$ that is an element of the finite sample identified set $\mathcal{B}_{n}^{\ast }$. This can be proven by following precisely the same steps as in the proof of the lemma with $\tilde{U}_i$ from Definition 1 replacing $U_i$.

With Lemma \ref{Lemma:Conditional Moment Inequalities} in hand, the following theorem provides a moment inequality characterization of the finite sample identified set.
\begin{theorem}
\label{Theorem: Finite Sample Identified Set} 
Under Assumption \ref{Assumption: Model and Sampling Process}, the finite sample identified set for $\beta $ is 
$$
\mathcal{B}_{n}^{\ast }=\left\{ b\in \mathcal{B}: 
\mathbb{E}\left[ \left( 2Y_i-1\right) 1\{X_ib\geq 0\} \mid \mathcal{X}_n\right] \geq 0\geq \mathbb{E}\left[ \left( 2Y_i - 1\right) 1\{X_ib\leq 0\} \mid \mathcal{X}_n\right]\mbox{ for every }i=1,\ldots,n
\right\} .  
$$
\end{theorem}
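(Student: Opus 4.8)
The plan is to prove the two inclusions of the set equality separately. Throughout I work conditionally on $\mathcal{X}_n$, write $p_i \equiv \mathbb{P}(Y_i = 1 \mid \mathcal{X}_n)$ and $Z_i \equiv 1\{U_i \ge 0\}$, and denote the right-hand side set by $\mathcal{C}_n$. For the inclusion $\mathcal{B}_n^\ast \subseteq \mathcal{C}_n$, I would take $b \in \mathcal{B}_n^\ast$ with associated $\tilde Y_i,\tilde U_i$ from Definition \ref{Def: Finite Sample Identified Set}. These satisfy the hypotheses of Lemma \ref{Lemma:Conditional Moment Inequalities} with $(\beta,U_i,Y_i)$ replaced by $(b,\tilde U_i,\tilde Y_i)$, which is exactly the remark following the lemma, so $X_i b \ge 0 \Rightarrow \mathbb{E}[2\tilde Y_i - 1 \mid \mathcal{X}_n] \ge 0$ and $X_i b \le 0 \Rightarrow \mathbb{E}[2\tilde Y_i - 1 \mid \mathcal{X}_n] \le 0$. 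Since $(\tilde Y_1,\ldots,\tilde Y_n)$ and $(Y_1,\ldots,Y_n)$ share a conditional law, the conditional means agree, and multiplying through by the conditionally constant indicators $1\{X_i b \ge 0\}$ and $1\{X_i b \le 0\}$ recovers the two defining inequalities of $\mathcal{C}_n$; hence $b \in \mathcal{C}_n$.

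The reverse inclusion $\mathcal{C}_n \subseteq \mathcal{B}_n^\ast$ is the crux, and the hard part is that membership in $\mathcal{C}_n$ restricts only the marginal means $\mathbb{E}[2Y_i-1\mid\mathcal{X}_n]$, whereas Definition \ref{Def: Finite Sample Identified Set} demands reproducing the entire joint conditional law of $(Y_1,\ldots,Y_n)$ under the median-zero and independence restrictions. Rather than attempt a direct coupling of the $Y_i$ with fresh independent signs (which meets a Strassen-type obstruction, since the marginal constraints alone do not control joint cells), I would reuse the true signs $Z_i$, which by Assumption \ref{Assumption: Model and Sampling Process}(iii),(v) are mutually independent with $\mathbb{P}(Z_i = 1 \mid \mathcal{X}_n) = 1/2$. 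I set $\tilde Y_i = Y_i$, so condition (ii) of the definition holds trivially and the joint law is matched exactly, and build $\tilde U_i$ as a measurable function of $U_i$ with $1\{\tilde U_i \ge 0\} = Z_i$ and $1\{\tilde U_i \ge -X_i b\} = Y_i$. Then (iii) and (iv) hold because $\tilde U_i$ has the same sign as $U_i$, and (i) holds by the second equality.

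Such a $\tilde U_i$ exists precisely when $(Z_i,Y_i)$ is sign-compatible: for $X_i b > 0$ one needs $Z_i = 1 \Rightarrow Y_i = 1$, for $X_i b < 0$ one needs $Z_i = 0 \Rightarrow Y_i = 0$, and for $X_i b = 0$ one needs $Y_i = Z_i$. This is where $\mathcal{C}_n$ enters, through a boundary lemma: if $p_i = 1/2$ then $Y_i = Z_i$ almost surely, since $Y_i \ge Z_i$ when $X_i\beta \ge 0$ and $Y_i \le Z_i$ when $X_i\beta \le 0$, so equal probabilities force equal indicators. For $X_i b > 0$, $b \in \mathcal{C}_n$ gives $p_i \ge 1/2$; if the true index satisfies $X_i\beta \ge 0$ then $Z_i = 1 \Rightarrow U_i \ge 0 \ge -X_i\beta \Rightarrow Y_i = 1$ directly, while if $X_i\beta < 0$ then Lemma \ref{Lemma:Conditional Moment Inequalities} forces $p_i \le 1/2$, hence $p_i = 1/2$ and $Y_i = Z_i$ by the boundary lemma, again giving the implication. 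The cases $X_i b < 0$ and $X_i b = 0$ are symmetric, the last using that $X_i b = 0$ forces $p_i = 1/2$ and hence $Y_i = Z_i$ outright. Assigning $\tilde U_i$ an explicit representative value in the (now nonempty) admissible region for each realized $(Z_i,Y_i)$ then completes the construction, giving $b \in \mathcal{B}_n^\ast$.

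In summary, the main obstacle is reconciling a purely marginal hypothesis with the joint-law-matching requirement of Definition \ref{Def: Finite Sample Identified Set}; the device that defuses it is reusing the true independent signs $Z_i$ together with $\tilde Y_i = Y_i$, so that no genuine joint coupling is ever needed and the moment inequalities enter only through the boundary lemma guaranteeing sign compatibility. Before finalizing I would verify measurability of the map $U_i \mapsto \tilde U_i$ and confirm that the argument uses only Assumption \ref{Assumption: Model and Sampling Process}(v) and not the stronger Assumption \ref{Assumption: Uindendepednece}.
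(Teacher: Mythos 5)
Your proposal is correct and follows essentially the same route as the paper's proof: the forward inclusion applies Lemma \ref{Lemma:Conditional Moment Inequalities} to the tilde variables, and the converse sets $\tilde{Y}_i = Y_i$, builds $\tilde{U}_i$ pathwise from $U_i$ with matching sign (so conditions (ii)--(iv) of Definition \ref{Def: Finite Sample Identified Set} come for free from Assumption \ref{Assumption: Model and Sampling Process}), and uses the moment inequalities together with Lemma \ref{Lemma:Conditional Moment Inequalities} to force $\mathbb{P}(Y_i = 1 \mid \mathcal{X}_n) = 1/2$ — and hence $Y_i = 1\{U_i \geq 0\}$ a.s. — precisely in the cases where the signs of $X_i b$ and $X_i \beta$ conflict. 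The paper organizes the case analysis by the sign of $X_i\beta$ with explicit formulas for $\tilde{U}_i$ and inline probability computations, whereas you case on the sign of $X_i b$ and isolate the key step as a boundary lemma, but the underlying mechanism is identical.
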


The conditional moment inequalities characterizing $\mathcal{B}_{n}^{\ast }$ in Theorem \ref{Theorem: Finite Sample Identified Set} are equivalent to (\ref{positive if inequality}) and (\ref{negative if inequality}) for all $i=1,\ldots,n$. However, using this conditional moment inequality representation to conduct inference on $\beta$ is complicated by the fact that in a sample of $n$ observations the distribution of $Y_i$ given $\mathcal{X}_n$ can vary across $i$, even if $(Y_i,X_i): i=1,\ldots,n$ are identically distributed, and there is only one observation of $(Y_i,X_i)$ for each $i$.

Some level of aggregation of these implications across $i$ is therefore required. One way to do this is to interact the expressions inside the conditional expectation operators in Theorem \ref{Theorem: Finite Sample Identified Set} with nonnegative-valued functions of the exogenous covariates and take sample averages. For this purpose, define $\{g_u(\cdot, v):v\in \mathcal{V}_u\}$ and $\{g_{l}(\cdot, v):v\in \mathcal{V}_l\}$ to be collections of such non-negative instrument functions indexed by $v\in \mathcal{V}_u$ and $v \in \mathcal{V}_l$, respectively.  That is, for any $v \in \mathcal{V}_u$, $g_u(\cdot, v):\mathbb{R}^K \rightarrow \mathbb{R}_+$ is an instrument function mapping from $K$ dimensional Euclidean space (on which each $X_i$ resides) to the nonnegative reals. Likewise, for each $v \in \mathcal{V}_l$, the function $g_{l}(\cdot, v)$ also maps to the nonnegative reals. Shortly, particular collections of such functions and corresponding index sets $\mathcal{V}_u$ and $\mathcal{V}_l$ will be defined for construction of our proposed test statistic.

Since the instrument functions are nonnegative-valued, interacting them with the expressions inside the conditional expectations in Theorem \ref{Theorem: Finite Sample Identified Set} and averaging across $i$ will preserve the sign of the conditional expectation. Specifically, consider for any nonnegative valued instrument function $g\left(\cdot \right)$ the conditional moment inequalities 
\begin{eqnarray}
\mathbb{E}\left[\mathbb{E}_n\left[ (2Y-1) 1\{Xb\geq 0\}g(X)\right]\mid \mathcal{X}_n \right] \geq 0\text{,}  \label{Conditioning Set Expectation pos} \\
\mathbb{E}\left[\mathbb{E}_n\left[ (1-2Y) 1\{Xb\leq 0\}g(X)\right]\mid \mathcal{X}_n\right] \geq 0\text{.}\label{Conditioning Set Expectation neg}
\end{eqnarray}
These inequalities are valid for all $b \in \mathcal{B}_{n}^{\ast }$ because they are implications of the conditional moment inequalities characterizing $\mathcal{B}_{n}^{\ast }$ in Theorem \ref{Theorem: Finite Sample Identified Set}. Indeed, they are both valid for any \emph{collection} of nonnegative-valued instrument functions. A potential drawback to aggregation of the conditional moments however is that (\ref{Conditioning Set Expectation pos}) and (\ref{Conditioning Set Expectation neg})\ for just any such collection of positive instrument functions need not in general fully characterize $\mathcal{B}_{n}^{\ast }$, so that using the latter inequalities can result in a loss of identification power.

Particular collections $\{g_u(\cdot,v):v\in \mathcal{V}_u\} $ and $\{g_{l}(\cdot,v) :v\in \mathcal{V}_l\} $ are now defined so that imposing \eqref{Conditioning Set Expectation pos} and \eqref{Conditioning Set Expectation neg} ensures preservation of the full identifying power of the conditional-on-$\mathcal{X}_n$ moment inequalities in Theorem \ref{Theorem: Finite Sample Identified Set}.  These collections differ from those used by \cite{andrews/shi:2013} for translating the identifying power of conditional moment inequalities to a collection of unconditional moment inequalities.  In the present setting, there is no issue of converting inequalities conditional on continuous variables to unconditional ones, because the conditioning set in the inequalities characterizing $\mathcal{B}^{\ast}_n$ is finite. Instead, the problem to be addressed is how best to aggregate these implications across observations $i$ given the non-i.i.d. nature of $Y_i$ conditional on $\mathcal{X}_n$.  In constructing our collection of information-preserving instrument functions, we exploit two features specific to the task at hand, namely first that our focus is on finite sample inference conditional on $\mathcal{X}_n$ and second that whether or not $\mathbb{E}\left[2Y_i-1 \mid \mathcal{X}_n\right]\geq 0$ ($\leq 0$) depends only on whether the linear index $X_i\beta $ is at least (at most) zero. In the existing literature, Pinkse (1993, Section 3.3)\nocite{pinkse:1993} has used the second feature for exact computation of the maximum score estimator. 

Toward this end, consider the following two sequences of binary indicators:
\begin{eqnarray*}
r_u\left( b\right) &\equiv &\left( 1\{X_{1}b\geq 0\},\ldots,1\{X_{n}b\geq 0\} \right) \text{,} \\ 
r_{l}\left( b\right) &\equiv &\left( 1\{X_{1}b\leq 0\},\ldots,1\{X_{n}b\leq 0\}\right).
\end{eqnarray*}%
Irrespective of how $g(X)$ is defined, $\mathbb{E}_n\left[ (2Y-1) 1\{Xb\geq 0\}g(X)\right]=\mathbb{E}_n\left[ (2Y-1) 1\{Xb^{\prime}\geq 0\}g(X)\right]$ whenever $r_u(b)=r_u(b^{\prime })$\ and $\mathbb{E}_n\left[ (1-2Y) 1\{Xb\leq 0\}g(X)\right]=\mathbb{E}_n\left[ (1-2Y) 1\{Xb^{\prime}\leq 0\}g(X)\right]$ whenever $r_{l}(b)=r_{l}(b^{\prime })$. Using the functions $r_u(\cdot)$ and $r_{l}(\cdot)$, we denote by $\mathsf{V}_u$ the coimage of the function $r_u(\cdot)$ on $\mathcal{B}$, and by $\mathsf{V}_{l}$ that of the function $r_{l}(\cdot)$.\footnote{The coimage of a function $f$ is defined as the quotient set of the equivalence relation defined by $f$.} That is, taking for instance $\mathsf{V}_u$, the sequences of inequalities defining unique values for $r_u(\cdot)$, namely $X_ib\geq 0$ and $X_ib < 0$ for $i=1,\ldots,n$, partition $\mathbb{R}^K$ into the collections of sets $\mathsf{V}_u$. Similarly, the partition $\mathsf{V}_{l}$ comprises regions on which the function $r_{l}(\cdot)$ takes the same sequence of ones and zeros according to the satisfaction of inequalities $X_ib \leq 0$ and $X_ib > 0$, $i=1,\ldots,n$. The partitions $\mathsf{V}_u$ and $\mathsf{V}_{l}$ coincide with each other, with the exception of how points at which $X_ib = 0$ are assigned within each partition. This ensures that when $X_ib = 0$, the joint implication of the inequalities \eqref{positive if inequality} and \eqref{negative if inequality} are captured by our testing procedure.

Thus the partitions $\mathsf{V}_u$ and $\mathsf{V}_{l}$ are collections of sets defined by whether or not they satisfy sequences of linear inequalities $X_ib\geq 0$ and $X_ib\leq 0$, $i=1,\ldots,n$, respectively. Equivalently, each such set comprises an intersection of $n$ hyperplanes in $\mathbb{R}^K$.  Such partitions are referred to as hyperplane arrangements in the computational geometry literature.  Algorithms for their enumeration and computation have been developed, and indeed put to good use recently in econometrics by \cite{Gu/Koenker:2018} for the purpose of computing nonparametric maximum likelihood estimators for binary response models.  The model studied by \cite{Gu/Koenker:2018} differs from the one in this paper, as it allows for random coefficients but requires independence of covariates and unobservable variables, and its focus is on computation rather than finite sample inference. Nonetheless, the same developments from the computational geometry literature on the enumeration of hyperplane arrangements can be employed for efficient computation here.

Since $n$ is finite, there are only finitely many elements of each of $\mathsf{V}_u$ and $\mathsf{V}_{l}$. With regard to the inequalities (\ref{Conditioning Set Expectation pos}) and (\ref{Conditioning Set Expectation neg}), all members $v$ of any set in $\mathsf{V}_u$ and all members $v$ of any set in $\mathsf{V}_{l}$ produce the same values of $\mathbb{E}\left[\mathbb{E}_n\left[ (2Y-1) 1\{Xb\geq 0\}g_u(X,v)\right]\mid \mathcal{X}_n \right]$ and $\mathbb{E}\left[\mathbb{E}_n\left[ (1-2Y) 1\{Xb\leq 0\}g_l(X,v)\right]\mid \mathcal{X}_n\right]$ for any $b\in \mathcal{B}$. Thus it will suffice for us to work with a single representative from each set, i.e., the full identifying power is preserved as long as $\mathcal{V}_u$ and $\mathcal{V}_l$ have a representative from each element of $\mathsf{V}_u$ and $\mathsf{V}_{l}$, respectively. This is formalized in Theorem \ref{Theorem: ID set characterization} below.

In order to assess how large the partitions $\mathsf{V}_u$ and $\mathsf{V}_{l}$ can be, it is helpful to note that in our context each separating hyperplane is of the form $\{v \in \mathbb{R}^K: X_i v = 0 \}$, inducing a \emph{homogeneously linearly separable dichotomy}, as described by \cite{Cover:65}.   From Theorem 1 of that paper, the number of such dichotomies that can be constructed from $n$ points in $\mathbb{R}^K$ is bounded from above by
\begin{equation*}
C\left(n,K\right) \equiv 2 \sum\limits_{j=0}^{K-1} \binom{n-1}{j}\text{.}
\end{equation*}
This is equivalently the upper bound on the number of elements in each of the partitions $\mathsf{V}_u$ and $\mathsf{V}_{l}$, and is attained when every subset of $K$ points from the $n$ points $X_1,\ldots,X_n$ are linearly independent.\footnote{In this case $X_1,...,X_n$ are said to be in general position.}

This upper bound on the cardinality of $\mathsf{V}_u$ and $\mathsf{V}_{l}$, and thus on the number of representative points in the sets $\mathcal{V}_u$ and $\mathcal{V}_l$, can be quite large.  Fortunately their computation can be carried out by making use of the aforementioned results from computational geometry on computing an exhaustive set of hyperplane arrangements, also referred to as the vertex enumeration problem.  Examples of available algorithms include \cite{avis/fukuda:1996}, \cite{sleumer:1998}, and \cite{Rada/Cerny:2018}, as well as a novel computational method proposed by \cite{Gu/Koenker:2018} building on \cite{Rada/Cerny:2018}.\footnote{To investigate the performance of our inference method, Monte Carlo experiments are presented in Section \ref{Section: Monte Carlos}. Like most of the prior literature, we use designs with two covariates. This constitutes the simplest setting in which to investigate the performance of our inference method, and affords computational tractability. In such cases the use of sophisticated vertex enumeration algorithms is unnecessary, for reasons explained in Section \ref{Section: Two Covariates}. We thus do not investigate the use of these algorithms here, but we note their availability for higher dimensional settings.}

With partitions $\mathsf{V}_u$ and $\mathsf{V}_{l}$ now defined, the following theorem establishes a representation of the finite sample identified set $\mathcal{B}_{n}^{\ast }$ given in Theorem \ref{Theorem: Finite Sample Identified Set} that takes the form of a finite collection of inequalities of the form (\ref{Conditioning Set Expectation pos}) and (\ref{Conditioning Set Expectation neg}).

\begin{theorem}\label{Theorem: ID set characterization} 
Let Assumption \ref{Assumption: Model and Sampling Process} hold and let $b\in \mathcal{B}$. 
If $b\in\mathcal{B}_{n}^{\ast }$ then 
\begin{equation}
\forall v\in \mathcal{V}_u:\mathbb{E}\left[\ \mathbb{E}_n\left[(2Y-1) 1\{Xb\geq 0\}1\{Xv<0\}\right]\mid \mathcal{X}_n\right] \geq 0\text{,}  \label{UCon Inequality Theorem Pos}
\end{equation}
and
\begin{equation}
\forall v\in \mathcal{V}_l:\mathbb{E}\left[\ \mathbb{E}_n\left[(1-2Y) 1\{Xb\leq 0\}1\{Xv>0\}\right]\mid \mathcal{X}_n\right] \geq 0.\label{UCon Inequality Theorem Neg}
\end{equation}
Moreover, if $\mathcal{V}_u$ and $\mathcal{V}_l$ have at least one element from each member of $\mathsf{V}_u$  and $\mathsf{V}_{l}$, respectively, then \eqref{UCon Inequality Theorem Pos} and \eqref{UCon Inequality Theorem Neg} imply that $b\in \mathcal{B}_{n}^{\ast }$, so that 
$$
\mathcal{B}_{n}^{\ast } = \{b \in \mathcal{B}: \eqref{UCon Inequality Theorem Pos} \text{ and }\eqref{UCon Inequality Theorem Neg} \text{ hold}\}.
$$
\end{theorem}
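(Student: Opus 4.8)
The plan is to establish the two implications separately and then read off the set equality. The first implication is a direct aggregation of the pointwise characterization. Given $b\in\mathcal{B}_n^\ast$, Theorem~\ref{Theorem: Finite Sample Identified Set} gives $\mathbb{E}[(2Y_i-1)1\{X_ib\geq0\}\mid\mathcal{X}_n]\geq0$ and $\mathbb{E}[(2Y_i-1)1\{X_ib\leq0\}\mid\mathcal{X}_n]\leq0$ for every $i$. Since $1\{X_iv<0\}$ and $1\{X_iv>0\}$ are nonnegative and $\mathcal{X}_n$-measurable, I would bring them inside the conditional expectation and observe that weighting an already-signed term by a nonnegative factor preserves its sign; averaging over $i$ with $\mathbb{E}_n$ then yields \eqref{UCon Inequality Theorem Pos} and \eqref{UCon Inequality Theorem Neg}. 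This step holds for every $v$ and so imposes no restriction on $\mathcal{V}_u$ or $\mathcal{V}_l$.

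The work is in the converse. Writing $e_i\equiv\mathbb{E}[2Y_i-1\mid\mathcal{X}_n]$, the left-hand sides of \eqref{UCon Inequality Theorem Pos} and \eqref{UCon Inequality Theorem Neg} are $n^{-1}\sum_i e_i1\{X_ib\geq0\}1\{X_iv<0\}$ and $-n^{-1}\sum_i e_i1\{X_ib\leq0\}1\{X_iv>0\}$, and by Theorem~\ref{Theorem: Finite Sample Identified Set} it suffices to deduce from the hypothesized inequalities that $X_ib\geq0\Rightarrow e_i\geq0$ and $X_ib\leq0\Rightarrow e_i\leq0$ for each $i$. I would argue by contradiction: suppose $X_{i_0}b\geq0$ yet $e_{i_0}<0$. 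The key idea is to use the true parameter's own sign pattern as the instrument. Because $\mathcal{V}_u$ contains a representative of every cell of $\mathsf{V}_u$ and $\beta\in\mathcal{B}$, it contains some $v^\ast$ lying in the same cell as $\beta$, so that $1\{X_iv^\ast<0\}=1\{X_i\beta<0\}$ for all $i$; evaluating \eqref{UCon Inequality Theorem Pos} at $v^\ast$ turns its left-hand side into $n^{-1}\sum_i e_i1\{X_ib\geq0\}1\{X_i\beta<0\}$.

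I would then read off signs from Lemma~\ref{Lemma:Conditional Moment Inequalities} applied to the true $\beta$. By \eqref{negative if inequality}, any $i$ with $X_i\beta<0$ has $e_i\leq0$, so every summand above is nonpositive; and by \eqref{positive if inequality}, $e_{i_0}<0$ rules out $X_{i_0}\beta\geq0$, so $X_{i_0}\beta<0$ and the $i_0$ term equals $e_{i_0}<0$. Hence the sum is strictly negative, contradicting \eqref{UCon Inequality Theorem Pos}. The negative inequalities are handled symmetrically, using a representative of $\beta$'s cell in $\mathsf{V}_l$ together with \eqref{positive if inequality} and \eqref{UCon Inequality Theorem Neg}; the two arguments jointly pin $e_i=0$ when $X_ib=0$. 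With both pointwise inequalities in hand, Theorem~\ref{Theorem: Finite Sample Identified Set} gives $b\in\mathcal{B}_n^\ast$, and combined with the first implication this is the asserted set equality.

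I expect the main obstacle to be the temptation to prove the converse by isolating individual moments, which cannot be done: a half-space instrument $1\{X_iv\lessgtr0\}$ can single out observation $i_0$ only when $X_{i_0}$ is not in the conical hull of the remaining regressors, and this can fail even when $X_1,\dots,X_n$ are in general position. The resolution, and the one genuinely nonmechanical point, is that the very conical dependence obstructing isolation also constrains the admissible sign pattern of $(e_1,\dots,e_n)$ through Lemma~\ref{Lemma:Conditional Moment Inequalities}: a violation $e_{i_0}<0$ can coexist only with $X_{i_0}\beta<0$, and then the instrument drawn from $\beta$'s own cell aggregates exclusively nonpositive terms and thereby exposes the violation. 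Recognizing that $\beta$'s cell is the right instrument, and that it is available precisely because $\mathcal{V}_u$ and $\mathcal{V}_l$ meet every cell, is the crux.
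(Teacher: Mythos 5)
Your proof is correct and takes essentially the same route as the paper's own argument: the forward direction by aggregating the pointwise inequalities of Theorem \ref{Theorem: Finite Sample Identified Set} with nonnegative instruments, and the converse by evaluating \eqref{UCon Inequality Theorem Pos} and \eqref{UCon Inequality Theorem Neg} at representatives of $\beta$'s own cells in $\mathsf{V}_u$ and $\mathsf{V}_l$, then using Lemma \ref{Lemma:Conditional Moment Inequalities} to sign every summand. The only cosmetic difference is that you phrase the converse as a contradiction at a violating index $i_0$, whereas the paper directly concludes that every term $|\mathbb{E}[2Y_i-1\mid\mathcal{X}_n]|\,1\{X_ib\geq 0,X_i\beta<0\}$ (and its mirror image) must vanish and then runs a case analysis on the sign of $\mathbb{E}[2Y_i-1\mid\mathcal{X}_n]$; the logical content is identical.
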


The moment inequalities \eqref{UCon Inequality Theorem Pos} and \eqref{UCon Inequality Theorem Neg} are conditional on $\mathcal{X}_n$ and are thus different from those employed previously in the literature.  Our representation is perhaps most closely related to that of \cite{chen/lee:2017} for the identified set in the underlying population.  Their characterization uses inequalities that condition on the values of two linear indices in $X$: $X b$ and $X \gamma$, leading to significant dimension reduction when estimating conditional moments employed for asymptotic inference. In this paper our goal is finite sample inference, made operational by conditioning on $\mathcal{X}_n$.  Our construction leading to \eqref{UCon Inequality Theorem Pos} and \eqref{UCon Inequality Theorem Neg} exploits the finite nature of $\mathcal{X}_n$.  This is done by establishing that given $\mathcal{X}_n$, one can partition the parameter space $\mathcal{B}$ into equivalence classes $\mathsf{V}_u$ and $\mathsf{V}_{l}$ whose members comprise elements that all produce the same values of the moment functions appearing in \eqref{UCon Inequality Theorem Pos} and \eqref{UCon Inequality Theorem Neg}, respectively.  Then, using our proposed instrument functions the moment inequalities \eqref{UCon Inequality Theorem Pos} and \eqref{UCon Inequality Theorem Neg} aggregate values of $2Y_i - 1$ and $1 -2Y_i$ according to whether pairs of indices $Xb$ and $Xv$ in each of the two inequalities disagree in particular directions.

In Section \ref{Section:Test Statistic and Critical Value} we provide a test statistic that combines the extent to which these moment inequalities are violated when evaluated at any conjectured parameter vector $b \in \mathbb{R}^K$.  A confidence set for $\beta$ can then be constructed by way of test inversion.  Note however that for test inversion parameter vectors $b$ and $\tilde{b}$ need only be considered if they do not lie in the same elements of both partitions $\mathsf{V}_u$ and $\mathsf{V}_{l}$.  Any two such vectors that both reside in the same partitions will produce identical values for all moment inequalities \eqref{UCon Inequality Theorem Pos} and \eqref{UCon Inequality Theorem Neg}, and consequently the same value of the test statistic proposed in Section \ref{Section:Test Statistic and Critical Value}.

\subsection{A Simple Example: Computation with Two Covariates}\label{Section: Two Covariates}

When the covariates $X_i$ have only two components the characterization of the relevant hyperplane arrangements is greatly simplified. This makes it an ideal case in which to showcase and demonstrate the inference approach.  Because the distribution of the unobservables is nonparametrically specified, a scale normalization may be imposed, e.g., restricting the first component of $\beta$ to have absolute value of one. Due to the scale normalization, the circumstance in which there are only two components of $X_i$ is the simplest non-trivial case in which to study the semiparametric binary response model of this paper. For this reason, the case in which $\mathrm{dim}\left(X_i\right) = 2$ has appeared prominently in simulation studies of maximum score type estimators, and this is also the setting on which we focus in our Monte Carlo investigations in Section \ref{Section: Monte Carlos}.

The panels of Figure \ref{Figure: Partition Schematic} illustrate the construction of partitions $\mathsf{V}_u$ and $\mathsf{V}_{l}$ for bivariate $X_i$ in a simple example in which $n=3$. Panel (a) depicts $X_1$, $X_2$, and $X_3$, into which panel (b) additionally incorporates the values of $v_i$ for which $X_i v_i = 0$. Panel (c) drops the covariate vectors $X_i$ and panel (d) uses different colors to depict the interiors of the  elements of $\mathsf{V}_u$ and $\mathsf{V}_{l}$. Note that the partitions $\mathsf{V}_u$ and $\mathsf{V}_{l}$ only differ on their boundaries. Theorem \ref{Theorem: ID set characterization} indicates that in order to preserve the full identifying power of the finite sample identified set $\mathcal{B}^{\ast}_n$, it suffices to employ moment inequalities of the form \eqref{UCon Inequality Theorem Pos} and \eqref{UCon Inequality Theorem Neg} with sets $\mathcal{V}_u$ and $\mathcal{V}_l$ each containing one element from each of these six different colored regions.

\begin{figure}[!ht]
\centering
\caption{A schematic illustration of partitions $\mathsf{V}_u$ and $\mathsf{V}_{l}$ for $n=3$.}
\label{Figure: Partition Schematic}
\subfloat[Vectors $X_1, X_2$, and $X_3$ in $\mathbb{R}^K$.]{\includegraphics[scale = 0.4]{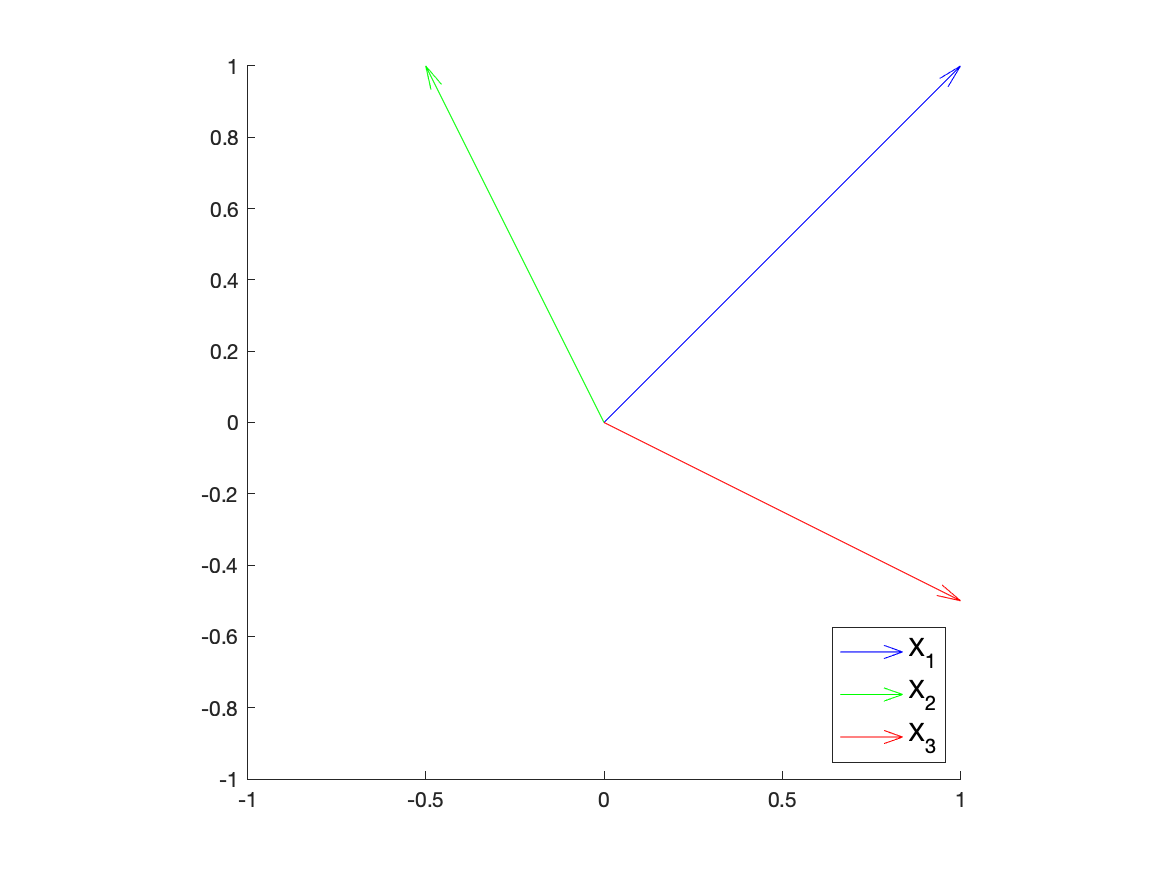}} \quad
\subfloat[$X_1, X_2, X_3$ with $v_1, v_2, v_3$ such that each $X_i v_i = 0$.]{\includegraphics[scale = 0.4]{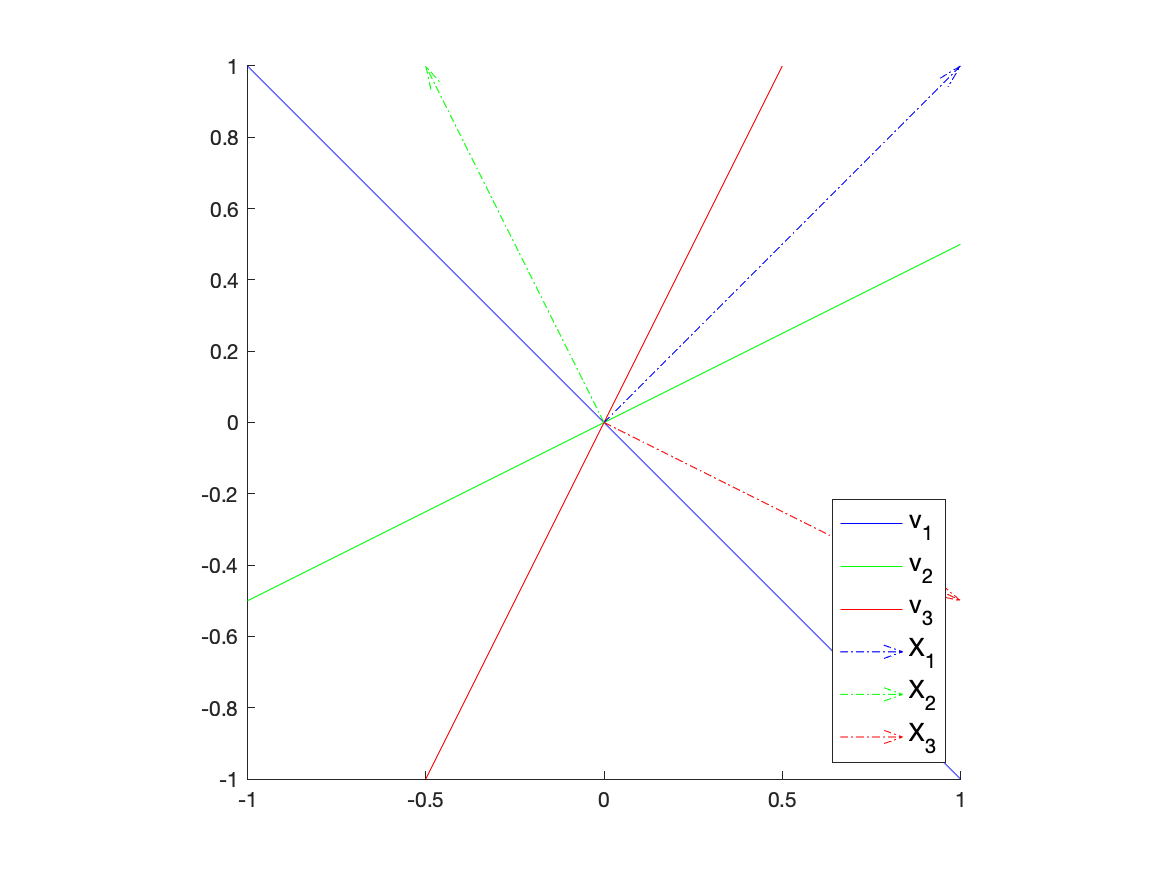}} \\
\subfloat[Lines defined by those $v_1, v_2, v_3$ for which $X_i v_i = 0$.]{\includegraphics[scale = 0.4]{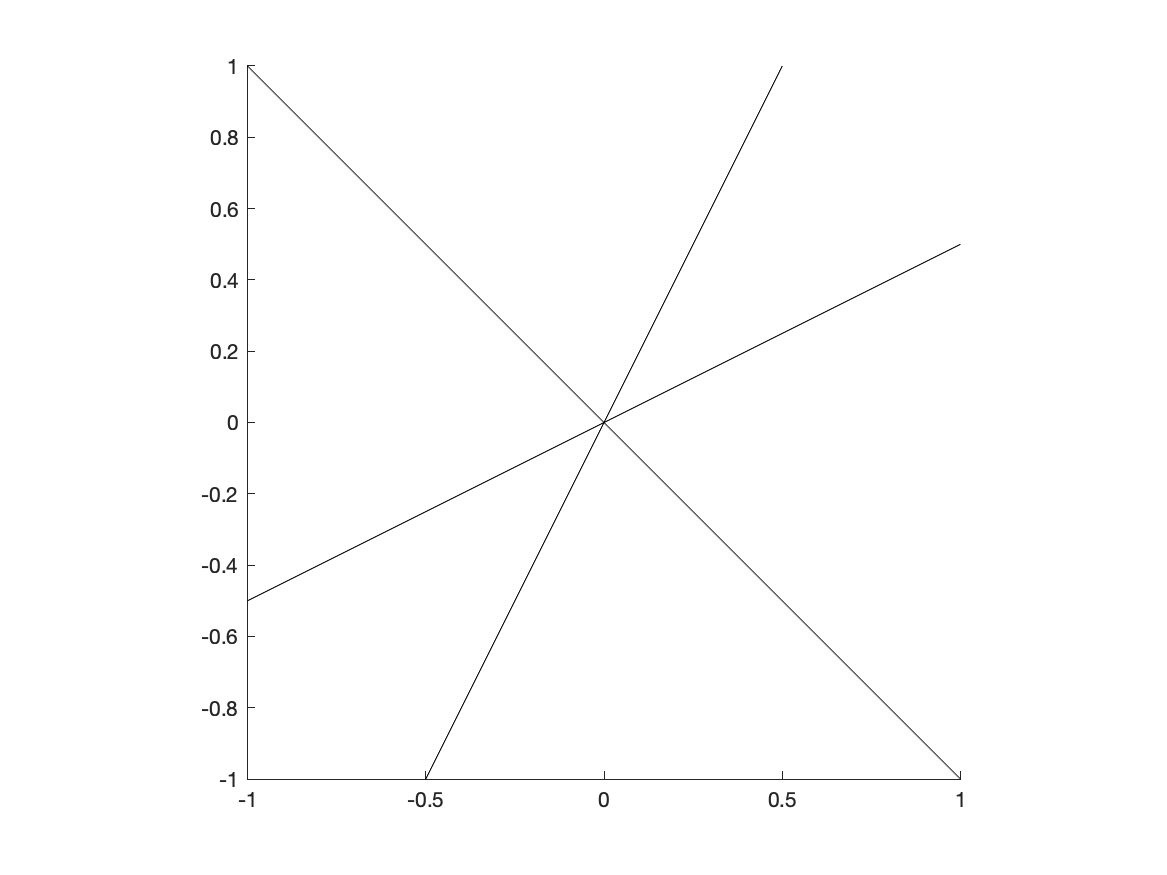}} \quad
\subfloat[Partitions of $\mathbb{R}^{K}$ defined by inequalities $X_i v_i \lessgtr 0$.]{\includegraphics[scale = 0.4]{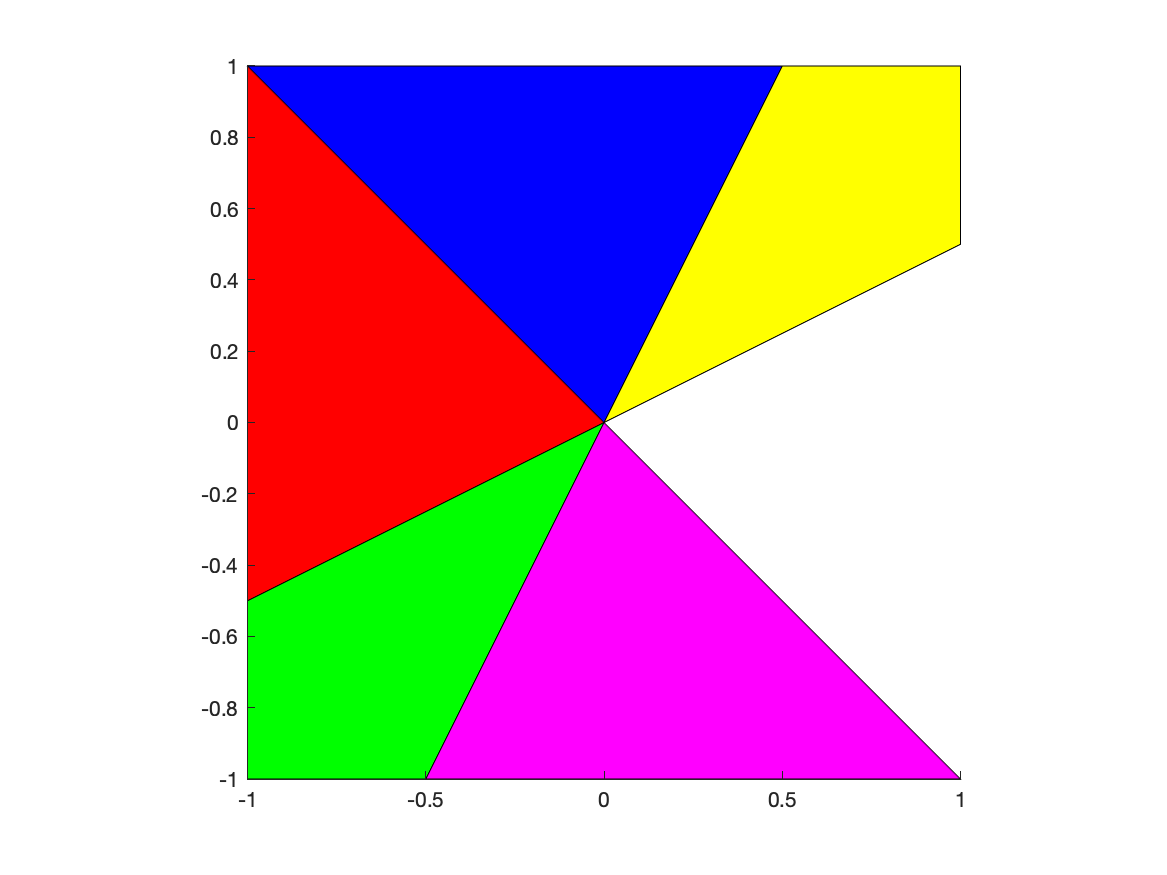}}
\end{figure}

To understand the further simplification afforded by the scale normalization, first note that for any $v$ with $v_1 \neq 0$ we can normalize the first component of $v$ such that $|v_1| = 1$ in the instrument functions that appear in \eqref{UCon Inequality Theorem Pos} and \eqref{UCon Inequality Theorem Neg} for the same reason that one can normalize the first component of the parameter vector $\beta$.\footnote{The alternative normalization that $ \left\lVert v \right\rVert = 1$ could also be used.} Specifically for any $v \in \mathbb{R}^K$ with $v_1 \neq 0$,
\begin{equation}
X_i v \gtreqqless 0 \iff X_{i1} \frac{v_1}{|v_1|} + X_{i2} \frac{v_2}{|v_1|} \gtreqqless 0\text{.}
\end{equation}
Consequently it suffices in the inequalities of \eqref{UCon Inequality Theorem Pos} and \eqref{UCon Inequality Theorem Neg} to use only values of $v$ with $|v_1| = 1$, and we impose the common scale normalization on the parameter space that $\mathcal{B}$ comprises a compact subset of $\mathbb{R}^2$ such that for all $b \in \mathcal{B}$, $|b_1|=1$. Geometrically, this means that we can select the required six representatives $v$ to form $\mathcal{V}_u$ and $\mathcal{V}_l$ from the regions illustrated in Figure \ref{Figure: Partition Schematic} panel (d) by focusing solely on values in which the first component is either $-1$ or $1$.

For the sake of actually computing such sets of representative values of $v$, possibly with much larger $n$, we may proceed making use of the normalization $|v_1| = 1$ imposed as follows. Values of $v$ that are on the boundary of satisfying the inequalities constituent in $r_u\left(v\right)$ and $r_l\left(v\right)$, namely those satisfying $X_i v = 0$, can be separated into two cases:
$$
\text{if }v_1 = +1: \quad X_i v = 0 \iff - \frac{X_{i1}}{X_{i2}} = v_2
$$
$$
\text{if }v_1 = -1: \quad Xv = 0 \iff \frac{X_{i1}}{X_{i2}} = v_2. 
$$
Thus the sequences of indicators, $r_u\left( v\right)$ and $r_{l}\left( v\right)$, depend only on where $v_2$ lies relative to the ordered sequence of values of $-Z_i$ if $v_1 = +1$ and those of $Z_i$ if $v_1 = -1$, where $Z_i \equiv \frac{X_{i1}}{X_{i2}}$.\footnote{Here it is to be understood that when $X_{i2} = 0$, $Z_i$ is defined to be $\pm \infty$ according to the sign of $X_{i1}$ if $X_{i1} \neq 0$ and $Z_i = 0$ if $X_{i1} = 0$.}

Consequently, a set of instrument values $\mathcal{V}_u$ that contains one element from each member of $\mathsf{V}_u$ can be obtained by dividing the real line into intervals according to ordered sequences of values of $-Z_i$ and $Z_i$, $i=i,...,n$, and then collecting all pairs $v = (1,v_2)$ such that $v_2$ lies in the interior of the first sequence of intervals, and all pairs $v = (-1,v_2)$ such that $v_2$ lies in the interior of the second sequence of intervals.  Specifically, let $\vartheta_1 \leq \dotsb \leq \vartheta_n$ denote the order statistics of $Z_i$, so that $\vartheta_1\equiv \min_i Z_i$ and $\vartheta_n\equiv \max_i Z_i$ and consider the following ordered sequences of intervals:
\begin{equation} \label{interval sequence upper}
\mathsf{I}_u \equiv \{(-\infty, -\vartheta_n), (-\vartheta_n, -\vartheta_{n-1}), \dotsb , (-\vartheta_2,-\vartheta_1), (-\vartheta_1,\infty ) \} \text{,}
\end{equation}
and
\begin{equation} \label{interval sequence lower}
\mathsf{I}_l \equiv \{ (-\infty, \vartheta_1), (\vartheta_1, \vartheta_2), \dotsb , (\vartheta_{n-1},\vartheta_n), (\vartheta_n,\infty ) \} \text{,}
\end{equation}
such that $\mathsf{I}_u$ and $\mathsf{I}_l$ each comprise $n+1$ open non-overlapping intervals on $\mathbb{R}$.

Recall that the inequalities in $r_u\left( v\right)$ and $r_{l}\left( v\right)$ depend only on where $v_2$ lies relative to the endpoints of these intervals. Consider a set of values of $v$, say $\mathcal{V}$, comprising a pair $(1,v_2)$ for each $v_2 \in \mathcal{I}_u$ and a pair $(-1,v_2)$ for each $v_2 \in \mathcal{I}_l$.\footnote{More precisely, the condition for $\mathcal{V}$ is (i) for every interval $I_u\in\mathcal{I}_u$, the intersection $\mathcal{V}\cap (\{1\}\times I_u)$ is nonempty, and (ii)  for every interval $I_l\in\mathcal{I}_l$, the intersection $\mathcal{V}\cap (\{-1\}\times I_l)$ is nonempty.} Then this set $\mathcal{V}$ will feature one representative from each member of the partition $\mathsf{V}_u$. Moreover, because the elements of the partition $\mathcal{V}_l$ differ from those of $\mathcal{V}_u$ only up their boundaries, such a set $\mathcal{V}$ also has one representative from each member of $\mathsf{V}_l$. Thus, using any such $\mathcal{V}$ as the sets $\mathcal{V}_u$ and $\mathcal{V}_l$ specified in \eqref{UCon Inequality Theorem Pos} will yield a moment inequality characterization of the finite sample identified set $\mathcal{B}^{\ast}_n$, i.e. with no loss of identifying power.

Such a construction for $\mathcal{V}$ is quite useful in cases where $X_i$ has two components.  A set $\mathcal{V}$ can be constructed simply by ordering the observed values of $Z_i$ ($=X_{i1}/X_{i2}$), selecting one element from each of the intervals in \eqref{interval sequence upper} and \eqref{interval sequence lower}, and pairing them with $\pm 1$ accordingly.  Computation of the test statistic proposed in the following section will require taking the supremum of a function of $v$ over the set $\mathcal{V}$, which turns out to be easy by computing $\mathcal{V}$ and taking the maximum by brute force in the case of two dimensional $X_i$. When $X_i$ has more elements, the resulting hyperplane arrangements are not as straightforward to characterize.  Nonetheless, it appears that advances from the computational geometry literature can be used to achieve computational tractability in such cases.

\section{Inference Based on Moment Inequalities} \label{Section:Test Statistic and Critical Value}
For a given value $b\in \mathcal{B}$, we consider the hypothesis test 
\begin{equation}\label{hypothesis test - two sided}
H_{0}: \beta =b \quad \text{versus} \quad H_1: \beta\neq b\text{,}
\end{equation}
on the basis of $n$ observations $\{ \left( Y_i,X_i\right) :i=1,\ldots,n\} $ following the restrictions of the semiparametric binary response model given by Assumption \ref{Assumption: Model and Sampling Process}. If one wishes to construct a confidence set for $\beta$, the set of $b$ for which $H_0$ is not rejected by a size $\alpha$ test will provide a confidence set guaranteed to contain $\beta$ with probability at least $1-\alpha$. As noted in the introduction, our method does not require point identification of $\beta $, and thus we do not assume sufficient conditions for point identification. Most notably, the existence of a continuous covariate -- much less one with full support on $\mathbb{R}$ -- is not required.

To perform inference based on moment inequalities in Theorem \ref{Theorem: ID set characterization}, we incorporate sample analogs of the moments appearing in (\ref{UCon Inequality Theorem Pos}) and (\ref{UCon Inequality Theorem Neg}), which are 
\begin{eqnarray*} 
\hat{m}_u(b,v)&\equiv&\mathbb{E}_n\left[ (2Y-1)1\{Xb\geq 0, Xv<0\} \right] \text{,\quad }v\in \mathcal{V}_{u}\text{,} \\
\hat{m}_l(b,v)&\equiv&\mathbb{E}_n\left[ (1-2Y)1\{Xb\leq 0, Xv>0\} \right]\text{,\quad }v\in \mathcal{V}_{l}\text{,}
\end{eqnarray*}
\noindent into our test statistic
\begin{equation}\label{Test statistic}
T_n\left(b\right) \equiv \max \{
0,\widehat{T}_{\mathit{u}}\left( b,\mathcal{V}_\mathit{u}\right),\widehat{T}_{\mathit{l}}\left( b,\mathcal{V}_\mathit{l}\right) \}\text{,}
\end{equation} where
\begin{eqnarray} \label{stat defs line 1}
\widehat{\sigma}_\mathit{u}^{2}(b,v) &\equiv& \mathbb{E}_n\left[ 1\{Xb \geq 0\ > Xv\} \right] - \hat{m}_{\mathit{u}}(b,v)^2\text{,} \\ \label{stat defs line 2}
\widehat{\sigma}_\mathit{l}^{2}(b,v) &\equiv& \mathbb{E}_n\left[ 1\{Xb \leq 0\ < Xv\} \right] - \hat{m}_{\mathit{l}}(b,v)^2\text{,} \\ \label{stat defs line 3}
\widehat{T}_c(b,\mathcal{V}) &\equiv& \sup_{v \in \mathcal{V}} \sqrt{n} \frac{-\hat{m}_{c}\left( b,v\right)}{\max\{\widehat{\sigma}_c(b,v), \epsilon \}}, \quad c \in \{u,l\}\text{,}
\end{eqnarray}
and $\epsilon$ is an arbitrarily small positive number taken to ensure a non-zero denominator.\footnote{In cases where either $\widehat{\sigma}_\mathit{u}^{2}(b,v)$ or $\widehat{\sigma}_\mathit{u}^{2}(b,v)$ is zero, the ratio in the definition of $\widehat{T}_c(b,\mathcal{V})$ can simply be set to $\pm \infty$ according to the sign of the numerator, or zero if the numerator is zero. The ratio that results from the use of arbitrarily small $\epsilon$ here simply serves as a placeholder for $\pm \infty$ in computations without any substantive effect.}

Instead of deriving the finite sample distribution of $T_{n}\left( b\right)$ under $H_0$, which is unknown, we construct a random variable $T^{\ast}_n\left(b\right)$ which has a known finite sample distribution given $\mathcal{X}_n$ and which satisfies
\begin{equation}  \label{eq:relat_barT_T}
T_n(b)\leq T^{\ast}_n\left(b\right)\mbox{  under }H_0:  \beta=b\text{.}
\end{equation}
To this purpose define $Y^{\ast}_1,\ldots,Y^{\ast}_n$ by 
$$
Y^{\ast}_i=1\{U_i\geq 0\}, \quad i=1,\ldots,n\text{,}
$$
and define $T^{\ast}_n\left( b\right)\equiv \max \{ 0,T^{\ast}_u\left( b, \mathcal{V}_u\right),T^{\ast}_l\left(b, \mathcal{V}_l\right) \}$ analogously to \eqref{stat defs line 1} -- \eqref{stat defs line 2} but with
\begin{eqnarray*} 
\hat{m}_u^\ast(b,v)&\equiv&\mathbb{E}_n\left[ (2Y^{\ast}-1)1\{Xb\geq 0, Xv<0\} \right]\\
\hat{m}_l^\ast(b,v)&\equiv&\mathbb{E}_n\left[ (1-2Y^{\ast})1\{Xb\leq 0, Xv>0\} \right],
\end{eqnarray*}
replacing $\hat{m}_u(b,v)$ and $\hat{m}_l(b,v)$, respectively. The random variable $T^{\ast}_n\left( b\right)$ itself is not observed because $Y_1^{\ast},\ldots,Y_n^{\ast}$ are not observed, but the finite sample distribution of $T^{\ast}_n\left( b\right)$ given $\mathcal{X}_n$ is known since $(2Y^{\ast}_1-1,\ldots,2Y^{\ast}_n-1)$ are independent Rademacher random variables conditional on $\mathcal{X}_n$.

Thus, for a given level $\alpha\in(0,1)$, the critical value used for our test is the conditional $1-\alpha$ quantile of $T^{\ast}_n\left( b\right)$ given $\mathcal{X}_n$, namely
$$
q_{1-\alpha} \equiv \inf\{c \in \mathbb{R}: \mathbb{P}\left(T^{\ast}_n\left( b\right)\leq c\mid \mathcal{X}_n\right)\geq 1-\alpha\}\text{.}
$$
This critical value can be computed up to arbitrary accuracy by drawing a large number of simulations, each of which comprises a sequence of $n$ independent Rademacher random variables.

The relationship between $T_n(b)$ and $T^{\ast}_n\left(b\right)$ in \eqref{eq:relat_barT_T} implies Theorem \ref{Theorem: size control}, establishing finite sample size control of the proposed test.  As is the case with all formal mathematical results stated in the paper, the proofs of inequality \eqref{eq:relat_barT_T} and Theorem \ref{Theorem: size control} are in the Appendix.

\begin{theorem}
\label{Theorem: size control} Let Assumption \ref{Assumption: Model and Sampling Process} hold. Under the null $H_0: \beta=b$, $\mathbb{P}(T_n(b)\leq q_{1-\alpha}\mid \mathcal{X}_n)\geq 1-\alpha$.
\end{theorem}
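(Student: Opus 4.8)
The plan is to treat the size-control statement as a short consequence of the pathwise domination $T_n(b)\le T^{\ast}_n(b)$ recorded in \eqref{eq:relat_barT_T}, with essentially all of the work going into that domination. Taking \eqref{eq:relat_barT_T} as given, I would argue as follows. Under $H_0:\beta=b$ the inequality $T_n(b)\le T^{\ast}_n(b)$ holds with conditional probability one given $\mathcal{X}_n$, so for any threshold $c$ the event $\{T_n(b)>c\}$ is contained in $\{T^{\ast}_n(b)>c\}$, whence $\mathbb{P}(T_n(b)>c\mid\mathcal{X}_n)\le\mathbb{P}(T^{\ast}_n(b)>c\mid\mathcal{X}_n)$. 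Evaluating at $c=q_{1-\alpha}$ and using that $q_{1-\alpha}$ is, by definition, the conditional $1-\alpha$ quantile of $T^{\ast}_n(b)$ given $\mathcal{X}_n$, so that the (right-continuous) conditional CDF satisfies $\mathbb{P}(T^{\ast}_n(b)\le q_{1-\alpha}\mid\mathcal{X}_n)\ge 1-\alpha$, i.e. $\mathbb{P}(T^{\ast}_n(b)>q_{1-\alpha}\mid\mathcal{X}_n)\le\alpha$, yields $\mathbb{P}(T_n(b)>q_{1-\alpha}\mid\mathcal{X}_n)\le\alpha$, which is the claim. This part is routine; the only point to check is that the conditional quantile is well defined, which holds because, given $\mathcal{X}_n$, $T^{\ast}_n(b)$ is a fixed function of the $n$ conditionally independent Rademacher variables $2Y^{\ast}_i-1$ (by Assumption \ref{Assumption: Model and Sampling Process}(iii) and (v)) and hence has a known, finitely supported conditional distribution.

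The substance is the proof of \eqref{eq:relat_barT_T}, which I would organize in two stages: a pathwise ordering of the moment statistics, followed by a monotonicity argument that transfers this ordering to the standardized statistics. For the first stage, note that under $H_0$ we have $Y_i=1\{X_ib+U_i\ge 0\}$ and $Y^{\ast}_i=1\{U_i\ge 0\}$. On $\{X_ib\ge 0\}$, $U_i\ge 0$ forces $X_ib+U_i\ge 0$, so $Y^{\ast}_i\le Y_i$ and $2Y_i-1\ge 2Y^{\ast}_i-1$; on $\{X_ib\le 0\}$, $X_ib+U_i\ge 0$ forces $U_i\ge -X_ib\ge 0$, so $Y_i\le Y^{\ast}_i$ and $1-2Y_i\ge 1-2Y^{\ast}_i$. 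Multiplying by the nonnegative indicators appearing in the moment functions and averaging over $i$ gives, for every $v$, $\hat m_u(b,v)\ge\hat m^{\ast}_u(b,v)$ and $\hat m_l(b,v)\ge\hat m^{\ast}_l(b,v)$; crucially, the variance-like term $\mathbb{E}_n[1\{Xb\ge 0>Xv\}]$ (and its lower counterpart) does not involve the outcomes, so it is common to the starred and unstarred objects.

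For the second stage, I would fix $\mathcal{X}_n$, $b$, and $v$ and regard the standardized ratio as a function of the single scalar $m\equiv n\,\hat m_u(b,v)$, holding the common count $k\equiv n\,\mathbb{E}_n[1\{Xb\ge 0>Xv\}]$ fixed. Writing $\widehat\sigma^2_u=(kn-m^2)/n^2$, which is nonnegative since $m^2\le k^2\le kn$, the ratio becomes the scalar map $m\mapsto -m/\max\{\sqrt{kn-m^2}/n,\ \epsilon\}$ up to the $\sqrt n$ factor, and I would verify that this map is nonincreasing in $m$: for $m\le 0$ the numerator is nonnegative and nonincreasing while the denominator is nondecreasing, and for $m\ge 0$ the numerator is nonpositive and decreasing while the denominator is nonincreasing, so in both regimes the ratio decreases, with the $\epsilon$-flooring only flattening the denominator and preserving monotonicity. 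Combining with $\hat m_c\ge\hat m^{\ast}_c$ from the first stage gives the pointwise-in-$v$ inequalities $-\hat m_c(b,v)/\max\{\widehat\sigma_c(b,v),\epsilon\}\le -\hat m^{\ast}_c(b,v)/\max\{\widehat\sigma^{\ast}_c(b,v),\epsilon\}$ for $c\in\{u,l\}$; taking the supremum over $v\in\mathcal{V}_c$ and then the maximum with $0$ yields $T_n(b)\le T^{\ast}_n(b)$.

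The hard part will be the denominator: because $\widehat\sigma_c$ itself depends on $\hat m_c$ and is truncated at $\epsilon$, the ordering $\hat m_c\ge\hat m^{\ast}_c$ does not transfer to the standardized ratios termwise without the monotonicity argument above, and one must confirm that this argument is unaffected by the $\epsilon$ placeholder (equivalently, by the $\pm\infty$ convention of the footnote) in the degenerate cases $k\in\{0,n\}$ where $\widehat\sigma_c$ can vanish. Everything else, namely the stochastic ordering of $Y_i$ and $Y^{\ast}_i$ and the final quantile comparison, is straightforward.
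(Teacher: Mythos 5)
Your proposal is correct and follows essentially the same route as the paper's proof: reduce size control to the pathwise domination $T_n(b)\le T^{\ast}_n(b)$ via the quantile comparison, establish the outcome ordering $Y_i \gtrless Y^{\ast}_i$ according to the sign of $X_i b$ under $H_0$, and then transfer the resulting moment ordering $\hat m_c \ge \hat m^{\ast}_c$ to the studentized statistics by showing the standardized ratio is a nonincreasing function of the sample moment alone (since the term $\mathbb{E}_n[1\{Xb\ge 0 > Xv\}]$ in $\widehat\sigma_c^2$ is outcome-free and hence common to both). The "hard part" you flag about the $\epsilon$-floored, moment-dependent denominator is exactly what your (and the paper's) monotonicity argument resolves, including the degenerate cases, so no gap remains.
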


Theorem \ref{Theorem: size control} establishes finite sample size control of the rejection rule $1\{ T_n(b) >q_{1-\alpha} \}$ for hypothesis test \eqref{hypothesis test - two sided}.  While it is possible that $\mathbb{P}\left(T_n(\beta)\leq q_{1-\alpha}\mid \mathcal{X}_n\right)$ strictly exceeds $1-\alpha$, the following theorem shows that a test with a smaller critical value cannot achieve size control if the critical value is a deterministic function of $\mathcal{X}_n$. It should however be noted that Theorem \ref{theorem:power} is silent with regard to critical values that are a function of both $X_1,\ldots,X_n$ and $Y_1,\ldots,Y_n$.

\begin{theorem}\label{theorem:power}
Let Assumption \ref{Assumption: Model and Sampling Process} hold. The proposed critical value is not conservative in the sense that, for every function $cv$ of  $\mathcal{X}_n$ with $cv<q_{1-\alpha}$, there is a distribution of $(U_1,\ldots,U_n)$ given $\mathcal{X}_n$ under which $\mathbb{P}(T_n(b)\leq cv\mid \mathcal{X}_n)<1-\alpha$ under the null $H_0: \beta=b$. 
\end{theorem}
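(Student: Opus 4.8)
The plan is to show that the size-control inequality \eqref{eq:relat_barT_T}, namely $T_n(b)\leq T^{\ast}_n(b)$, which is precisely what makes $q_{1-\alpha}$ a valid critical value, can be saturated. Concretely, for each cutoff $cv<q_{1-\alpha}$ I would exhibit a distribution of $(U_1,\ldots,U_n)$ given $\mathcal{X}_n$, consistent with Assumption \ref{Assumption: Model and Sampling Process} under $H_0:\beta=b$, under which $Y_i=Y^{\ast}_i$ for every $i$ with probability one, and hence $T_n(b)=T^{\ast}_n(b)$ almost surely. Once the two statistics coincide, their conditional-on-$\mathcal{X}_n$ laws are identical, and the claim collapses to the elementary fact that a cutoff strictly below the $1-\alpha$ quantile of $T^{\ast}_n(b)$ cannot deliver conditional coverage $1-\alpha$.

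For the construction I would take $U_1,\ldots,U_n$ to be mutually independent given $\mathcal{X}_n$, with each $U_i$ placing mass $1/2$ on $-M$ and mass $1/2$ on $+M$, where $M$ is any number exceeding $\max_i|X_ib|$; since $b$ is the fixed null value and we work conditionally on $\mathcal{X}_n$, this threshold is a finite constant, so such $M$ exists. This distribution satisfies Assumption \ref{Assumption: Model and Sampling Process}: $\mathbb{P}(U_i\geq0\mid\mathcal{X}_n)=1/2$ by symmetry, the indicators $1\{U_i\geq0\}$ are mutually independent by construction, and setting $Y_i=1\{X_ib+U_i\geq0\}$ realizes the binary response structure with $\beta=b$. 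The role of the magnitude restriction $M>\max_i|X_ib|$ is that the support points $\pm M$ never fall in the bounded half-open interval with endpoints $0$ and $-X_ib$; consequently $1\{X_ib+U_i\geq0\}=1\{U_i\geq0\}=Y^{\ast}_i$ for every $i$, almost surely, whatever the sign of $X_ib$ (when $X_ib=0$ the equality is immediate).

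Because $Y_i=Y^{\ast}_i$ for all $i$, every sample moment entering the statistic matches its starred counterpart: $\hat{m}_u(b,v)=\hat{m}^{\ast}_u(b,v)$ and $\hat{m}_l(b,v)=\hat{m}^{\ast}_l(b,v)$ for all $v$, whence $\widehat{\sigma}^2_u$, $\widehat{\sigma}^2_l$, $\widehat{T}_u$, and $\widehat{T}_l$ also agree with the objects defining $T^{\ast}_n(b)$ (note the variance terms $\mathbb{E}_n[1\{Xb\geq0>Xv\}]$ do not involve $Y$ at all). Hence $T_n(b)=T^{\ast}_n(b)$ almost surely, so $\mathbb{P}(T_n(b)\leq cv\mid\mathcal{X}_n)=\mathbb{P}(T^{\ast}_n(b)\leq cv\mid\mathcal{X}_n)$. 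To finish I would invoke the definition $q_{1-\alpha}=\inf\{c:\mathbb{P}(T^{\ast}_n(b)\leq c\mid\mathcal{X}_n)\geq1-\alpha\}$: if $cv<q_{1-\alpha}$, then $cv$ cannot belong to the set on the right-hand side, so $\mathbb{P}(T^{\ast}_n(b)\leq cv\mid\mathcal{X}_n)<1-\alpha$, which yields $\mathbb{P}(T_n(b)\leq cv\mid\mathcal{X}_n)<1-\alpha$ as required.

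The only genuine step is the first one: recognizing that \eqref{eq:relat_barT_T} is saturated by pushing the unobservables far from the thresholds $-X_ib$, which forces the observed outcomes to coincide with the Rademacher-type outcomes $Y^{\ast}_i$ that generate the critical value. Everything afterward — checking Assumption \ref{Assumption: Model and Sampling Process} for the two-point law, propagating $Y_i=Y^{\ast}_i$ through the moment functions, and reading off the quantile inequality — is routine. The one point I would take care to state explicitly is that $cv$, $M$, and all probabilities are taken conditional on $\mathcal{X}_n$, so that treating $cv$ and $\max_i|X_ib|$ as fixed numbers in the construction is legitimate.
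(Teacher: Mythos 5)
Your proof is correct, and it takes a cleaner route than the paper's own argument, though both rest on the same key insight: under the null, choose a distribution for $(U_1,\ldots,U_n)$ given $\mathcal{X}_n$ that pushes the unobservables away from the thresholds $-X_ib$, forcing the observed outcomes $Y_i$ to agree with the Rademacher outcomes $Y^{\ast}_i$ and thereby (nearly) saturating the inequality \eqref{eq:relat_barT_T}. The difference is in execution. The paper takes $U_i\mid\mathcal{X}_n\sim N(0,\sigma^2)$, mutually independent, with $\sigma$ large; this only delivers $\mathbb{P}(Y_i=Y^{\ast}_i\ \forall i\mid\mathcal{X}_n)\geq 1-\eta$, so the paper must first extract a positive slack $\eta$ with $\mathbb{P}(T^{\ast}_n(\beta)\leq cv\mid\mathcal{X}_n)<1-\alpha-\eta$ (which implicitly uses the fact that $T^{\ast}_n$ has finitely many atoms, being a function of $n$ Rademacher draws) and then combine this with a union-bound decomposition. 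Your two-point law at $\pm M$ with $M>\max_i|X_ib|$ instead gives $Y_i=Y^{\ast}_i$ with probability one, hence $T_n(b)=T^{\ast}_n(b)$ exactly, and the conclusion drops out of the definition of $q_{1-\alpha}$ as an infimum with no slack argument at all; your verification that the discrete law is admissible is also sound, since Assumption \ref{Assumption: Model and Sampling Process} deliberately avoids imposing continuity of $U_i$ (the paper says so explicitly when discussing part (iii)). What the paper's construction buys in exchange for the extra bookkeeping is robustness: it shows the critical value remains non-conservative even if one strengthens the model to require continuously distributed (indeed Gaussian, full-support) errors, a restriction under which your exact construction would fail, although a smoothed variant of it (mass spread uniformly on $[M,M+1]$ and $[-M-1,-M]$) would still work unless full support on $\mathbb{R}$ is demanded.
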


Theorem \ref{theorem:power} implies that the use of a more stringent critical value that is a function of $\mathcal{X}_n$ for testing using $T_n(b)$ is not possible without losing size control. 

We can establish a power result for our test as a function of a measure of the violation of moment inequalities that define the finite sample identified set.  Specifically, Hoeffding's inequality is used to establish a lower bound on finite sample power for certain violations of the inequalities \eqref{UCon Inequality Theorem Pos} and \eqref{UCon Inequality Theorem Neg} from Theorem \ref{Theorem: ID set characterization}.  The result is given in the following theorem.

\begin{theorem}\label{theorem:power_bound}
Let Assumptions \ref{Assumption: Model and Sampling Process} and \ref{Assumption: Uindendepednece} hold, and let $\rho$ be any number in $(0,1)$. If there is $v\in\mathcal{V}_u$ such that 
\begin{multline} 
\mathbb{E}\left[\ \mathbb{E}_n\left[ (2Y-1) 1\{Xb\geq 0, Xv<0\}\right]\mid \mathcal{X}_n\right]  \\
\leq 
-\frac{1}{\sqrt{n}}\left(q_{1-\alpha}\max\left\{\epsilon,\sqrt{\frac{\mathbb{E}_n[1\{Xb\geq 0, Xv<0\}]}{1+q_{1-\alpha}^2/n}}\right\}+\sqrt{2\log(1/\rho)\mathbb{E}_n[1\{Xb\geq 0,Xv<0\}]}\right),  \label{eq:distant_alt}
\end{multline}
or there is $v\in\mathcal{V}_l$ such that 
\begin{multline}
\mathbb{E}\left[\ \mathbb{E}_n\left[ (1-2Y) 1\{Xb\leq 0, Xv>0\}\right]\mid \mathcal{X}_n\right] \\ 
\leq 
-\frac{1}{\sqrt{n}}\left(q_{1-\alpha}\max\left\{\epsilon,\sqrt{\frac{\mathbb{E}_n[1\{Xb\leq 0, Xv>0\}]}{1+q_{1-\alpha}^2/n}}\right\}+\sqrt{2\log(1/\rho)\mathbb{E}_n[1\{Xb\leq 0,Xv>0\}]}\right), \label{eq:distant_alt2}
\end{multline}
then the rejection probability is at least $1-\rho$, i.e., $\mathbb{P}(T_n(b)>q_{1-\alpha}\mid \mathcal{X}_n)\geq 1-\rho$.
\end{theorem}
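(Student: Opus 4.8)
The plan is to bound the non-rejection probability $\mathbb{P}(T_n(b)\le q_{1-\alpha}\mid\mathcal{X}_n)$ and show it is at most $\rho$. I treat the case in which the displayed inequality \eqref{eq:distant_alt} holds for some $v\in\mathcal{V}_u$; the argument under \eqref{eq:distant_alt2} for $v\in\mathcal{V}_l$ is symmetric, with $\hat m_l,\widehat\sigma_l$ and the reversed index inequalities replacing $\hat m_u,\widehat\sigma_u$. Since $T_n(b)\ge\widehat T_u(b,\mathcal{V}_u)\ge\sqrt n\,(-\hat m_u(b,v))/\max\{\widehat\sigma_u(b,v),\epsilon\}$ for the fixed $v$ in question, non-rejection forces the event $E_v\equiv\{\hat m_u(b,v)\ge-\tfrac{q_{1-\alpha}}{\sqrt n}\max\{\widehat\sigma_u(b,v),\epsilon\}\}$. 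Hence $\mathbb{P}(T_n(b)\le q_{1-\alpha}\mid\mathcal{X}_n)\le\mathbb{P}(E_v\mid\mathcal{X}_n)$, and it remains to bound the latter.

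The key step is to convert $E_v$, whose defining threshold is random through $\widehat\sigma_u(b,v)$, into a \emph{deterministic} lower bound on $\hat m_u(b,v)$. The device is the identity $\widehat\sigma_u^2(b,v)=P-\hat m_u(b,v)^2$ with $P\equiv\mathbb{E}_n[1\{Xb\ge0>Xv\}]$, which is nonrandom given $\mathcal{X}_n$ because it depends only on $\mathcal{X}_n,b,v$. On $E_v$ I split into two cases. If $\widehat\sigma_u(b,v)<\epsilon$, then $E_v$ gives directly $\hat m_u(b,v)\ge-\tfrac{q_{1-\alpha}}{\sqrt n}\epsilon$. If $\widehat\sigma_u(b,v)\ge\epsilon$, then $E_v$ reads $\hat m_u(b,v)\ge-\tfrac{q_{1-\alpha}}{\sqrt n}\widehat\sigma_u(b,v)$; substituting the identity and squaring in the only nontrivial subcase $\hat m_u(b,v)<0$ yields $\hat m_u(b,v)^2(1+q_{1-\alpha}^2/n)\le(q_{1-\alpha}^2/n)P$, hence $\hat m_u(b,v)\ge-\tfrac{q_{1-\alpha}}{\sqrt n}\sqrt{P/(1+q_{1-\alpha}^2/n)}$. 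Taking the weaker of the two lower bounds shows that on $E_v$ one has $\hat m_u(b,v)\ge-\tau$ with $\tau\equiv\tfrac{q_{1-\alpha}}{\sqrt n}\max\{\epsilon,\sqrt{P/(1+q_{1-\alpha}^2/n)}\}$ --- exactly the first term inside the right-hand side of \eqref{eq:distant_alt}.

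With this deterministic reduction in hand, the tail bound comes from Hoeffding's inequality. Writing $w_i\equiv(2Y_i-1)1\{X_ib\ge0,X_iv<0\}$ so that $\hat m_u(b,v)=\mathbb{E}_n[w]$ with conditional mean $\mu\equiv\mathbb{E}[\mathbb{E}_n[w]\mid\mathcal{X}_n]$, Assumption \ref{Assumption: Uindendepednece} makes the $w_i$ mutually independent given $\mathcal{X}_n$ (through independence of the $1\{U_i\ge0\}$, hence of the $Y_i$ under $H_0$). Each $w_i$ lies in $[-1,1]$ and vanishes identically unless $X_ib\ge0>X_iv$, so only the $nP$ summands with the indicator on contribute, giving total squared range $4nP$. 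Hoeffding then yields $\mathbb{P}(\hat m_u(b,v)\ge-\tau\mid\mathcal{X}_n)\le\exp(-n(-\tau-\mu)^2/(2P))$ whenever $-\tau-\mu>0$. The hypothesis \eqref{eq:distant_alt} is precisely $\mu\le-\tau-\tfrac{1}{\sqrt n}\sqrt{2\log(1/\rho)P}$, i.e. $-\tau-\mu\ge\sqrt{2P\log(1/\rho)/n}$, which drives the exponent to at most $-\log(1/\rho)$ and the bound to at most $\rho$. (When $P=0$ the right-hand side of \eqref{eq:distant_alt} equals $-q_{1-\alpha}\epsilon/\sqrt n<0$ while $\mu=0$, so the hypothesis cannot hold and the division by $P$ is legitimate.) Chaining the inequalities gives $\mathbb{P}(T_n(b)\le q_{1-\alpha}\mid\mathcal{X}_n)\le\rho$, as required.

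I expect the main obstacle to be the second paragraph: turning the random, self-referential inequality defining $E_v$ into a clean deterministic lower bound via the variance identity $\widehat\sigma_u^2=P-\hat m_u^2$ and the quadratic solve, while correctly tracking the $\max\{\cdot,\epsilon\}$ and dispatching the sign case $\hat m_u\ge0$. Once that reduction is seen to reproduce the $\tau$ term exactly, the remaining calibration of Hoeffding's variance proxy ($4nP$) against the $\sqrt{2\log(1/\rho)P}$ term is routine bookkeeping.
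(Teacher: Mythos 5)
Your proof is correct and follows essentially the same route as the paper's: the same algebraic reduction (via the identity $\widehat\sigma_u^2(b,v)=\mathbb{E}_n[1\{Xb\ge 0>Xv\}]-\hat m_u(b,v)^2$) that replaces the random threshold $\max\{\widehat\sigma_u(b,v),\epsilon\}$ by the deterministic one $\max\{\epsilon,\sqrt{P/(1+q_{1-\alpha}^2/n)}\}$, followed by Hoeffding's inequality with the identical variance proxy $4nP$; the paper just states this reduction directly (an inequality on $\sqrt{n}\,\hat m_u$ implies rejection) rather than as your contrapositive. One small slip worth fixing: the independence of the $w_i$ given $\mathcal{X}_n$ comes from mutual independence of the $U_i$ themselves (Assumption \ref{Assumption: Uindendepednece}), not from independence of the $1\{U_i\ge 0\}$, and it holds under the true data-generating $\beta$, not ``under $H_0$''.
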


Furthermore, inversion of the conditions in Theorem \ref{theorem:power_bound} provides a power guarantee for any parameter value $b \notin \mathcal{B}_n^{\ast}$. 
\begin{corollary}\label{Corollary: Power Guarantee 1}
Let Assumptions \ref{Assumption: Model and Sampling Process} and \ref{Assumption: Uindendepednece} hold. 
For any $b \notin \mathcal{B}_n^{\ast}$, the rejection probability $\mathbb{P}(T_n(b)>q_{1-\alpha}\mid \mathcal{X}_n)$ is at least the maximum of the following two expressions: 
\begin{equation}\label{power bound upper}
\max_{v\in\mathcal{V}_u}\left(
1-
\exp\left(-\frac{1}{2}\left(\max\left\{0,\sqrt{n}\tilde{\zeta}_u(b,v)-q_{1-\alpha}\max\left\{\tilde{\epsilon}_u(b,v),(1+q_{1-\alpha}^2/n)^{-1/2}\right\}\right\}\right)^2\right)
\right)\text{,}
\end{equation}
\begin{equation}\label{power bound lower}
\max_{v\in\mathcal{V}_l}\left(
1-
\exp\left(-\frac{1}{2}\left(\max\left\{0,\sqrt{n}\tilde{\zeta}_l(b,v)-q_{1-\alpha}\max\left\{\tilde{\epsilon}_l(b,v),(1+q_{1-\alpha}^2/n)^{-1/2}\right\}\right\}\right)^2\right)
\right)\text{,}
\end{equation}
where the quantities in the above expressions are defined as 
$$
\tilde{\zeta}_u(b,v)
\equiv
\frac{-\mathbb{E}\left[\ \mathbb{E}_n\left[ (2Y-1) 1\{Xb\geq 0, Xv<0\}\right]\mid \mathcal{X}_n\right]}{\sqrt{\mathbb{E}_n[1\{Xb\geq 0,Xv<0\}]}}
$$
$$
\tilde{\zeta}_u(b,v)
\equiv
\frac{-\mathbb{E}\left[\ \mathbb{E}_n\left[ (1-2Y) 1\{Xb\leq 0, Xv>0\}\right]\mid \mathcal{X}_n\right]}{\sqrt{\mathbb{E}_n[1\{Xb\leq 0,Xv>0\}]}}
$$
$$
\tilde{\epsilon}_u(b,v)
\equiv
\frac{\epsilon}{\sqrt{\mathbb{E}_n[1\{Xb\geq 0,Xv<0\}]}}
$$
$$
\tilde{\epsilon}_l(b,v)
\equiv
\frac{\epsilon}{\sqrt{\mathbb{E}_n[1\{Xb\leq 0,Xv>0\}]}}.
$$
\end{corollary}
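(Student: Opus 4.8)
The plan is to obtain the corollary by reading Theorem \ref{theorem:power_bound} as a family of statements indexed by the free parameter $\rho$, applying it separately to each instrument value $v$, and then choosing $\rho$ to make the sufficient condition bind. Fix $b\notin\mathcal{B}_n^{\ast}$. By Theorem \ref{Theorem: ID set characterization}, at least one of the inequalities \eqref{UCon Inequality Theorem Pos}, \eqref{UCon Inequality Theorem Neg} fails, so that $\tilde{\zeta}_u(b,v)>0$ for some $v\in\mathcal{V}_u$ or $\tilde{\zeta}_l(b,v)>0$ for some $v\in\mathcal{V}_l$; a strictly positive value of $\tilde{\zeta}$ is necessary (though not sufficient) for the corresponding summand to exceed zero, so this is the regime in which the guarantee has content. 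The central observation is that, for a fixed $v$, the hypothesis \eqref{eq:distant_alt} of Theorem \ref{theorem:power_bound} is a constraint on $\rho$ alone, and the smallest $\rho$ satisfying it produces exactly the expression in \eqref{power bound upper}.

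Concretely, I would first restrict attention to those $v\in\mathcal{V}_u$ with $\mathbb{E}_n[1\{Xb\geq 0,Xv<0\}]>0$, since when this sample fraction vanishes the associated moment is identically zero, $\tilde{\zeta}_u(b,v)$ and $\tilde{\epsilon}_u(b,v)$ are undefined, and that $v$ contributes nothing. For any such $v$, dividing both sides of \eqref{eq:distant_alt} by $\sqrt{\mathbb{E}_n[1\{Xb\geq 0,Xv<0\}]}$ and multiplying by $\sqrt{n}$ turns the left-hand side into $\sqrt{n}\,\tilde{\zeta}_u(b,v)$, converts $q_{1-\alpha}\max\{\epsilon,\sqrt{\mathbb{E}_n[\cdots]/(1+q_{1-\alpha}^2/n)}\}$ into $q_{1-\alpha}\max\{\tilde{\epsilon}_u(b,v),(1+q_{1-\alpha}^2/n)^{-1/2}\}$, and leaves the final term as $\sqrt{2\log(1/\rho)}$. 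Hence \eqref{eq:distant_alt} is equivalent to
\begin{equation*}
\sqrt{2\log(1/\rho)}\ \leq\ \sqrt{n}\,\tilde{\zeta}_u(b,v)-q_{1-\alpha}\max\left\{\tilde{\epsilon}_u(b,v),(1+q_{1-\alpha}^2/n)^{-1/2}\right\}\ \equiv\ A_u(b,v).
\end{equation*}

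Next I would optimize over $\rho$. If $A_u(b,v)\leq 0$, the corresponding summand of \eqref{power bound upper} equals $1-\exp(0)=0$ and nothing need be shown for that $v$. If $A_u(b,v)>0$, then $\rho=\exp(-\tfrac12 A_u(b,v)^2)\in(0,1)$ makes the displayed inequality an equality, so Theorem \ref{theorem:power_bound} applies and gives $\mathbb{P}(T_n(b)>q_{1-\alpha}\mid\mathcal{X}_n)\geq 1-\rho=1-\exp(-\tfrac12 A_u(b,v)^2)$, which is exactly the $v$-th summand of \eqref{power bound upper}. Because the rejection probability on the left is a single number not depending on $v$, it dominates the maximum over $v\in\mathcal{V}_u$, yielding \eqref{power bound upper}; the identical argument run with \eqref{eq:distant_alt2}, $\mathcal{V}_l$, $\tilde{\zeta}_l$, and $\tilde{\epsilon}_l$ yields \eqref{power bound lower}. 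Since both bounds hold simultaneously, the rejection probability is at least their maximum.

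The routine content is the change of variables rewriting \eqref{eq:distant_alt} in terms of $\tilde{\zeta}_u$ and $\tilde{\epsilon}_u$; the only places requiring care are the bookkeeping of the two nested maxima (verifying that dividing the inner $\max\{\epsilon,\cdot\}$ by $\sqrt{\mathbb{E}_n[\cdots]}$ reproduces $\max\{\tilde{\epsilon}_u,(1+q_{1-\alpha}^2/n)^{-1/2}\}$) and the degenerate case $\mathbb{E}_n[1\{Xb\geq 0,Xv<0\}]=0$, which must be excluded before dividing. The main conceptual step is recognizing that Theorem \ref{theorem:power_bound}'s fixed-$\rho$ guarantee should be inverted per $v$ and tightened by choosing $\rho$ to make the condition bind; once this is seen, the corollary follows immediately.
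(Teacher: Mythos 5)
Your proof is correct and follows essentially the same route as the paper's: both arguments invert Theorem \ref{theorem:power_bound} by choosing $\rho$ so that the hypothesis \eqref{eq:distant_alt} (resp. \eqref{eq:distant_alt2}) binds, i.e.\ $\rho=\exp\bigl(-\tfrac12 A^2\bigr)$ where $A$ is the centered, normalized violation, the paper merely organizing this by selecting the maximizing $v$ first while you argue per $v$ and take the maximum at the end. Your explicit treatment of the degenerate case $\mathbb{E}_n[1\{Xb\geq 0,Xv<0\}]=0$ and the bookkeeping of the normalization are details the paper leaves implicit, but they do not change the substance of the argument.
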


The power bound provided by Theorem \ref{theorem:power_bound} depends on the degree to which the inequalities \eqref{UCon Inequality Theorem Pos} and \eqref{UCon Inequality Theorem Neg} that characterize $\mathcal{B}_n^{\ast}$ are violated relative to $\sqrt{\mathbb{E}_n[1\{Xb\geq 0,Xv<0\}]}$ and $\sqrt{\mathbb{E}_n[1\{Xb\leq 0,Xv>0\}]}$. However, Theorem \ref{theorem:power_bound} further implies an explicit mapping between (i) the extent to which a given parameter vector $b$ violates the inequalities that define the finite sample identified set and (ii) a lower bound on the finite sample power of our test for $\beta = b$ that does not depend on sample quantities. To see this, define
\begin{equation*}
Q(b) \equiv \max \{Q_{u}(b), Q_{l}(b)\} \text{,}
\end{equation*}
where
\begin{align*}
Q_{u}(b) &\equiv - \min\{0, \min_{v \in \mathcal{V}_u} \mathbb{E}\left[\ \mathbb{E}_n\left[ (2Y-1) 1\{Xv < 0 \leq Xb\}\right] \mid \mathcal{X}_n\right]\}\text{,} \\
Q_{l}(b) &\equiv - \min\{0, \min_{v \in \mathcal{V}_l} \mathbb{E}\left[\ \mathbb{E}_n\left[ (1-2Y) 1\{Xb\leq 0 < Xv\}\right]\mid \mathcal{X}_n\right]\}\text{.}
\end{align*}
The values of $Q_{u}(b)$ and $Q_{l}(b)$ denote the maximal violation exhibited by $b$ of the inequalities \eqref{UCon Inequality Theorem Pos} and \eqref{UCon Inequality Theorem Neg} that characterize $\mathcal{B}_n^{\ast}$ in Theorem \ref{Theorem: ID set characterization}.  Theorem \ref{theorem:power_bound} implies that for $\gamma = 1 - \rho$, our test is guaranteed to have power at least $\gamma$ whenever the measure of violation $Q(b)$ is at least $C(\gamma)$, defined by
\begin{align}
\label{C_power}C(\gamma) \equiv \frac{1}{\sqrt{n}}\left(q_{1-\alpha}\max\left\{\epsilon,(1+q_{1-\alpha}^2/n)^{-1/2}\right\}+\sqrt{-2\log(1-\gamma)}\right).
\end{align}
Inversion of this relation also provides an explicit power guarantee as a function of $Q(b)$. The following corollary to Theorem \ref{theorem:power_bound} gives the formal results.
\begin{corollary}\label{power corollary}
Let Assumptions \ref{Assumption: Model and Sampling Process} and \ref{Assumption: Uindendepednece} hold and let $\gamma \in (0,1)$.
\begin{enumerate}
\item If $Q(b) \geq C(\gamma)$, then $\mathbb{P}(T_n(b)>q_{1-\alpha}\mid \mathcal{X}_n)\geq \gamma$.
\item For any $b \notin \mathcal{B}_n^{\ast}$, the rejection probability $\mathbb{P}(T_n(b)>q_{1-\alpha}\mid \mathcal{X}_n)$ is at least 
$$
1-\exp \left(-\frac{1}{2}\left(\max\left\{0,\sqrt{n}Q(b)-q_{1-\alpha}\max\left\{\epsilon, (1+q_{1-\alpha}^2/n)^{-1/2}\right\}\right\}\right)^2 \right).
$$
\end{enumerate}
\end{corollary}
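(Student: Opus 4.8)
The plan is to derive both parts of the corollary directly from Theorem~\ref{theorem:power_bound} by a purely algebraic argument, with no additional probabilistic input. The key observation that drives everything is that each of the quantities $\mathbb{E}_n[1\{Xb\geq 0, Xv<0\}]$ and $\mathbb{E}_n[1\{Xb\leq 0, Xv>0\}]$ is a sample average of $\{0,1\}$-valued indicators and therefore lies in $[0,1]$. This bound lets me replace the sample-dependent thresholds on the right-hand sides of \eqref{eq:distant_alt} and \eqref{eq:distant_alt2} by the sample-free constant $C(\gamma)$ from \eqref{C_power}, at the cost of only weakening the inequality in the favorable direction.

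For part 1, I would fix $\gamma\in(0,1)$, set $\rho=1-\gamma$ so that $\sqrt{2\log(1/\rho)}=\sqrt{-2\log(1-\gamma)}$, and assume $Q(b)\geq C(\gamma)$. Without loss of generality suppose the maximum defining $Q(b)$ is attained by $Q_u(b)$, the lower case being identical via \eqref{eq:distant_alt2}. Since $C(\gamma)>0$, the inner minimum in the definition of $Q_u(b)$ is strictly negative and attained at some $v^\ast\in\mathcal{V}_u$, giving
\[
\mathbb{E}\left[\ \mathbb{E}_n\left[(2Y-1)1\{Xb\geq 0, Xv^\ast<0\}\right]\mid\mathcal{X}_n\right]=-Q_u(b)\leq -C(\gamma).
\]
Writing $p\equiv\mathbb{E}_n[1\{Xb\geq 0, Xv^\ast<0\}]\in[0,1]$ and using $\sqrt{p}\leq 1$, I obtain the two bounds $\sqrt{p/(1+q_{1-\alpha}^2/n)}\leq(1+q_{1-\alpha}^2/n)^{-1/2}$ and $\sqrt{2\log(1/\rho)\,p}\leq\sqrt{2\log(1/\rho)}$, which together show that the right-hand side of \eqref{eq:distant_alt} evaluated at $v^\ast$ is at least $-C(\gamma)$. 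Chaining this with the display above yields that the moment at $v^\ast$ is no larger than the right-hand side of \eqref{eq:distant_alt}, so the hypothesis of Theorem~\ref{theorem:power_bound} holds with this $\rho$ and the rejection probability is at least $1-\rho=\gamma$.

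For part 2, I would first note that for any $b\notin\mathcal{B}_n^\ast$, Theorem~\ref{Theorem: ID set characterization} forces at least one of \eqref{UCon Inequality Theorem Pos} and \eqref{UCon Inequality Theorem Neg} to fail, so $Q(b)>0$. Part 1 then delivers rejection probability at least $\gamma$ for every $\gamma$ satisfying $C(\gamma)\leq Q(b)$. Solving this for $\gamma$ via \eqref{C_power}, the condition reads $\sqrt{-2\log(1-\gamma)}\leq\sqrt{n}\,Q(b)-q_{1-\alpha}\max\{\epsilon,(1+q_{1-\alpha}^2/n)^{-1/2}\}$; whenever the right-hand side is nonnegative, squaring and rearranging gives $\gamma\leq 1-\exp(-\tfrac{1}{2}(\sqrt{n}\,Q(b)-q_{1-\alpha}\max\{\epsilon,(1+q_{1-\alpha}^2/n)^{-1/2}\})^2)$. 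Taking the supremum over feasible $\gamma$, and inserting $\max\{0,\cdot\}$ to cover the regime where the right-hand side is negative (in which case the stated bound is trivially nonnegative), produces exactly the displayed lower bound.

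I do not anticipate a genuine obstacle here, since the content is entirely a rewriting of Theorem~\ref{theorem:power_bound}. The only points demanding care are the bookkeeping of the nested $\max$ operators and the placeholder $\epsilon$, verifying that every substitution weakens the inequality in the correct direction, and confirming that the choice $\rho=1-\gamma$ aligns the logarithmic terms so that $C(\gamma)$ genuinely dominates the sample-dependent threshold.
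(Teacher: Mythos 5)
Your proof is correct and takes essentially the same approach as the paper's: part 1 uses the bound $\mathbb{E}_n[1\{\cdot\}]\leq 1$ to show that $-C(\gamma)$ lies weakly below the sample-dependent right-hand sides of \eqref{eq:distant_alt} and \eqref{eq:distant_alt2}, so that Theorem \ref{theorem:power_bound} applies with $\rho=1-\gamma$, and part 2 inverts the relation $C(\gamma)\leq Q(b)$ after disposing of the trivial regime. The only cosmetic difference is that the paper substitutes the single value of $\gamma$ achieving $Q(b)=C(\gamma)$, whereas you take the supremum over all feasible $\gamma$, which amounts to the same thing.
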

In particular, Corollary \ref{power corollary} can be used to indicate how big $Q(b)$ must be in order for Theorem \ref{theorem:power_bound} to guarantee our test has power at least $\alpha$ against a parameter value $\tilde{b}$ irrespective of sample quantities.  As we show in the next section, the likelihood ratio test of $\beta = b$ against $\beta = \tilde{b}$ can only achieve power $\alpha$ against any $\tilde{b}$.

Finally, it should be noted that the power bounds delivered by Theorem \ref{theorem:power_bound} provide power guarantees, but may not be sharp.  That is, the test may achieve higher power than the bounds guarantee.

\section{Likelihood Ratio Tests}\label{Section: likelihood ratio test}

In this section, we consider the use of likelihood ratio tests under Assumptions \ref{Assumption: Model and Sampling Process} and \ref{Assumption: Uindendepednece}.  The analysis is divided according to the type of null and alternative hypotheses considered, the key distinctions being whether each hypothesis is simple or composite.  When considering the likelihood ratio test, the relevant consideration is the set of conditional distributions for $Y_1,\ldots,Y_n$ given $\mathcal{X}_n$ allowed under each hypothesis. A point null hypothesis for $\beta$ is a composite hypothesis because it does not specify the distribution of $U_1,\ldots,U_n$ conditional on $\mathcal{X}_n$, and therefore admits many different possible conditional distributions for $Y_1,\ldots,Y_n$ even under the maintained assumptions that $\mathbb{P}\left( U_i\geq 0\mid \mathcal{X}_n\right) =1/2$ and that $\mathbb{P}\left( U_{1}\leq t_1,\ldots,U_n \leq t_n \mid \mathcal{X}_n \right) = \prod_{i=1}^n \mathbb{P}\left( U_i\leq t_i \mid \mathcal{X}_n \right)$ for any real numbers $t_1,\ldots,t_n$. 

The likelihood ratio test is most powerful as long as the alternative hypothesis is simple. However, as shown below, when the alternative hypothesis is composite, the likelihood ratio test based on the least favorable pair has only trivial power (i.e. its power is equal to its size, $\alpha$) because the least favorable null and alternative distributions coincide. Unlike the likelihood ratio test, our proposed test has power greater than its size $\alpha$ against alternatives $\tilde{b}$ that are sufficiently in violation of the moment inequalities that characterize the finite sample identified set, as can be seen from Theorem \ref{theorem:power_bound}.

Let $G_{U\mid \mathcal{X}_n}\left(\cdot \right)$ denote the conditional distribution of $U_1,\ldots,U_n$ given $\mathcal{X}_n$. Under Assumptions \ref{Assumption: Model and Sampling Process} and \ref{Assumption: Uindendepednece}, the likelihood of $(Y_1,\ldots,Y_n)=(y_1,\ldots,y_n)$ conditional on $\mathcal{X}_n$ is defined by the product form
$$
P\left(y_1,\ldots,y_n;\beta,G_{U\mid\mathcal{X}_n}\right) \equiv \prod_{i=1}^n \left( y_i \cdot G_i\left( [- X_i \beta, \infty) \right) + (1-y_i) \cdot G_i\left( (-\infty,- X_i \beta) \right) \right) \text{,}
$$
where $G_i$ denotes the marginal distribution of $U_i$ conditional on $\mathcal{X}_n$ induced by $G_{U\mid\mathcal{X}_n}$, i.e., $G_i\left(\mathcal{T}\right) \equiv G_{U\mid\mathcal{X}_n}\left( \{s \in \mathbb{R}^n: s_i \in \mathcal{T}  \}\right)$ for every measurable set $\mathcal{T} \subset \mathbb{R}$.

Below we begin by considering the simplest case, in which the null and alternative hypothesis both specify each of $\beta$ and $G_{U\mid \mathcal{X}_n}$, and hence uniquely determine the likelihood $P\left(y_1,\ldots,y_n;\beta,G_{U\mid \mathcal{X}_n}\right)$, before considering composite hypotheses.
 
\subsubsection*{Simple Null Hypotheses and Simple Alternative Hypotheses}
 
Consider the following test of two simple hypotheses.
\begin{equation}\label{hypothesis test - simple}
H_{0}: \left(\beta,G_{U\mid \mathcal{X}_n}\right) = \left(b,G\right) \quad \text{versus} \quad H_1: \left(\beta,G_{U\mid \mathcal{X}_n}\right) = \left(\tilde{b},\tilde{G}\right)\text{.}
\end{equation}
The likelihood ratio test is the test that rejects $H_0$ in favor of $H_1$ with probability $\phi(y_1,\ldots,y_n)$ defined by
$$
\phi(y_1,\ldots,y_n)=
\begin{cases}
1&\mbox{ if }P\left(y_1,\ldots,y_n;\tilde{b},\tilde{G}\right)>kP\left(y_1,\ldots,y_n;b,G\right)\\
\xi&\mbox{ if }P\left(y_1,\ldots,y_n;\tilde{b},\tilde{G}\right)=kP\left(y_1,\ldots,y_n;b,G\right)\\
0&\mbox{ otherwise}\text{,}
\end{cases}
$$
where $k$ and $\xi\in[0,1]$ are chosen such that 
\begin{equation}\label{simple test size control}
\sum_{(y_1,\ldots,y_n)}\phi(y_1,\ldots,y_n)P\left(y_1,...,y_n;b,G\right)
=\alpha\text{,}
\end{equation}
where the summation is to be understood as taken over all $(y_1,\ldots,y_n) \in \{0,1\}^n$.\footnote{\label{footnote_non-randomized}This test uses randomization in the event that $P\left(y_1,...,y_n;b',G'\right)=kP\left(y_1,...,y_n;b,G\right)$. If a non-randomized implementation of the LRT is preferred that can be done in the usual way by making the modification
\begin{equation}
\phi(y_1,\ldots,y_n)=
\begin{cases}
1&\mbox{ if }P\left(y_1,...,y_n;b',G'\right)>kP\left(y_1,...,y_n;b,G\right)\text{,}\\
0&\mbox{ otherwise}\text{,}
\end{cases}
\end{equation}
with $k$ chosen as small as possible subject to the constraint that
\begin{equation}
\sum_{(y_1,\ldots,y_n)}\phi(y_1,\ldots,y_n)P\left(y_1,...,y_n;b,G\right)
\leq \alpha \text{.}
\end{equation}
The conclusion of Theorem \ref{Theorem: likelihood ratio test most powerful with composite null} would then be that the LRT is a most powerful non-randomized test of \eqref{hypothesis test - composite v simple} subject to achieving size control. Subsequent results could be similarly modified without substantive change of conclusions.} Defining 
\begin{equation}\label{p_i definition}
p_i \equiv G_i\left( [-X_i b, \infty) \mid \mathcal{X}_n \right) \text{, }
\end{equation}
and
\begin{equation}\label{p_i tilde definition}
\tilde{p}_i \equiv \tilde{G}_i\left( [-X_i \tilde{b}, \infty) \mid \mathcal{X}_n \right) \text{,}
\end{equation}

The test $\phi(Y_1,\ldots,Y_n)$ can then be written as  
$$
\phi(y_1,\ldots,y_n)=
\begin{cases}
1&\mbox{ if }\prod_{i=1}^n\tilde{p}_i^{y_i}(1-\tilde{p}_i)^{1-y_i}>k\prod_{i=1}^n{p}_i^{y_i}(1-{p}_i)^{1-y_i}\text{,}\\
\xi&\mbox{ if }\prod_{i=1}^n\tilde{p}_i^{y_i}(1-\tilde{p}_i)^{1-y_i}=k\prod_{i=1}^n{p}_i^{y_i}(1-{p}_i)^{1-y_i}\text{,}\\
0&\mbox{ otherwise, }
\end{cases}
$$
where $k$ and $\xi\in[0,1]$ are chosen such that
\begin{equation*}
\sum_{(y_1,\ldots,y_n) \in \{0,1\}^n}\phi(y_1,\ldots,y_n)\prod_{i=1}^n{p}_i^{y_i}(1-{p}_i)^{1-y_i}=\alpha.
\end{equation*}

When the null and alternative distributions of a hypothesis test are simple, then by the Neyman-Pearson Lemma (e.g., Chapter 3.2 of \cite{Lehmann/Romano:05}), the likelihood ratio test is the uniformly most powerful test.  However, the hypothesis test of interest, \eqref{hypothesis test - two sided}, differs because neither the null or alternative specify $G_{U\mid \mathcal{X}_n}$.

\subsubsection*{Composite Null Hypothesis and Simple Alternative Hypothesis}

In a step closer to the two-sided hypothesis test \eqref{hypothesis test - two sided}, consider first the hypothesis test 
\begin{equation}\label{hypothesis test - composite v simple}
H_{0}: \beta = b \quad \text{versus} \quad H_1: \left(\beta,G_{U\mid \mathcal{X}_n}\right) = \left(\tilde{b},\tilde{G}\right)\text{,}
\end{equation}
which comprises the same null hypothesis of \eqref{hypothesis test - two sided} and the alternative hypothesis of \eqref{hypothesis test - simple}. This test features a composite null hypothesis and a simple alternative hypothesis for $\left(\beta,G_{U\mid \mathcal{X}_n}\right)$.

Using Theorem 3.8.1 of \cite{Lehmann/Romano:05}, we can reduce $H_0$ to a simple hypothesis by finding the least favorable distribution of $(Y_1,\ldots,Y_n)$ given $\mathcal{X}_n$. This entails pairing the null value $\beta = b$ with the distribution $G_{U\mid \mathcal{X}_n}$ among those satisfying the restrictions maintained in Assumptions \ref{Assumption: Model and Sampling Process} and \ref{Assumption: Uindendepednece} for which the likelihood ratio test has minimal power against $H_1$. Using the independence restriction of Assumption \ref{Assumption: Uindendepednece}, the admitted distributions of $(Y_1,\ldots,Y_n)$ given $\mathcal{X}_n$ are given by the marginal Bernoulli probabilities for each $i$, so that the hypotheses can be characterized by the implied collections of probabilities $(p_1,\ldots,p_n)$ and $(\tilde{p}_1,\ldots,\tilde{p}_n)$ coincident with \eqref{p_i definition} under the null and alternative, respectively. Thus the null hypothesis in \eqref{hypothesis test - composite v simple} can be written as 
$$
H_{0}: \left(\beta,G_{U\mid \mathcal{X}_n}\right) = \left(b,\prod_{i=1}^nG_i\right)\text{ for some }(G_1,\ldots,G_n).
$$

For each observation $i$ consider the probabilities $p_i$ and $\tilde{p}_i$. Note that under Assumptions \ref{Assumption: Model and Sampling Process} and \ref{Assumption: Uindendepednece}, $p_i - 1/2$ is nonpositive (nonnegative) if $X_i b$ is nonpositive (nonnegative), and likewise $\tilde{p}_i - 1/2$ is nonpositive (nonnegative) if $X_i \tilde{b}$ is nonpositive (nonnegative). When $X_i b$ and $\tilde{p}_i - 1/2$ have the same sign, then there exists a distribution of unobservable $U_i$ conditional on $\mathcal{X}_n$ such that $p_i$ can be made equal to $\tilde{p}_i$. For such $i$, the least favorable distribution will thus have $p_i = \tilde{p}_i$.  When instead $X_i b$ and $\tilde{p}_i - 1/2$ have the opposite sign, then in general $p_i$ cannot be equal to $\tilde{p}_i$ under the null due to the requirement in Assumption \ref{Assumption: Model and Sampling Process}(iii) that $\mathbb{P}(U_i \geq 0 \mid \mathcal{X}_n) = 1/2$. The probability $p_i$ will however be made as close as possible to $\tilde{p}_i$ while adhering to this assumption if the conditional distribution of $U_i$ allocates all mass to regions in which $| U_i | > | X_i b |$, so that $p_i =1/2$.

We now formally define $\bar{p}_i$ to denote the null probabilities $p_i = \mathbb{P}\{Y_i = 1\mid \mathcal{X}_n\}$ in keeping with the above intuition for constructing the least favorable distribution under the composite null $\beta = b$ as follows:
\begin{equation}\label{p_i_bar_definition}
\bar{p}_i=
\begin{cases}
1/2&\mbox{ if }X_ib (\tilde{p}_i - 1/2) <0 \mbox{ or }(X_ib = 0 \text{ and } \tilde{p}_i < 1/2 )\text{,}\\
\tilde{p}_i&\mbox{ otherwise}\text{.}
\end{cases}
\end{equation}
These probabilities lie within the null set in \eqref{hypothesis test - composite v simple} because $p_i = 1/2$ is possible for any value of $b$, and $p_i = \tilde{p}_i$ is possible when $\tilde{p}_i - 1/2$ and $X_ib$ are either both positive or negative, as well as when $X_ib = 0$ and $\tilde{p}_i \geq 1/2$. By the Neyman-Pearson Lemma, the likelihood ratio test for the simple null of $\bar{p}_1,\ldots,\bar{p}_n$ against $\tilde{p}_1,\ldots,\tilde{p}_n$ is most powerful. Theorem \ref{Theorem: likelihood ratio test most powerful with composite null} verifies that the likelihood ratio test for this simple hypothesis controls size under the composite null $\beta = b$, thus enabling application of Theorem 3.8.1 of \cite{Lehmann/Romano:05} and establishing that these null probabilities are least favorable.
   
To proceed consider the corresponding test $\bar{\phi}(Y_1,\ldots,Y_n)$ for \eqref{hypothesis test - composite v simple} that makes use of this configuration of implied probabilities under the null. 
\begin{equation}\label{likelihood ratio test with composite null}
\bar{\phi}(y_1,\ldots,y_n)=
\begin{cases}
1&\mbox{ if }\prod_{i=1}^n\tilde{p}_i^{y_i}(1-\tilde{p}_i)^{1-y_i}>k\prod_{i=1}^n\bar{p}_i^{y_i}(1-\bar{p}_i)^{1-y_i}\text{,}\\
\xi&\mbox{ if }\prod_{i=1}^n\tilde{p}_i^{y_i}(1-\tilde{p}_i)^{1-y_i}=k\prod_{i=1}^n\bar{p}_i^{y_i}(1-\bar{p}_i)^{1-y_i}\text{,}\\
0&\mbox{ otherwise, }
\end{cases}
\end{equation}
where $k$ and $\xi\in[0,1]$ are chosen such that 
\begin{equation}\label{size control composite alternative}
\sum_{(y_1,\ldots,y_n) \in \{0,1\}^n}\bar{\phi}(y_1,\ldots,y_n)\prod_{i=1}^n\bar{p}_i^{y_i}(1-\bar{p}_i)^{1-y_i}=\alpha.
\end{equation}
There is the following result, leveraging Theorem 3.8.1 of \cite{Lehmann/Romano:05}.

\begin{theorem}\label{Theorem: likelihood ratio test most powerful with composite null}
Let Assumptions \ref{Assumption: Model and Sampling Process} and \ref{Assumption: Uindendepednece} hold. Consider the null and alternative hypotheses stated in \eqref{hypothesis test - composite v simple} and the test $\bar{\phi}(y_1,\ldots,y_n)$ defined in \eqref{likelihood ratio test with composite null} and \eqref{size control composite alternative}. If the null hypothesis is true then the rejection probability is no greater than $\alpha$, that is, $\mathbb{E}\left[\bar{\phi}(Y_1,\ldots,Y_n)\mid \mathcal{X}_n\right]\leq\alpha$.
Moreover,
\begin{equation}\label{LFD under composite null}
P\left(y_1,\ldots,y_n;b,G_{U\mid \mathcal{X}_n}\right) = \prod_{i=1}^n\bar{p}_i^{y_i}(1-\bar{p}_i)^{1-y_i}
\end{equation}
is the least favorable distribution of $\left(Y_1,\ldots,Y_n\right)$ given $\mathcal{X}_n$ under $H_0$ against $H_1$, and the test $\bar{\phi}(Y_1,\ldots,Y_n)$ is a most powerful test of $H_0$ against $H_1$.
\end{theorem}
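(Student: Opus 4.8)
The plan is to reduce the composite null $H_0: \beta = b$ to a simple hypothesis by exhibiting a least favorable distribution inside the null set and then invoking Theorem 3.8.1 of \cite{Lehmann/Romano:05}. By construction $\bar{\phi}$ is exactly the Neyman--Pearson test of the simple null given by the product-Bernoulli distribution with success probabilities $(\bar{p}_1,\ldots,\bar{p}_n)$ against the simple alternative with probabilities $(\tilde{p}_1,\ldots,\tilde{p}_n)$, and the $\bar{p}_i$-distribution lies in the null set (as argued in the text preceding the theorem). Theorem 3.8.1 then delivers all three conclusions at once, \emph{provided} one can show that $\bar{\phi}$ has rejection probability at most $\alpha$ under \emph{every} admissible null distribution, not merely under the $\bar{p}_i$-distribution against which it is calibrated in \eqref{size control composite alternative}. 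Establishing this uniform size control is the crux of the argument.

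To prove uniform size control, I would first note that the likelihood ratio factorizes across $i$ and that every factor with $\bar{p}_i = \tilde{p}_i$ equals one. Hence $\bar{\phi}$ depends on $(y_1,\ldots,y_n)$ only through the weighted sum $W \equiv \sum_{i \in S} c_i y_i$, where $S \equiv \{i : \bar{p}_i \neq \tilde{p}_i\}$ and $c_i \equiv \log\frac{\tilde{p}_i(1-\bar{p}_i)}{\bar{p}_i(1-\tilde{p}_i)}$, and moreover $\bar{\phi}$ is a nondecreasing function of $W$. Because $\bar{p}_i = 1/2$ for every $i \in S$, the sign of $c_i$ equals the sign of $\tilde{p}_i - 1/2$, so $\bar{\phi}$ is nondecreasing in $y_i$ when $c_i > 0$ and nonincreasing when $c_i < 0$.

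The key step is then to match these signs against the constraints the median restriction places on admissible null probabilities. Under $\beta = b$ and Assumption \ref{Assumption: Model and Sampling Process}(iii), any admissible $p_i = \mathbb{P}(Y_i = 1 \mid \mathcal{X}_n)$ satisfies $p_i \geq 1/2$ when $X_i b > 0$, $p_i \leq 1/2$ when $X_i b < 0$, and $p_i = 1/2$ when $X_i b = 0$. Inspecting the definition \eqref{p_i_bar_definition}, each $i \in S$ falls into a case in which $c_i > 0$ forces $X_i b < 0$ (so the admissible $p_i$ is capped above by $1/2$), while $c_i < 0$ forces $X_i b \geq 0$ (so the admissible $p_i$ is bounded below by $1/2$). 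In either case the rejection-probability-increasing direction for $p_i$ pushes it toward $1/2 = \bar{p}_i$, and $\bar{p}_i$ is the extreme admissible value in that direction. Since for independent coordinates $\mathbb{E}_p[\bar{\phi}]$ is monotone in each $p_i$ in exactly the direction in which $\bar{\phi}$ is monotone in $y_i$, it follows that $\mathbb{E}_p[\bar{\phi}]$ is maximized over the admissible null set at $p = \bar{p}$, where it equals $\alpha$ by \eqref{size control composite alternative}. This yields $\mathbb{E}[\bar{\phi}(Y_1,\ldots,Y_n) \mid \mathcal{X}_n] \leq \alpha$ throughout $H_0$, which is the first claim.

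With uniform size control in hand, Theorem 3.8.1 of \cite{Lehmann/Romano:05} applies with the least favorable distribution taken to be the point mass on the $\bar{p}_i$-distribution: since the Neyman--Pearson test of this distribution against the alternative is level $\alpha$ for the entire composite null, that distribution is least favorable and $\bar{\phi}$ is most powerful for $H_0$ against $H_1$, giving the remaining two conclusions. I expect the case analysis of the preceding paragraph --- verifying that the sign of $c_i$ always aligns with the binding direction of the median-restriction constraint --- to be the main obstacle. By contrast, the monotonicity of $\mathbb{E}_p[\bar{\phi}]$ in each $p_i$ is the routine observation that $\partial \mathbb{E}_p[\bar{\phi}]/\partial p_i = \mathbb{E}[\bar{\phi} \mid Y_i = 1] - \mathbb{E}[\bar{\phi}\mid Y_i = 0]$ inherits the sign of the coordinatewise monotonicity, and the randomization constant $\xi$ requires no special treatment since it leaves $\bar{\phi}$ a nondecreasing function of $W$.
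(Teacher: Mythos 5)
Your proposal is correct and takes essentially the same approach as the paper's proof: both reduce the composite null to a simple one and invoke Theorem 3.8.1 of Lehmann and Romano, with the crux being uniform size control, proven by showing the rejection probability is monotone in each admissible null probability $p_i$ in the direction of the sign of $\tilde p_i - 1/2$, while the median restriction under $\beta = b$ confines each such $p_i$ to the side of $1/2$ on which $\bar p_i = 1/2$ is the rejection-probability-maximizing extreme. Your sufficient-statistic packaging via $W=\sum_{i\in S}c_i y_i$ and the explicit case analysis of the sign of $X_i b$ are just more detailed renderings of the paper's derivative computation $\partial\mathrm{RP}/\partial p_j$ and its stated constraint $(p_i-1/2)(\tilde p_i-1/2)\le 0$.
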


\subsubsection*{Composite Null Hypothesis and Composite Alternative Hypothesis}

When the researcher's goal is to test $\beta = b$ against the simple alternative hypothesis in \eqref{hypothesis test - composite v simple} that completely specifies the conditional distribution of $Y_1,\ldots,Y_n$ given $\mathcal{X}_n$, then the likelihood ratio test implemented by adopting the rejection probability specified in \eqref{p_i_bar_definition} -- \eqref{size control composite alternative} is most powerful. When instead the researcher wishes to control power against a composite alternative, such as that of
\begin{equation}\label{hypothesis test - composite composite}
H_{0}: \beta = b \quad \text{versus} \quad H_1:  \beta = \tilde{b}, 
\end{equation}
Theorem \ref{Theorem: likelihood ratio test most powerful with composite null} is silent because each conditional distribution of $Y_1,\ldots,Y_n$ allowed under the alternative hypothesis will result in a different likelihood ratio test. However, following arguments in Chapter 8.1 of \cite{Lehmann/Romano:05}, it is straightforward to construct the least favorable pair of distributions for this test. Note that for \textit{any} $\tilde{b}$ hypothesized under the alternative, distributions $\tilde{G}_i$ can be specified such that for all $i$:
\begin{equation}
\tilde{p}_i \equiv \tilde{G}_i\left( [-X_i \tilde{b}, \infty) \mid \mathcal{X}_n \right) = 1/2\text{.}
\end{equation}
In particular, this is achieved by distributions $\tilde{G}_i$ that allocate all mass to regions on which $|U_i| \geq |X_i\tilde{b}|$, while obeying the constraint $\mathbb{P}\left( U_{i}\geq 0\mid \mathcal{X}_n\right) =1/2$.\footnote{If the distribution of $U_i$ were restricted to have positive density on $\mathbb{R}$, then $G_i$ could be specified to allocate probability $1-\epsilon$ to $\{u_i: |u_i| \geq |X_i b^{\prime}| \}$ for any small $\epsilon > 0$.} Such a combination of $\left(\tilde{b},\tilde{G}\right)$ under the alternative yields $\tilde{p}_i = 1/2$ for all $i$. This conclusion holds irrespective of the hypothesized value of $\tilde{b}$ in \eqref{hypothesis test - composite v simple}.  Indeed it also holds for the value of $\beta = b$ hypothesized under the null. Correspondingly the least favorable $\{\bar{p}_i:i=1,...,n\}$ under the null given by \eqref{p_i_bar_definition} when all $\tilde{p}_i = 1/2$ is given by $\bar{p}_i = 1/2$ for all $i$. Thus from Theorem \ref{Theorem: likelihood ratio test most powerful with composite null} we have the following implications for the two-sided test of $\beta = b$ against $\beta =\tilde{b}$.

\begin{corollary}\label{corollary:trivial_LRT}
Let Assumptions \ref{Assumption: Model and Sampling Process} and \ref{Assumption: Uindendepednece} hold. The least favorable pair of $\{\bar{p}_i:i=1,...,n\}$ and $\{\tilde{p}_i:i=1,...,n\}$ for the test \eqref{hypothesis test - composite v simple} is given by $\bar{p}_i = 1/2$ for all $i$ and $\tilde{p}_i = 1/2$ for all $i$. Moreover, since the conclusion holds for any $b$ and $\tilde{b}$, this is also the least favorable pair for the two-sided hypothesis test of $\beta = b$ in \eqref{hypothesis test - two sided}.
\end{corollary}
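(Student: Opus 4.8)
The plan is to establish Corollary \ref{corollary:trivial_LRT} as a direct consequence of Theorem \ref{Theorem: likelihood ratio test most powerful with composite null} together with the observation, already developed in the surrounding text, that $\tilde p_i = 1/2$ for all $i$ is achievable under \emph{any} hypothesized $\tilde b$. First I would recall that by the construction preceding the corollary, for any value $\tilde b$ posited under the alternative one may select conditional distributions $\tilde G_i$ that allocate all mass to $\{u_i : |u_i| \geq |X_i \tilde b|\}$ while respecting $\mathbb{P}(U_i \geq 0 \mid \mathcal{X}_n) = 1/2$; this yields $\tilde p_i = \tilde G_i([-X_i\tilde b, \infty) \mid \mathcal{X}_n) = 1/2$ for every $i$. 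Crucially, this construction does not depend on the particular $\tilde b$, and in fact the same reasoning applies to the null value $\beta = b$.

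Second, I would feed these alternative probabilities into the formula \eqref{p_i_bar_definition} for the least favorable null probabilities $\bar p_i$. When $\tilde p_i = 1/2$, the quantity $\tilde p_i - 1/2$ equals zero, so neither disjunct of the first case in \eqref{p_i_bar_definition} is triggered: the strict inequality $X_i b(\tilde p_i - 1/2) < 0$ fails because the product is zero, and the condition $\tilde p_i < 1/2$ also fails. Hence the ``otherwise'' branch applies and $\bar p_i = \tilde p_i = 1/2$ for every $i$. This establishes the first assertion: the least favorable pair for the composite-versus-simple test \eqref{hypothesis test - composite v simple} with this choice of $\tilde G$ is $\bar p_i = \tilde p_i = 1/2$ for all $i$.

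Third, I would invoke Theorem \ref{Theorem: likelihood ratio test most powerful with composite null}, which guarantees that the distribution in \eqref{LFD under composite null} built from these $\bar p_i$ is genuinely least favorable under $H_0$ against this $H_1$, and that the associated test $\bar\phi$ is most powerful. Since the null and alternative Bernoulli probabilities now coincide pointwise (both equal $1/2$), the two conditional distributions of $(Y_1,\ldots,Y_n)$ given $\mathcal{X}_n$ are identical, so the most powerful test can do no better than its size $\alpha$. Finally, for the two-sided statement I would emphasize that because the equalization $\tilde p_i = \bar p_i = 1/2$ is attainable regardless of which $b$ and $\tilde b$ are under consideration, the same least favorable pair serves \emph{every} pair $(b,\tilde b)$ in the composite-versus-composite problem \eqref{hypothesis test - composite composite}, and thus for the two-sided test \eqref{hypothesis test - two sided} as well.

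The main obstacle, such as it is, lies not in any single computation but in correctly routing the argument through the least-favorable-distribution machinery: one must verify that the $\bar p_i$ delivered by \eqref{p_i_bar_definition} at $\tilde p_i = 1/2$ indeed coincide with the values that Theorem \ref{Theorem: likelihood ratio test most powerful with composite null} certifies as least favorable, and that the argument genuinely holds uniformly over all admissible $\tilde b$ (and $b$) rather than merely for a fixed pair. The substantive content is that the model's median restriction permits the unobservable to be ``spread out'' enough to wash out any informativeness of $\tilde b$, and the proof is essentially a matter of checking that the boundary case $\tilde p_i - 1/2 = 0$ in the piecewise definition of $\bar p_i$ lands in the intended branch.
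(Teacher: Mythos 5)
Your proposal is correct and follows essentially the same route as the paper: the paper's own proof simply cites Theorem \ref{Theorem: likelihood ratio test most powerful with composite null} together with the preceding textual argument, which is exactly what you reproduce — construct $\tilde{G}_i$ spreading mass on $\{u_i : |u_i| \geq |X_i\tilde{b}|\}$ so that $\tilde{p}_i = 1/2$ for every $i$ and any $\tilde{b}$, check that \eqref{p_i_bar_definition} then yields $\bar{p}_i = 1/2$ (your careful handling of the boundary case $\tilde{p}_i - 1/2 = 0$ is the right reading), and conclude via the theorem that this coinciding pair is least favorable uniformly in $(b,\tilde{b})$. Your additional observation that the resulting most powerful test has power exactly $\alpha$, hence minimal, is implicit in the paper's discussion and completes the argument cleanly.
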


This corollary is a direct implication of Theorem \ref{Theorem: likelihood ratio test most powerful with composite null}.  While it may be conceptually appealing to construct the likelihood ratio test for the two-sided test (\ref{hypothesis test - composite composite}) based on the least favorable pair, we see from \eqref{likelihood ratio test with composite null} and \eqref{size control composite alternative} that this results in a test which rejects the null hypothesis with probability $\alpha$ irrespective of the data. This is because the hypothesis $\beta=b$ always includes the distribution under which ${p}_i = 1/2$ for all $i$, for any hypothesized value of $b$. Thus the sets of feasible conditional distributions for $\left(Y_1,...,Y_n\right)$ given $\mathcal{X}_n$ compatible with each of the two hypotheses, $\beta=b$ and $\beta=\tilde{b}$ overlap. Therefore, in the language of  \cite{Chesher/Rosen:17GIV} and \cite{Kaido/Zhang:19}, even if $\tilde{b}$ lies outside the finite sample identified set, parameter values $b$ and $\tilde{b}$ are potentially observationally equivalent.

Against certain alternatives, the likelihood ratio test using the least favorable pair can achieve higher power than the inequality test using $T_n(b)$, but the reverse is also true, so that neither test dominates. Unlike the likelihood ratio test for the composite hypothesis \ref{hypothesis test - two sided}, the inequality test we propose in this paper based on $T_n(b)$ does not simply reject with probability $\alpha$ irrespective of the data.  Moreover, Theorem \ref{theorem:power_bound} and Corollaries \ref{power corollary} and \ref{corollary:trivial_LRT} provide lower power envelopes as a function of the violation of the inequalities characterizing the finite sample identified set.  In particular, these results can be used to characterize values of $\tilde{b}$ against which the inequality test achieves nontrivial power by setting $\gamma > \alpha$. For such alternatives, the inequality test achieves a higher power than the likelihood ratio test.

\section{Monte Carlo Experiments}\label{Section: Monte Carlos}

In this section, we present Monte Carlo results illustrating the relative performance of our test compared to that of a test using the smoothed maximum score estimator from \cite{horowitz1992smoothed} with a bootstrap critical value following \cite{horowitz:2002}, and a likelihood ratio test against a simple alternative hypothesis.\footnote{For the likelihood ratio test, as with our own test, we use a non-randomized test, so that the size of the test is as close as possible to $\alpha$ without exceeding it, as described in Footnote \ref{footnote_non-randomized}.} We consider examples in which there are two covariates. In each design, the first component of $\beta$ is normalized to one, and we report the results of conducting tests of the null hypothesis $\beta=(1,\theta)^{\prime}$ against the alternative hypothesis $\beta \neq (1,\theta)^{\prime}$ for $\theta$ ranging from $\theta-3$ to $\theta+3$. In the population design for the actual data generation process, the true parameter value is $\beta = (1,1)^{\prime}$.

For the likelihood ratio test, the alternative hypothesis consists of the simple hypothesis corresponding to the true data distribution. This is known in the Monte Carlo simulation, but would be unknown in practice. This likelihood ratio test is therefore the (infeasible) optimal test of the hypothesis $\beta=b$ against $\beta = (1,1)^{\prime}$ paired with $G_{U\mid \mathcal{X}_n}$, producing Bernoulli probabilities for $\tilde{p}_1,\ldots,\tilde{p}_n$ for the actual DGP employed in the Monte Carlo simulations. This is thus a test of the composite null $\beta = b$ against a point alternative as described by \eqref{hypothesis test - composite v simple}.

In the Monte Carlo experiments, we use one random draw of the $n$-tuple $(X_1,\ldots,X_n)$, and 500 independent draws of the $n$-tuple $(Y_1,\ldots,Y_n)$, where the sample size is $n=100$. In this simulation design the finite sample identified set is thus fixed across experiments.

To implement the three inference methods, we use the following tuning parameters. The variable $\epsilon$ was set to MATLAB's \textsf{eps} value of approximately $2.2 \cdot 10^{-16}$. To compute the critical value, we use 500 random draws of $n$ Rademacher random variables and 500 samples for the bootstrap procedure described in \cite{horowitz:2002}.  In all cases we considered tests with size $\alpha = 0.10$.

Following \cite[Section 3]{horowitz1992smoothed}, $X=(X_1,X_2)$ is a bivariate normal random vector with $\mathbb{E}[X_1]=0$, $\mathbb{E}[X_2]=1$, $Var(X_1)=Var(X_2)=1$ and $Cov(X_1,X_2)=0$. The distribution of unobservable $U$ in each design is as follows, with $U$ independent of $X$ in designs 1-3 and $V$ independent of $X$ in design 4.
\begin{itemize}
\item Design 1: $U$ is distributed according to the logistic distribution
with mean zero and variance one. 
\item Design 2: $U$ is uniformly distributed on $[-\sqrt{12}/2,\sqrt{12}/2]$.
\item Design 3: $U$ is distributed according to the Student's $t$ distribution
normalized to have variance one. 
\item Design 4: $U=0.25(1+2Z^2+Z^4)V$ where $Z=X_1+X_2$ and $V$ is distributed
according to the logistic distribution with mean zero and variance one.
\end{itemize}

Figure \ref{fig_MC} presents non-rejection frequencies for $\theta$ ranging from $\theta-3$ to $\theta+3$ for $\alpha = 0.10$. For the non-rejection frequencies, we conduct inference using the smoothed maximum score bootstrap implementation described in Horowitz (2002) (Horowitz-Bootstrap), our proposed method (Rosen-Ura), and the (infeasible) likelihood ratio test (LRT). For reference, the finite sample identified set (FSID) is also illustrated with height of $1-\alpha$ on the vertical axis by a dashed blue line. The qualitative comparison among the three inference methods is similar for all four designs. The non-rejection frequency for our method is maximized around the population value of $\theta=1$ and exceeds $0.90$ in a range spanning from about 0.6 to 1.7. Our method is always less powerful than the infeasible likelihood ratio test, which results from the fact that the alternative hypothesis for our method is the composite hypothesis $\beta \neq b$, while the likelihood ratio test is based on using the actual population distribution as the simple alternative hypothesis. This is what makes the likelihood ratio test considered here infeasible; in an application, the population distribution of $Y_1,\ldots,Y_n$ conditional on $\mathcal{X}_n$ is unknown.
    
Compared to inference using the bootstrap implementation of the smoothed maximum score estimator, we find an asymmetry in the simulation results.  For values smaller than the true parameter value, our method is more powerful than the smoothed maximum score bootstrap procedure, although the performance is almost the same in Design 4. For values larger than the true parameter value, our method is less powerful than the smoothed maximum score bootstrap procedure, but the smoothed maximum score bootstrap procedure is more powerful than the optimal test (the likelihood ratio test) for some values larger than the true parameter value. This is likely because the formal asymptotic theory for the smoothed maximum score estimator in \cite{horowitz1992smoothed} and its use for inference by way of the bootstrap implementation described in \cite{horowitz:2002} invoke stronger assumptions than those of Assumptions \ref{Assumption: Model and Sampling Process} and \ref{Assumption: Uindendepednece} in this paper, and these additional assumptions hold in the data generating processes used in these illustrations.  Further, we see in Designs 1 and 2 that inference using the smoothed maximum score bootstrap results in rejection frequencies in excess of $\alpha = 0.1$ for certain parameter values that lie inside the finite sample identified set.  This is not surprising because its performance is guaranteed to achieve size control asymptotically under conditions whereby the identified set is in fact a singleton.

In Appendix \ref{sec:additional_results}, we provide additional illustrations of simulation results for a different population parameter value $\theta_0=2$ and/or a larger sample size $n=250$. The results are qualitatively similar to those shown in Figure \ref{fig_MC}.

\section{Conclusion}\label{Section: Conclusion}

In this paper we have proposed an approach to conduct finite sample inference on the parameters of Manski's (1985) semiparametric binary response model, for which the maximum score estimator has been shown to be cube-root consistent with a non-normal asymptotic distribution when there is point identification. Our finite sample inference approach circumvents the need to accommodate the complicated asymptotic behavior of this point estimator.  Since our goal was finite sample inference, we considered the problem of making inference conditional on the $n$ covariate vectors observable in a finite sample.  With covariates taking on only a finite number of observed values, the parameter vector $\beta$ is not point identified.  We therefore employed moment inequality implications for $\beta$ for the sake of constructing our test statistic for inference, as the moment inequalities are valid no matter whether $\beta$ is point identified. In order to exposit what observable implications can be distilled on only the basis of exogenous variables observed in the finite sample, we defined the notion of a finite sample identified set.  We showed how to make use of the full set of observable implications conditional on the size $n$ sequence of exogenous variables in our construction of a test statistic $T_n(b)$.  Finite sample valid critical values were established, and were shown to be easily computed by making use of many simulations of size $n$ sequences of independent Rademacher variables. A finite sample power (lower) bound was also presented and the results of some Monte Carlo experiments were reported, illustrating the performance of the test.

Several interesting directions for future research are possible.  The maximum score estimator is one of several estimators in the econometrics literature that consistently estimate a model parameter that may only be identified under support conditions that can never be satisfied by an empirical distribution based on a finite sample.  Some such estimators, like the maximum score estimator, exhibit slower than $n^{-1/2}$ convergence rates.  Other such estimators, such as the maximum rank correlation estimator of \cite{Han:87} achieve the parametric rate.  In this paper we have exploited the particular structure of the semi-parametric binary response model, but there may nonetheless be potential to extend ideas in this paper to such settings to alleviate dependence on conditions not satisfied by empirical distributions that result from finite data.

Other possible avenues of investigation pertain to optimal testing. One direction could be to exploit the likelihood ratio test analysis in this paper to consider minimax testing rates under additional assumptions on the distribution of $U_i$.\footnote{This line of research was suggested by Tim Armstrong.} Minimax optimal estimation has recently been investigated in a setting in which covariates have high-dimension by \cite{mukherjee2019nonstandard} in an asymptotic framework under sufficient conditions for point identification. Investigation of minimax optimal estimators and tests could be interesting to consider when conditions for point identification are not guaranteed to hold. More generally, in future work we aim to continue to explore the interplay between partial identification and testability, and in particular the implications of not having point identification based on an underlying discrete data distribution, as one always has when using the empirical distribution obtained in a finite data set.

\newpage
\begin{figure}[ht]
\parbox{.4\textwidth}{
\raggedleft
\includegraphics[width=.45\textwidth,keepaspectratio]{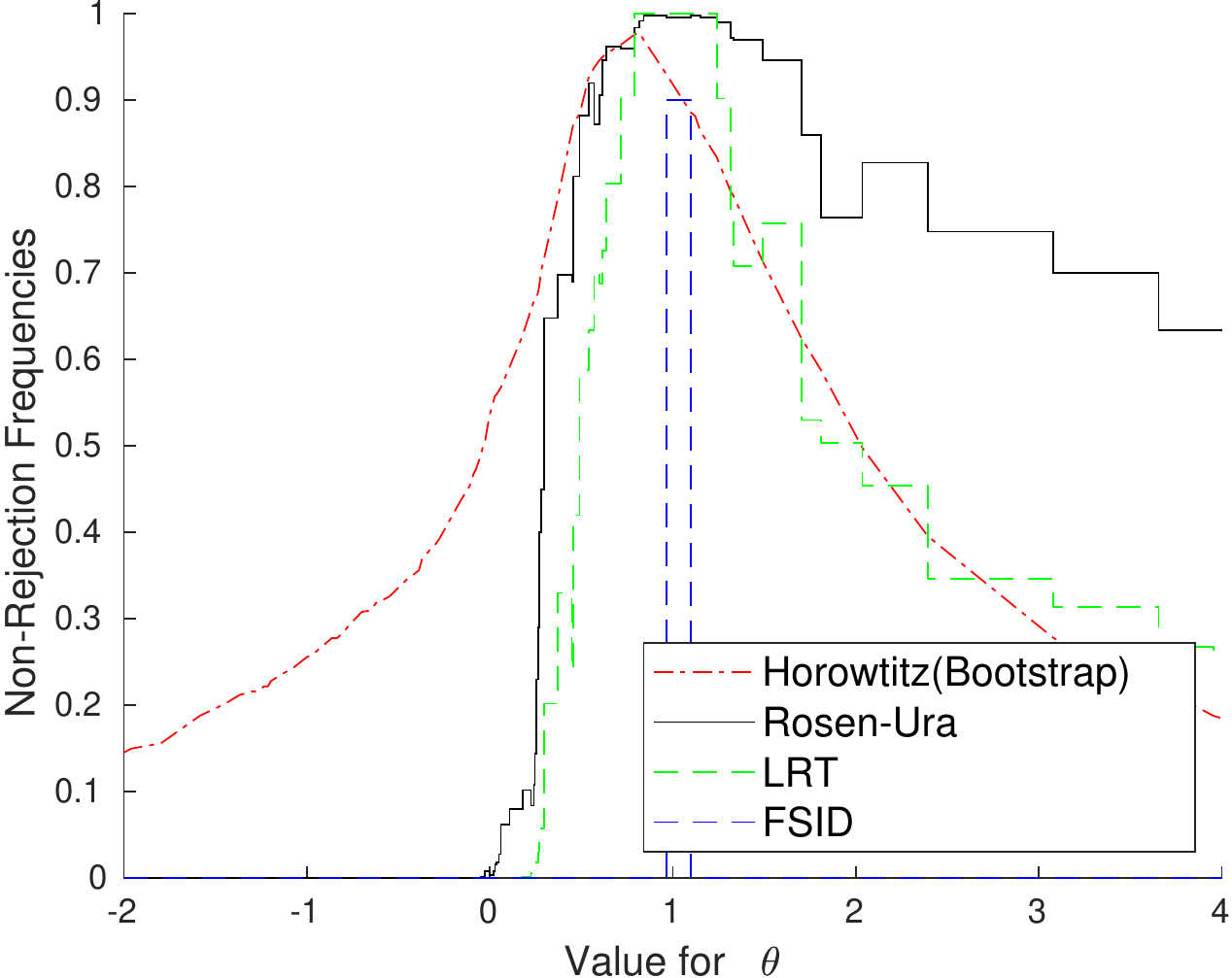}
\caption*{Figure \ref{fig_MC}.a:  Design 1.}
}
\hspace{.05\textwidth}
\parbox{.4\textwidth}{
\raggedright
\includegraphics[width=.45\textwidth,keepaspectratio]{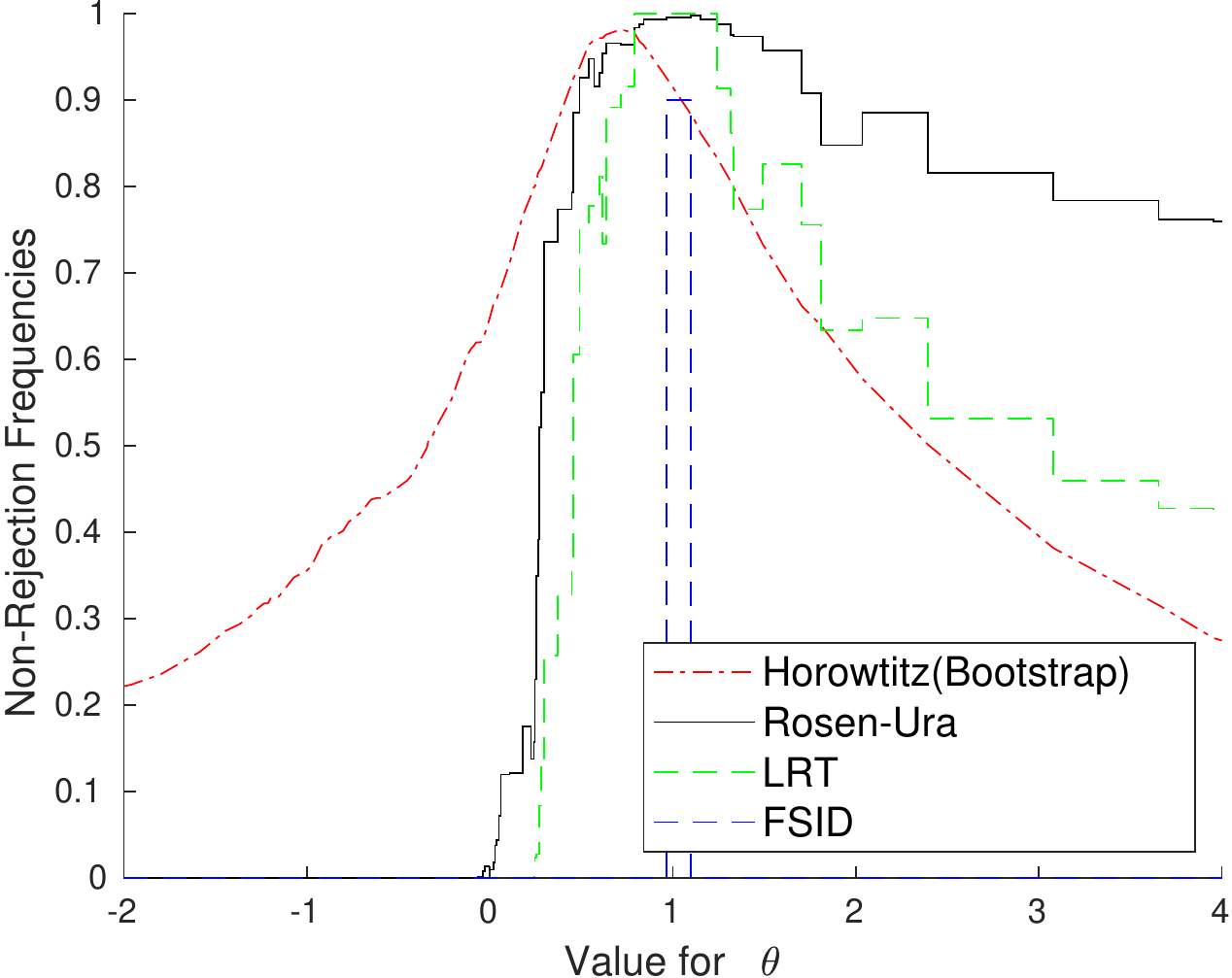}
\caption*{Figure \ref{fig_MC}.b:  Design 2.}
}
\\
\bigskip
\\
\parbox{.4\textwidth}{
\raggedleft
\includegraphics[width=.45\textwidth,keepaspectratio]{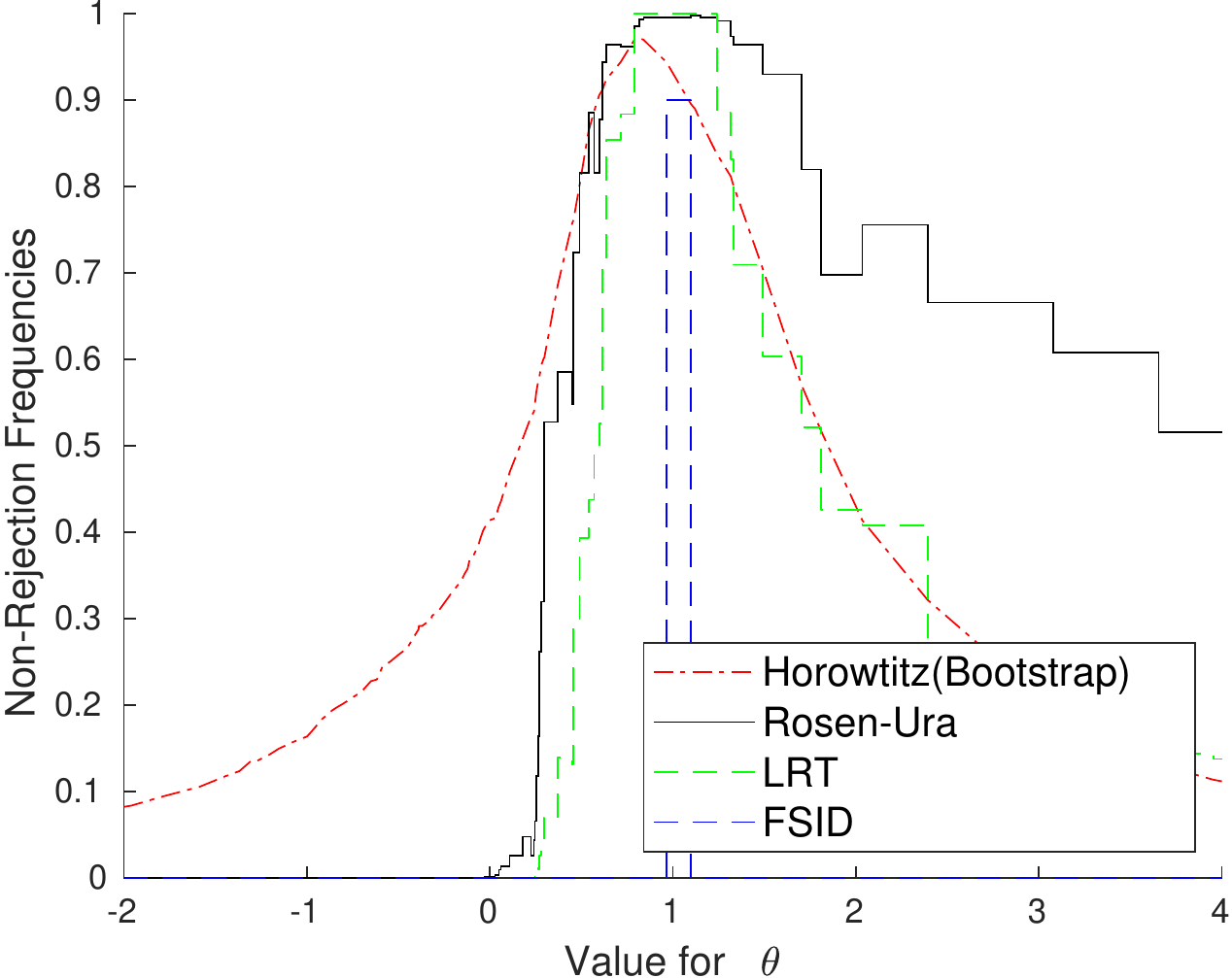}
\caption*{Figure \ref{fig_MC}.c:  Design 3.}
}
\hspace{.05\textwidth}
\parbox{.4\textwidth}{
\raggedright
\includegraphics[width=.45\textwidth,keepaspectratio]{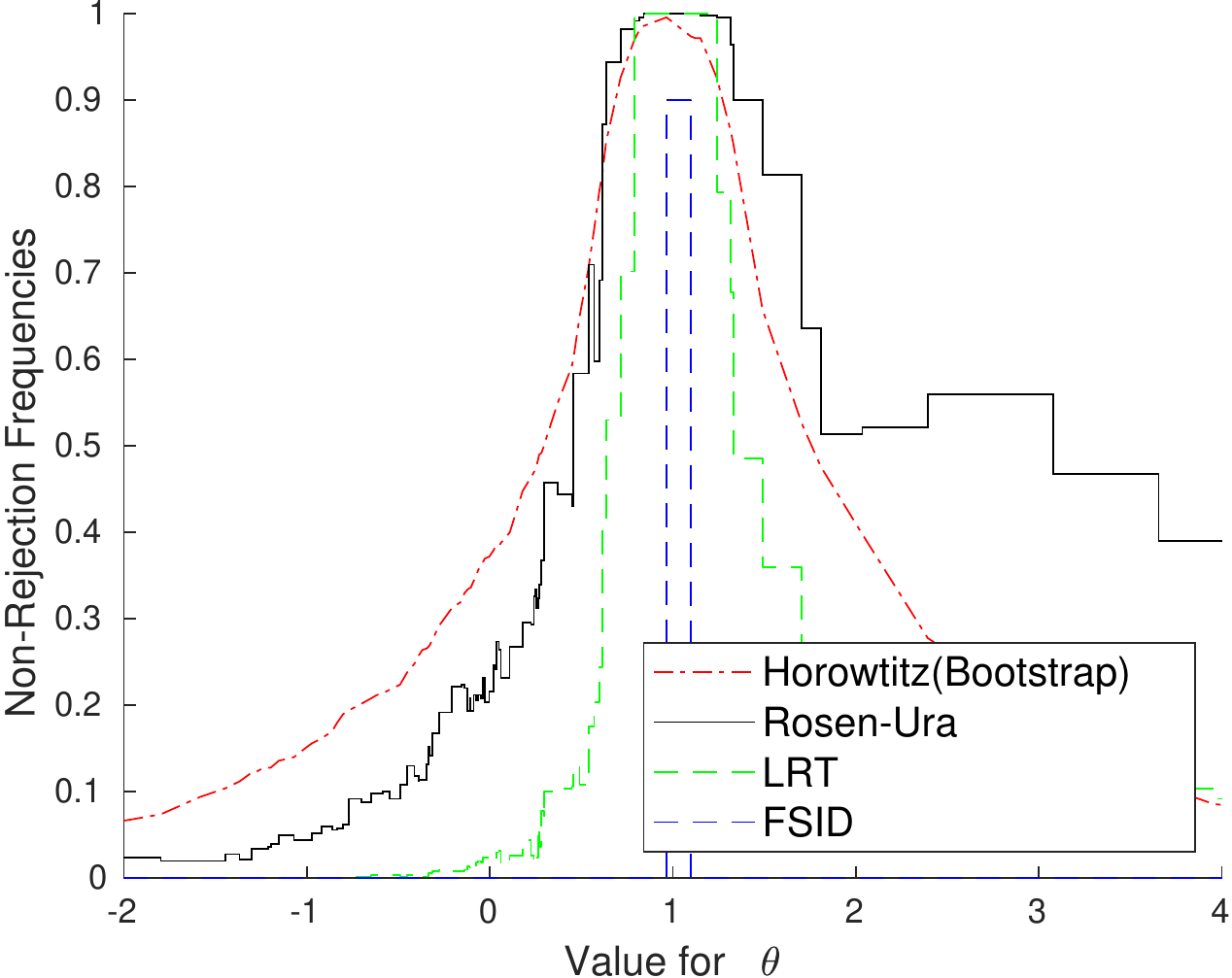}
\caption*{Figure \ref{fig_MC}.d:  Design 4.}
}
\\
\\
\caption{Non-rejection frequencies with $1-\protect\alpha=90\%$ with true $\theta_0=1$ and $n=100$. }
\label{fig_MC}
\end{figure}

\newpage
\singlespacing
\bibliographystyle{econometrica}
\bibliography{MaxScore}
\onehalfspacing
\newpage

\appendix 
\section{Proofs} 

\begin{proof}[Proof of Lemma \ref{Lemma:Conditional Moment Inequalities}]
If $X_i\beta\geq 0$, then $Y_i = 1\{ X_i\beta +U_i\geq 0\} \geq 1\{U_i\geq 0\} $ and therefore $\mathbb{E}[2Y_i-1\mid \mathcal{X}_n] \geq 2\mathbb{P}(U_i\geq 0\mid\mathcal{X}_n)-1 = 0$. If $X_i\beta\leq 0$, then $Y_i = 1\{ X_i\beta +U_i\geq 0\} \leq 1\{U_i\geq 0\} $ and therefore  $\mathbb{E}[2Y_i-1\mid \mathcal{X}_n] \leq 2\mathbb{P}(U_i\geq 0\mid\mathcal{X}_n)-1 = 0$. 
\end{proof}

\begin{proof}[Proof of Theorem \ref{Theorem: Finite Sample Identified Set}]
Directly from Lemma \ref{Lemma:Conditional Moment Inequalities}, $b\in \mathcal{B}_{n}^{\ast }$ implies 
$$
\mathbb{E}\left[ \left( 2Y_i-1\right) 1\{X_ib\geq 0\} \mid \mathcal{X}_n\right] \geq 0
\mbox{ and  }
\mathbb{E}\left[ \left( 1-2Y_i\right) 1\{X_ib\leq 0\} \mid \mathcal{X}_n\right] \geq 0.
$$ 
To demonstrate the other direction, let $b$ be any element of $\mathcal{B}$ such that $\mathbb{E}\left[ \left( 2Y_i-1\right) 1\{X_ib\geq 0\} \mid \mathcal{X}_n\right] \geq 0$ and $\mathbb{E}\left[ \left( 1-2Y_i\right) 1\{X_ib\leq 0\} \mid \mathcal{X}_n\right] \geq 0$ for every $i=1,\ldots,n$. Then
\begin{eqnarray}
\label{ineq1}X_i b \geq 0\implies\mathbb{P}(Y_i=1\mid \mathcal{X}_n) \geq 1/2\text{,}\\
\label{ineq2}X_i b \leq 0\implies\mathbb{P}(Y_i=1\mid \mathcal{X}_n) \leq 1/2\text{.}
\end{eqnarray}
To show that $b$ is in $\mathcal{B}_n^{\ast}$ as defined in Definition \ref{Def: Finite Sample Identified Set}, we now construct a collection of random variables $\{\tilde{U}_{i} :i=1,\ldots,n\} $ such that for all $i=1,\ldots,n$: (i) $\mathbb{P}(Y_i=1\{ X_i b + \tilde{U}_i \geq 0\}\mid \mathcal{X}_n)=1$, and (ii) $\mathbb{P}(\forall i, 1\{U_i \geq 0\}=1\{\tilde{U}_i \geq 0\}\mid \mathcal{X}_n) = 1$. To do so, let $\kappa_i: i=1,\ldots,n$ be $n$ positive random variables defined on $\left( \Omega ,\mathfrak{F} ,\mathbb{P}\right)$  and consider each of the cases $X_i \beta < 0$, $X_i \beta =0$, and $X_i \beta > 0$ in turn as follows.

\noindent\textbf{Case 1: $X_i \beta < 0$}.  By Lemma \ref{Lemma:Conditional Moment Inequalities},
\begin{equation}\label{C1 inequality}
\mathbb{E}[2 Y_i - 1 \mid \mathcal{X}_n] \leq 0\text{.}
\end{equation}
Let
$$
\tilde{U}_i \equiv 
\begin{cases}
\max\{-X_i b,0\} + \kappa_i&\mbox{ if }U_i \geq -X_i \beta\\
0&\mbox{ if }0 \leq U_i < -X_i \beta\\
\min\{-X_i b,0\}-\kappa_i&\mbox{ if }U_i < 0.
\end{cases}
$$
Then $1\{\tilde{U}_i \geq 0\}=1\{U_i \geq 0\}$, which verifies (ii). To verify (i), note that: 
$$
1\{ X_i b + \tilde{U}_i \geq 0\}
=1\{X_i \beta + U_i \geq 0 \}+1\{0 \leq U_i < -X_i \beta,\  X_i b\geq 0\}
=Y_i+1\{0 \leq U_i < -X_i \beta, X_i b\geq 0\}, 
$$
because 
$$
X_i b + \tilde{U}_i
=
\begin{cases}
\max\{X_i b,0\} + \kappa_i&\mbox{ if }U_i \geq -X_i \beta\\
X_i b&\mbox{ if }0 \leq U_i < -X_i \beta\\
\min\{X_i b,0\}-\kappa_i&\mbox{ if }U_i < 0.
\end{cases}
$$
Therefore, 
\begin{eqnarray*}
\mathbb{P}(Y_i=1\{ X_i b + \tilde{U}_i \geq 0\}\mid \mathcal{X}_n)
&=&
\mathbb{P}(1\{0 \leq U_i < -X_i \beta, X_i b\geq 0\}=0\mid \mathcal{X}_n)\\ 
&\geq&
\mathbb{P}(1\{0 \leq U_i < -X_i \beta\}=0\mid X_i b\geq 0,\mathcal{X}_n)\\
&=&
\mathbb{P}(U_i\geq -X_i \beta\mid X_i b\geq 0,\mathcal{X}_n)-\mathbb{P}(U_i\geq 0\mid X_i b\geq 0,\mathcal{X}_n)\\
&=&
\mathbb{P}(Y_i=1\mid X_i b\geq 0,\mathcal{X}_n)-1/2.
\end{eqnarray*}
Since \eqref{ineq1} and \eqref{C1 inequality} imply that $\mathbb{P}(Y_i=1\mid X_i b\geq 0,\mathcal{X}_n) = 1/2$, we have $\mathbb{P}(Y_i=1\{ X_i b + \tilde{U}_i \geq 0\}\mid \mathcal{X}_n)=1$, which verifies (i).


\noindent\textbf{Case 2: $X_i \beta = 0$}. Let
\begin{equation} \label{U tilde case 2}
\tilde{U}_i \equiv
\begin{cases}
\max\{-X_i b,0\} + \kappa_i&\mbox{ if }U_i \geq 0\\
\min\{-X_i b,0\} -\kappa_i&\mbox{ if }U_i < 0.
\end{cases}
\end{equation}
Then $1\{\tilde{U}_i \geq 0\}=1\{U_i \geq 0\}$, which verifies (ii). It further follows from \eqref{U tilde case 2} that
$$
X_i b + \tilde{U}_i = 
\begin{cases}
\max\{0, X_i b\} + \kappa_i &\mbox{ if }U_i \geq 0\\
\min\{0, X_i b\} - \kappa_i&\mbox{ if }U_i < 0.
\end{cases}
$$
Consequently since $X_i \beta = 0$, $X_i b+\tilde{U}_i \geq 0$ if and only if $X_i \beta+U_i \geq 0$, verifying (i).

\noindent\textbf{Case 3: $X_i \beta > 0$}. The proof is similar to Case 1 with 
$$
\tilde{U}_i \equiv 
\begin{cases}
\max\{-X_i b,0\} + \kappa_i&\mbox{ if }U_i \geq 0\\
-X_i b&\mbox{ if }-X_i \beta \leq U_i < 0\\
\min\{-X_i b,0\}-\kappa_i&\mbox{ if }U_i < -X_i b.
\end{cases}
$$
\end{proof}

\begin{proof}[Proof of Theorem \ref{Theorem: ID set characterization}]
That $b\in \mathcal{B}_{n}^{\ast }$ implies (\ref{UCon Inequality Theorem Pos}) and (\ref{UCon Inequality Theorem Neg}) is immediate. To demonstrate the other direction, we are going to show that \eqref{UCon Inequality Theorem Pos} and \eqref{UCon Inequality Theorem Neg} imply that
\begin{equation}  \label{eq:theorem1_goal}
\mathbb{E}\left[ \left( 2Y_i-1\right) 1\{X_ib\geq 0\} \mid \mathcal{X}_n\right]\geq 0\mbox{ and }\mathbb{E}\left[ \left( 1-2Y_i\right) 1\{X_ib\leq 0\} \mid \mathcal{X}_n\right]\geq 0
\end{equation}
for every $i=1,\ldots,n$. To show the result, let $b \in \mathcal{B}$ such that \eqref{UCon Inequality Theorem Pos} and \eqref{UCon Inequality Theorem Neg} hold. Let $v_u\in \mathcal{V}_u$ and $v_l\in \mathcal{V}_l$ such that $r_u\left(v_u\right) = r_u\left(\beta \right)$ and $r_l\left(v_l\right) = r_l\left(\beta \right)$. Note that under Assumption \ref{Assumption: Model and Sampling Process}\ (iii), such $v_u$ and $v_l$ exist. Lemma \ref{Lemma:Conditional Moment Inequalities} implies 
\begin{eqnarray}
-\mathbb{E}_n\left[ |\mathbb{E}\left[2Y-1\mid \mathcal{X}_n\right]|1\{Xb\geq 0, X\beta<0\}\right] 
&=& 
\mathbb{E}_n\left[ \mathbb{E}\left[2Y-1\mid \mathcal{X}_n\right]1\{Xb\geq 0, X\beta<0\}\right]  \notag \\
&=& 
\mathbb{E}_n\left[ \mathbb{E}\left[2Y-1\mid \mathcal{X}_n\right]1\{Xb\geq 0, Xv_u< 0\}\right]  \notag \\
&=& 
\mathbb{E}\left[\ \mathbb{E}_n\left[ (2Y-1) 1\{Xb\geq 0, Xv_u< 0\}\right]\mid \mathcal{X}_n\right]  \notag \\
&\geq& 
0  \label{theorem1_proof_eq1} \\
 -\mathbb{E}_n\left[ |\mathbb{E}\left[2Y-1\mid \mathcal{X}_n\right]|1\{Xb\leq 0, X\beta>0\}\right] 
&=& 
\mathbb{E}_n\left[ \mathbb{E}\left[1-2Y\mid \mathcal{X}_n\right]1\{Xb\leq 0, X\beta>0\}\right]  \notag \\
&=& 
\mathbb{E}_n\left[\mathbb{E}\left[1-2Y\mid \mathcal{X}_n\right]1\{Xb\leq 0, Xv_l>0\}\right]  \notag \\
&=& 
\mathbb{E}\left[\ \mathbb{E}_n\left[ \left(1-2Y\right) 1\{Xb\leq 0, Xv_l>0\}\right]\mid \mathcal{X}_n\right]  \notag \\
&\geq& 
0.  \label{theorem1_proof_eq2}
\end{eqnarray}
Moreover, since both $- |\mathbb{E}\left[2Y_i-1\mid \mathcal{X}_n\right]|1\{X_ib\geq 0,X_i\beta<0\}$ and $- |\mathbb{E}\left[2Y_i-1\mid \mathcal{X}_n\right]|1\{X_ib\leq 0,X_i\beta>0\}$ must be non-positive for every $i = 1,\ldots,n$, we have 
\begin{equation}  \label{theorem1_proof_eq_main}
|\mathbb{E}\left[2Y_i-1\mid \mathcal{X}_n\right]|1\{X_ib\geq 0,X_i\beta<0\}=|\mathbb{E}\left[2Y_i-1\mid \mathcal{X}_n\right]|1\{X_ib\leq 0,X_i\beta>0\}=0
\end{equation}
for every $i = 1,\ldots,n$. We can demonstrate Eq. (\ref{eq:theorem1_goal}) for every $i = 1,\ldots,n$ by considering the following three cases: $\mathbb{E}\left[ 2Y_i-1\mid \mathcal{X}_n\right]=0$, $\mathbb{E}\left[ 2Y_i-1\mid\mathcal{X}_n\right]>0$, and $\mathbb{E}\left[ 2Y_i-1\mid\mathcal{X}_n\right]<0$. For every $i$ with $\mathbb{E}\left[ 2Y_i-1\mid\mathcal{X}_n\right]=0$, Eq. (\ref{eq:theorem1_goal}) holds with equality. For every $i$ with $\mathbb{E}\left[ 2Y_i-1\mid \mathcal{X}_n\right]>0$, we have $X_i\beta>0$ from Lemma \ref{Lemma:Conditional Moment Inequalities}, and therefore Eq. (\ref{theorem1_proof_eq_main}) implies $X_ib>0$, which in turn implies Eq. (\ref{eq:theorem1_goal}). For every $i$ with $\mathbb{E}\left[ 2Y_i-1\mid \mathcal{X}_n\right]<0$, we have $X_i\beta<0$ from Lemma \ref{Lemma:Conditional Moment Inequalities}, and therefore Eq. (\ref{theorem1_proof_eq_main}) implies $X_ib<0$, which in turn implies Eq. (\ref{eq:theorem1_goal}).
\end{proof}

\begin{proof}[Proof of Theorem \ref{Theorem: size control}]
If Eq. (\ref{eq:relat_barT_T}) holds under $H_0$, then $\mathbb{P}\left(T_n(\beta)\leq q_{1-\alpha}\mid \mathcal{X}_n\right)\geq\mathbb{P}\left(T^{\ast}_n(\beta)\leq q_{1-\alpha}\mid \mathcal{X}_n\right)\geq1-\alpha$. For the rest of the proof, we are going to show inequality (\ref{eq:relat_barT_T}) under $H_0$. Since 
$$
\begin{cases}
Y_i\geq Y^{\ast}_i & \mbox{ if }X_i\beta\geq 0 \\ 
Y_i\leq Y^{\ast}_i & \mbox{ if }X_i\beta\leq 0,
\end{cases}
$$
for every $i=1,\ldots,n$, we have 
$$
\mathbb{E}_n\left[ (2Y-1)1\{X\beta\geq 0, Xv<0\} \right]\geq \mathbb{E}_n\left[ (2 Y^{\ast}-1)1\{X\beta\geq 0 > Xv \} \right] \text{,  }\forall v\in \mathcal{V}_u
$$
and 
$$
\mathbb{E}_n\left[ (1-2Y)1\{X\beta\leq 0, Xv>0\} \right] \geq \mathbb{E}_n\left[(1-2 Y^{\ast})1\{X\beta\leq 0 < Xv \} \right] \text{,  }\forall v\in \mathcal{V}_l.
$$

By the construction of ${T}^{\ast}\left(\beta\right)$ and $T_{n}\left(\beta\right)$, it suffices to show that, for every $v\in \mathbb{R}^K$,  the two functions, 
$$
t\mapsto\max\left\{0,\sqrt{n}\frac{-t}{\max \{ \epsilon ,\sqrt{\mathbb{E}_n[1\{X\beta\geq 0, Xv< 0\} ]-t^2}\} }\right\}
$$
and 
$$
t\mapsto\max\left\{0,\sqrt{n}\frac{-t}{\max \{ \epsilon ,\sqrt{\mathbb{E}_n[1\{X\beta\leq 0, Xv>0\} ]-t^2}\} }\right\}
$$
are weakly decreasing. For the rest of the proof, we focus on the first function 
$$
f(t)\equiv\max\left\{0,\sqrt{n}\frac{-t}{\max \{ \epsilon ,\sqrt{\mathbb{E}_n[1\{X\beta\geq 0, Xv< 0\} ]-t^2}\} }\right\}.
$$
Consider $t_1$ and $t_2$ with $t_1<t_2$. If $t_2\geq 0$, we have $f(t_1)\geq 0=f(t_2)$. For the rest of the proof, therefore, we are going to show $f(t_1)\geq f(t_2)$ when $t_1<t_2<0$. Since $t_2^2<t_1^2$, we have 
$$
\mathbb{E}_n[1\{X\beta\geq 0, Xv< 0\} ]-t_1^2
<
\mathbb{E}_n[1\{X\beta\geq 0, Xv< 0\} ]-t_2^2,
$$
so 
$$
0
<
\max \left\{ \epsilon ,\sqrt{\mathbb{E}_n[1\{X\beta\geq 0, Xv< 0\} ]-t_1^2}\right\}
\leq
\max \left\{ \epsilon ,\sqrt{\mathbb{E}_n[1\{X\beta\geq 0, Xv< 0\} ]-t_2^2}\right\}.
$$
Since $-t_1>-t_2>0$, we have 
$$
\frac{-t_1}{\max \{ \epsilon ,\sqrt{\mathbb{E}_n[1\{X\beta\geq 0, Xv< 0\} ]-t_1^2}\}}
>
\frac{-t_2}{\max \{ \epsilon ,\sqrt{\mathbb{E}_n[1\{X\beta\geq 0, Xv< 0\} ]-t_2^2}\}}.
$$
Therefore, $f(t_1)>f(t_2)$. 
\end{proof}

\begin{proof}[Proof of Theorem \ref{theorem:power}]
By the choice of $cv$ and $q_{1-\alpha}$, we have $\mathbb{P}\left(T^{\ast}_n(\beta)\leq cv\mid \mathcal{X}_n\right)<1-\alpha-\eta$ for some positive number $\eta$. Consider a distribution of $(U_1,\ldots,U_n)$ given $\mathcal{X}_n$ under which $U_1,\ldots,U_n$ are independent given $\mathcal{X}_n$ and $U_i\mid \mathcal{X}_n\sim N(0,\sigma^2)$ for every $i=1,\ldots,n$, where $\sigma$ is chosen to satisfy 
$$
1-\prod_{i=1}^n\left(\Phi\left(\frac{\min\{-X_i\beta,0\}}{\sigma}\right)+1-\Phi\left(\frac{\max\{-X_i\beta,0\}}{\sigma}\right)\right)
\leq
\eta.
$$
(Such $\sigma$ exists because the left hand side converges to $0$ as $\sigma\rightarrow\infty$.) Note that if $Y_i=Y^{\ast}_i\mbox{ for every }i=1,\ldots,n$, then $T_n(\beta)=T^{\ast}_n(\beta)$. Then 
\begin{eqnarray*}
\mathbb{P}\left(T_n(\beta)\leq cv\mid \mathcal{X}_n\right)
&\leq&
\mathbb{P}\left(T_n(\beta)\leq cv,\ Y_i=Y^{\ast}_i\ \forall i=1,\ldots,n\mid \mathcal{X}_n\right)+1-\mathbb{P}\left(Y_i=Y^{\ast}_i\ \forall i=1,\ldots,n\mid \mathcal{X}_n\right)\\
&\leq&
\mathbb{P}\left(T^{\ast}_n(\beta)\leq cv\mid \mathcal{X}_n\right)+1-\mathbb{P}\left(Y_i=Y^{\ast}_i\ \forall i=1,\ldots,n\mid \mathcal{X}_n\right)\\
&<&
1-\alpha-\eta+1-\mathbb{P}\left(Y_i=Y^{\ast}_i\ \forall i=1,\ldots,n\mid \mathcal{X}_n\right).
\end{eqnarray*}
To show the statement of this theorem, the rest of the proof is going to show $\mathbb{P}\left(Y_i=Y^{\ast}_i\ \forall i=1,\ldots,n\mid \mathcal{X}_n\right)\geq 1-\eta$. Since $U_1,\ldots,U_n$ are independent given $\mathcal{X}_n$, the events $\{Y_1=Y^{\ast}_1\},\ldots,\{Y_n=Y^{\ast}_n\}$ are also independent given $\mathcal{X}_n$, and then $1-\mathbb{P}\left(Y_i=Y^{\ast}_i\ \forall i=1,\ldots,n\mid \mathcal{X}_n\right)=1-\prod_{i=1}^n\mathbb{P}\left(Y_i=Y^{\ast}_i\mid \mathcal{X}_n\right)$. Since the distribution of $U_i$ given $\mathcal{X}_n$ satisfies
\begin{eqnarray*}
\mathbb{P}\left(Y_i=Y^{\ast}_i\mid \mathcal{X}_n\right)
&=&
\mathbb{P}\left(1\{X_i\beta+U_i\geq 0\}=1\{U_i\geq 0\}\mid \mathcal{X}_n\right)\\
&=&
\mathbb{P}\left(U_i<\min\{-X_i\beta,0\}\mbox{ or }U_i\geq\max\{-X_i\beta,0\}\mid \mathcal{X}_n\right)\\
&=&
\Phi\left(\frac{\min\{-X_i\beta,0\}}{\sigma}\right)+1-\Phi\left(\frac{\max\{-X_i\beta,0\}}{\sigma}\right),
\end{eqnarray*}
we have 
$$
1-\mathbb{P}\left(Y_i=Y^{\ast}_i\ \forall i=1,\ldots,n\mid \mathcal{X}_n\right)
=
1-\prod_{i=1}^n\left(\Phi\left(\frac{\min\{-X_i\beta,0\}}{\sigma}\right)+1-\Phi\left(\frac{\max\{-X_i\beta,0\}}{\sigma}\right)\right)
\leq
\eta.
$$
\end{proof}

\begin{proof}[Proof of Theorem \ref{theorem:power_bound}]
In this proof, we focus on Eq. (\ref{eq:distant_alt}). Define $W= (2Y-1) 1\{Xb\geq 0,Xv<0\}$. First, we are going to show that
\begin{equation}\label{eq:imply_rela}
\sqrt{n}\mathbb{E}_n[ W]<-q_{1-\alpha}\max\left\{\epsilon,\sqrt{\frac{\mathbb{E}_n[W^2]}{1+q_{1-\alpha}^2/n}}\right\}
\implies 
T_n(b)>q_{1-\alpha}.
\end{equation}
Suppose $\sqrt{n}\mathbb{E}_n[ W]<-q_{1-\alpha}\max\left\{\epsilon,\sqrt{\frac{\mathbb{E}_n[W^2]}{1+q_{1-\alpha}^2/n}}\right\}$. Note that 
\begin{equation}\label{eq:negative_moment}
\mathbb{E}_n[ W]<0
\end{equation}
and 
$$
n\mathbb{E}_n[ W]^2>\max\left\{\epsilon^2 q_{1-\alpha}^2,\mathbb{E}_n[W^2]\frac{q_{1-\alpha}^2}{1+q_{1-\alpha}^2/n}\right\}.
$$
The second inequality implies $n\mathbb{E}_n[ W]^2>\max \{ \epsilon^2 q_{1-\alpha}^2,\mathbb{E}_n[W^2]q_{1-\alpha}^2-\mathbb{E}_n[ W]^{2}q_{1-\alpha}^2\}$. Using Eq. (\ref{eq:negative_moment}), $-\sqrt{n}\mathbb{E}_n[ W]>q_{1-\alpha}\max \{ \epsilon ,\sqrt{\mathbb{E}_n[W^2]-\mathbb{E}_n[ W] ^{2}}\}$ and then 
$$
\sqrt{n}\frac{-\mathbb{E}_n[ W]}{\max \{ \epsilon ,\sqrt{\mathbb{E}_n[W^2]-\mathbb{E}_n[ W]^{2}}\} }>q_{1-\alpha}
$$
which implies $T_n(b)>q_{1-\alpha}$. 

Then, we are going to show $\mathbb{P}(T_n(b)>q_{1-\alpha}\mid \mathcal{X}_n)\geq 1-\rho$. Using Eq. (\ref{eq:imply_rela}), we have 
$$
\mathbb{P}(T_n(b)>q_{1-\alpha}\mid \mathcal{X}_n)
\geq
\mathbb{P}\left(\sqrt{n}\mathbb{E}_n[ W]<-q_{1-\alpha}\max\left\{\epsilon,\sqrt{\frac{\mathbb{E}_n[W^2]}{1+q_{1-\alpha}^2/n}}\right\}\mid \mathcal{X}_n\right).
$$
Eq. (\ref{eq:distant_alt}) implies 
\begin{eqnarray*}
\mathbb{P}(T_n(b)>q_{1-\alpha}\mid \mathcal{X}_n)
&\geq&
\mathbb{P}\left(\mathbb{E}_n[W]<\mathbb{E}\left[\mathbb{E}_n[W]\mid \mathcal{X}_n\right]+\sqrt{\frac{2\log(1/\rho)\mathbb{E}_n[1\{Xb\geq 0,Xv<0\}]}{n}}\mid \mathcal{X}_n\right)\\
&=&
1-\mathbb{P}\left(\mathbb{E}_n[W]\geq\mathbb{E}\left[\mathbb{E}_n[W]\mid \mathcal{X}_n\right]+\sqrt{\frac{2\log(1/\rho)\mathbb{E}_n[1\{Xb\geq 0,Xv<0\}]}{n}}\mid \mathcal{X}_n\right).
\end{eqnarray*}
Since $-1\{Xb\geq 0,Xv<0\}\leq W_i\leq 1\{Xb\geq 0,Xv<0\}$ for every $i=1,\ldots,n$, \cite{Hoeffding:1963}'s inequality implies 
$$
\mathbb{P}(T_n(b)>q_{1-\alpha}\mid \mathcal{X}_n)
\geq
1-\exp\left(-\frac{2n^2\left(\sqrt{\frac{2\log(1/\rho)\mathbb{E}_n[1\{Xb\geq 0,Xv<0\}]}{n}}\right)^2}{4\sum_{i=1}^n1\{X_ib\geq 0,X_iv<0\}}\right)
=
1-\rho.
$$
\end{proof}
\begin{proof}[Proof of Corollary \ref{Corollary: Power Guarantee 1}]
Let $1 - \rho$ denote the maximum of the power bounds \eqref{power bound upper} and \eqref{power bound lower} in the statement of the corollary.  If the expressions $\max\{0,\cdot\}$ in \eqref{power bound upper} and \eqref{power bound lower} are both zero for all $v \in \mathcal{V}_u$ and $v \in \mathcal{V}_l$, then the implication of the corollary is trivially satisfied. Thus suppose instead that the maximum of \eqref{power bound upper} and \eqref{power bound lower} is greater than zero. It follows that there is either a $v \in \mathcal{V}_u$ such that
\begin{equation*}
1-\rho = 1-\exp\left(-\frac{1}{2}\left(\sqrt{n}\tilde{\zeta}_u(b,v)-q_{1-\alpha}\max\left\{\tilde{\epsilon}_u(b,v),(1+q_{1-\alpha}^2/n)^{-1/2}\right\}\right)^2
\right)\text{,}
\end{equation*}
or a $v \in \mathcal{V}_l$ such that
\begin{equation*}
1-\rho = 1-\exp\left(-\frac{1}{2}\left(\sqrt{n}\tilde{\zeta}_l(b,v)-q_{1-\alpha}\max\left\{\tilde{\epsilon}_l(b,v),(1+q_{1-\alpha}^2/n)^{-1/2}\right\}\right)^2
\right)\text{,}
\end{equation*}
implying \eqref{power bound upper} in the former case and \eqref{power bound lower} in the latter. The conclusion of the corollary then follows from Theorem \ref{theorem:power_bound}.
\end{proof}

\begin{proof}[Proof of Corollary \ref{power corollary}]
The first part of the corollary can be shown by first noting that $Q(b) \geq C(\gamma)$ implies that at least one of the following inequalities hold:
\begin{align*}
\mathbb{E}\left[\ \mathbb{E}_n\left[ (2Y-1) 1\{Xb\geq 0, Xv<0\}\right]\mid \mathcal{X}_n\right] &\leq -C(\gamma)\text{,}\\ \mathbb{E}\left[\ \mathbb{E}_n\left[ (1-2Y) 1\{Xb\leq 0, Xv\geq0\}\right]\mid \mathcal{X}_n\right] &\leq -C(\gamma)\text{.}
\end{align*}
Then because $-C(1 - \rho)$ is less than or equal to each of the expressions on the right hand side of inequalities \eqref{eq:distant_alt} and \eqref{eq:distant_alt2}, at least one of \eqref{eq:distant_alt} and \eqref{eq:distant_alt2} is true and Theorem \ref{theorem:power_bound} delivers the result.

\noindent For the second claim of the corollary, consider first that if $\sqrt{n}Q(b) \leq q_{1-\alpha}\max\left\{\epsilon, (1+q_{1-\alpha}^2/n)^{-1/2}\right\}$ the result holds trivially. Thus, suppose instead that $\sqrt{n}Q(b) > q_{1-\alpha}\max\left\{\epsilon, (1+q_{1-\alpha}^2/n)^{-1/2}\right\}$ and consider
$$
\gamma = 1-\exp \left(-\frac{1}{2}\left(\sqrt{n}Q(b)-q_{1-\alpha}\max\left\{\epsilon, (1+q_{1-\alpha}^2/n)^{-1/2}\right\}\right)^2 \right).
$$
Then $Q(b) = C(\gamma)$ and the desired implication follows from the first part of the corollary.
\end{proof}

\begin{proof}[Proof of Theorem \ref{Theorem: likelihood ratio test most powerful with composite null}]
Assume w.l.o.g. that $\{i: \bar{p}_i\ne\tilde{p}_i\}=\{1,\ldots,\bar{n}\}$, which can be achieved by rearranging $i$'s.  Then $(\bar{p}_1,\ldots,\bar{p}_{\bar{n}})=(1/2,\ldots,1/2)$ and the test in \eqref{likelihood ratio test with composite null} simplifies as
$$
\bar{\phi}(y_1,\ldots,y_{\bar{n}})
=
\begin{cases}
1&\mbox{ if }\prod_{i=1}^{\bar{n}}\tilde{p}_i^{y_i}(1-\tilde{p}_i)^{1-y_i}>k2^{-\bar{n}}\text{,}\\
\xi&\mbox{ if }\prod_{i=1}^{\bar{n}}\tilde{p}_i^{y_i}(1-\tilde{p}_i)^{1-y_i}=k2^{-\bar{n}}\text{,}\\
0&\mbox{ otherwise}.
\end{cases}
$$
Define
$$
\mathrm{RP}(p_1,\ldots,p_{\bar{n}})\equiv\sum_{(y_1,\ldots,y_{\bar{n}})\in \{0,1\}^{\bar{n}}}\bar{\phi}(y_1,\ldots,y_{\bar{n}})\prod_{i=1}^{\bar{n}}{p}_i^{y_i}(1-{p}_i)^{1-y_i}.
$$
For the rest of the proof, we first show that $\mathrm{RP}(p_1,\ldots,p_{\bar{n}})$ is the probability that $\bar{\phi}(y_1,\ldots,y_{n})$ rejects the null hypothesis when $\mathbb{P}\{Y_i = 1\mid \mathcal{X}_n\} = p_i$ for all $i$, that is, 
\begin{equation}\label{equivalence of RP forumulations}
\mathrm{RP}(p_1,\ldots,p_{\bar{n}})=\sum_{(y_1,\ldots,y_n) \in \{0,1\}^n}\bar{\phi}(y_1,\ldots,y_{n})\prod_{i=1}^{n}{p}_i^{y_i}(1-{p}_i)^{1-y_i},
\end{equation}
where the right-hand side of the above equation is the sum of $2^n$ terms instead of $2^{\bar{n}}$. From this, it follows from \eqref{size control composite alternative} that $\mathrm{RP}(\bar{p}_1,\ldots,\bar{p}_{\bar{n}}) = \alpha$. Subsequently we show that $(\bar{p}_1,\ldots,\bar{p}_{\bar{n}})$ is the constrained maximizer of $\mathrm{RP}(p_1,\ldots,p_{\bar{n}})$ with respect to $p_1,\ldots,p_{\bar{n}}$, subject to $p_1,\ldots,p_{\bar{n}}$ being compatible with the null hypothesis. Thus \eqref{likelihood ratio test with composite null} achieves size control, i.e., $\sum_{(y_1,\ldots,y_n) \in \{0,1\}^n}\bar{\phi}(y_1,\ldots,y_{n})\prod_{i=1}^{n}{p}_i^{y_i}(1-{p}_i)^{1-y_i}\leq \alpha$ for any sequence $(p_1,\ldots,p_n)$ under the null hypothesis. By Theorem 3.8.1 of \cite{Lehmann/Romano:05} the test \eqref{likelihood ratio test with composite null} is a most powerful test of the composite null $\beta = b$ against the simple alternative $H_1$, and \eqref{LFD under composite null} is the least favorable distribution of $\left(Y_1,\ldots,Y_n\right)$ given $\mathcal{X}_n$.

First,  to establish \eqref{equivalence of RP forumulations}, we have
\begin{eqnarray*}
\mathrm{RP}(p_1,\ldots,p_{\bar{n}})
&=&
\sum_{(y_1,\ldots,y_{\bar{n}})\in \{0,1\}^{\bar{n}}}\bar{\phi}(y_1,\ldots,y_{\bar{n}})\prod_{i=1}^{\bar{n}}{p}_i^{y_i}(1-{p}_i)^{1-y_i}\\
&=&
\sum_{(y_1,\ldots,y_{\bar{n}})\in \{0,1\}^{\bar{n}}}\bar{\phi}(y_1,\ldots,y_{\bar{n}})\sum_{(y_{\bar{n}+1},\ldots,y_n)}\prod_{i=1}^n{p}_i^{y_i}(1-{p}_i)^{1-y_i}\\
&=&
\sum_{(y_1,\ldots,y_n) \in \{0,1\}^n}\bar{\phi}(y_1,\ldots,y_{n})\prod_{i=1}^{n}{p}_i^{y_i}(1-{p}_i)^{1-y_i},
\end{eqnarray*}
where the second equality holds because $\sum_{(y_{\bar{n}+1},\ldots,y_n)}\prod_{i=\bar{n}+1}^n{p}_i^{y_i}(1-{p}_i)^{1-y_i}=1$ and the third equality follows because $\bar{\phi}(y_1,\ldots,y_n)$ does not depend on $y_{\bar{n}+1},\ldots,y_n$.

For the next step of the proof, notice that \eqref{p_i_bar_definition} implies that, under the null hypothesis, $(\mathbb{P}\left(Y_i=1\mid \mathcal{X}_n\right)-1/2)(\tilde{p}_i-1/2)\leq 0$ for every $i=1,\ldots,\bar{n}$.  Thus we next show that $(\bar{p}_1,\ldots,\bar{p}_{\bar{n}})$ is the constrained maximizer of $\mathrm{RP}(p_1,\ldots,p_{\bar{n}})$ subject to $(p_i-1/2)(\tilde{p}_i-1/2)\leq 0$ for all $i=1,\ldots,\bar{n}$. Let $\bar{\phi}(d,y_{-j})$ be the shorthand for $\bar{\phi}(y_1,\ldots,y_{j-1},d,y_{j+1},\ldots,y_n)$, where $d\in\{0,1\}$. Note that
$$
\begin{cases}
\bar{\phi}(1,y_{-j})-\bar{\phi}(0,y_{-j})\geq 0&\mbox{ if }\tilde{p}_j>1/2\text{,}\\
\bar{\phi}(1,y_{-j})-\bar{\phi}(0,y_{-j})\leq 0&\mbox{ if }\tilde{p}_j<1/2\text{,}
\end{cases}
$$
for every $j=1,\ldots,\bar{n}$, and  that $\tilde{p}_j \neq 1/2$ because $\bar{p}_j \neq \tilde{p}_j$ for all $j=1,\ldots,\bar{n}$.
Since 
\begin{eqnarray*}
\frac{\partial}{\partial p_j}\mathrm{RP}(p_1,\ldots,p_{\bar{n}})
&=&
\sum_{(y_1,\ldots,y_{\bar{n}})\in \{0,1\}^{\bar{n}}}(2y_j-1)\bar{\phi}(y_1,\ldots,y_{\bar{n}})\prod_{i\ne j}{p}_i^{y_i}(1-{p}_i)^{1-y_i}
\\
&=&
\sum_{y_{-j}}\bar{\phi}(1,y_{-j})\prod_{i\ne j}{p}_i^{y_i}(1-{p}_i)^{1-y_i}-\sum_{y_{-j}}\bar{\phi}(0,y_{-j})\prod_{i\ne j}{p}_i^{y_i}(1-{p}_i)^{1-y_i}
\\
&=&
\sum_{y_{-j}}(\bar{\phi}(1,y_{-j})-\bar{\phi}(0,y_{-j}))\prod_{i\ne j}{p}_i^{y_i}(1-{p}_i)^{1-y_i},
\end{eqnarray*}
we have 
$$
\begin{cases}
\frac{\partial}{\partial p_j}\mathrm{RP}(p_1,\ldots,p_{\bar{n}})\geq 0&\mbox{ if }\tilde{p}_j>1/2\text{,}\\
\frac{\partial}{\partial p_j}\mathrm{RP}(p_1,\ldots,p_{\bar{n}})\leq 0&\mbox{ if }\tilde{p}_j<1/2.
\end{cases}
$$
Thus $(1/2,\ldots,1/2)$ maximizes $\mathrm{RP}(p_1,\ldots,p_{\bar{n}})$ with respect to $(p_1,\ldots,p_{\bar{n}})$ subject to $(p_i-1/2)(\tilde{p}_i-1/2)\leq 0$ for all $i=1,\ldots,\bar{n}$, completing the proof.
\end{proof}
\begin{proof}[Proof of Corollary \ref{corollary:trivial_LRT}]
The corollary follows directly from Theorem \ref{Theorem: likelihood ratio test most powerful with composite null} and the reasoning given in the text.
\end{proof}

\newpage
\section{Additional Simulation Results}\label{sec:additional_results}
\begin{figure}[ht]
\parbox{.4\textwidth}{
\centering
\includegraphics[width=.45\textwidth,keepaspectratio]{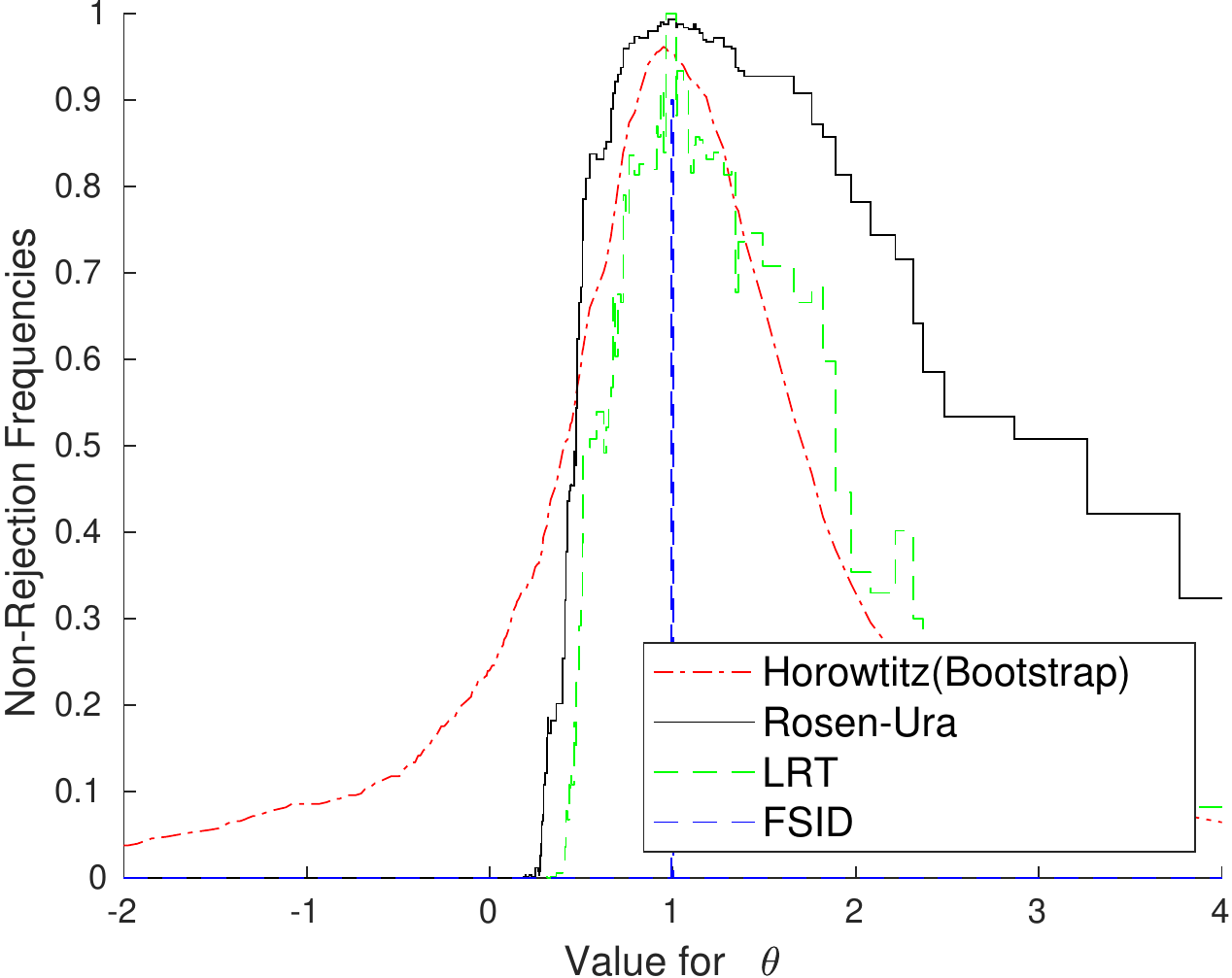}
\caption*{Figure \ref{fig_MC_app1}.a:  Design 1.}
}
\hspace{.05\textwidth}
\parbox{.4\textwidth}{
\centering
\includegraphics[width=.45\textwidth,keepaspectratio]{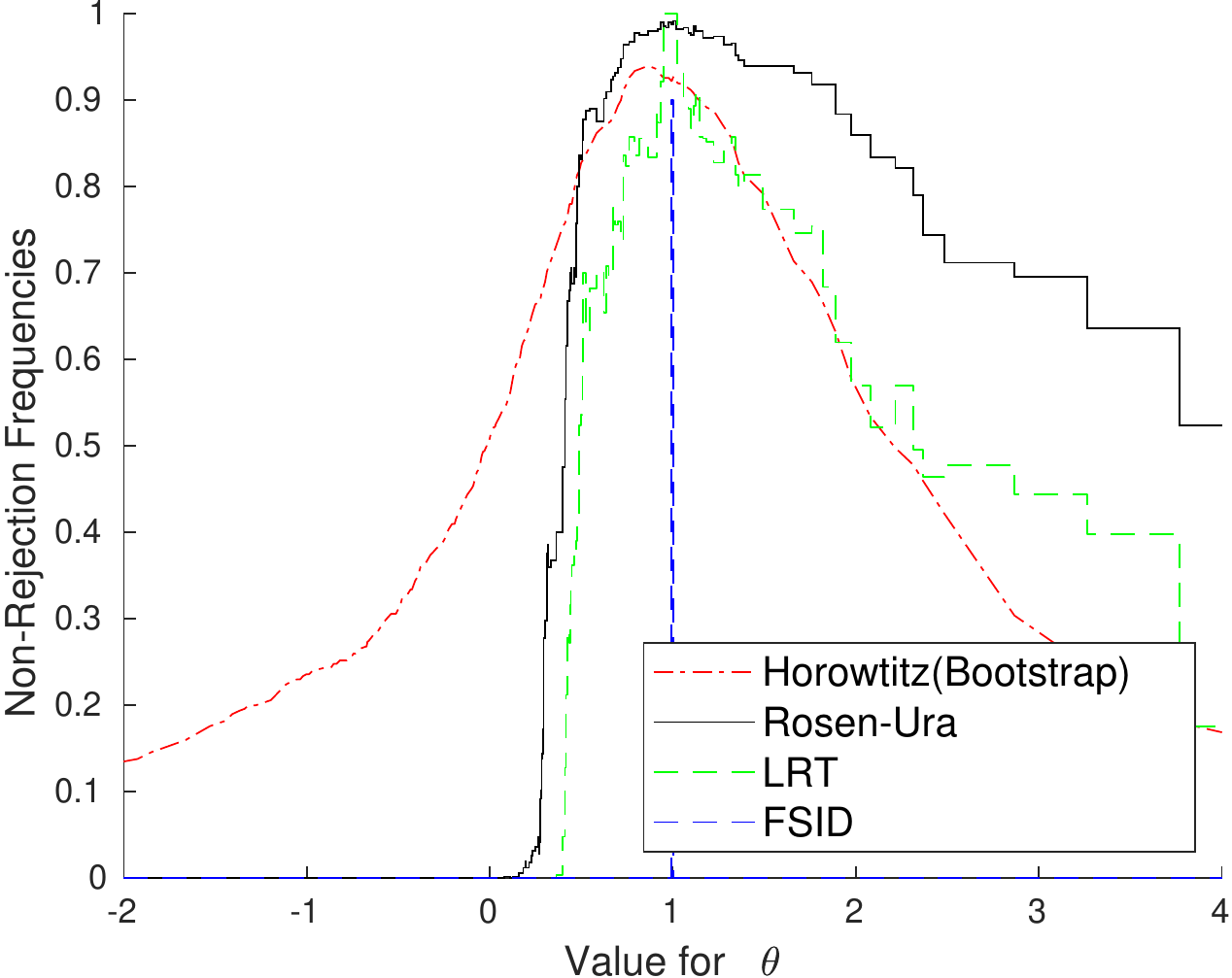}
\caption*{Figure \ref{fig_MC_app1}.b:  Design 2.}
}
\\
\bigskip
\\
\parbox{.4\textwidth}{
\centering
\includegraphics[width=.45\textwidth,keepaspectratio]{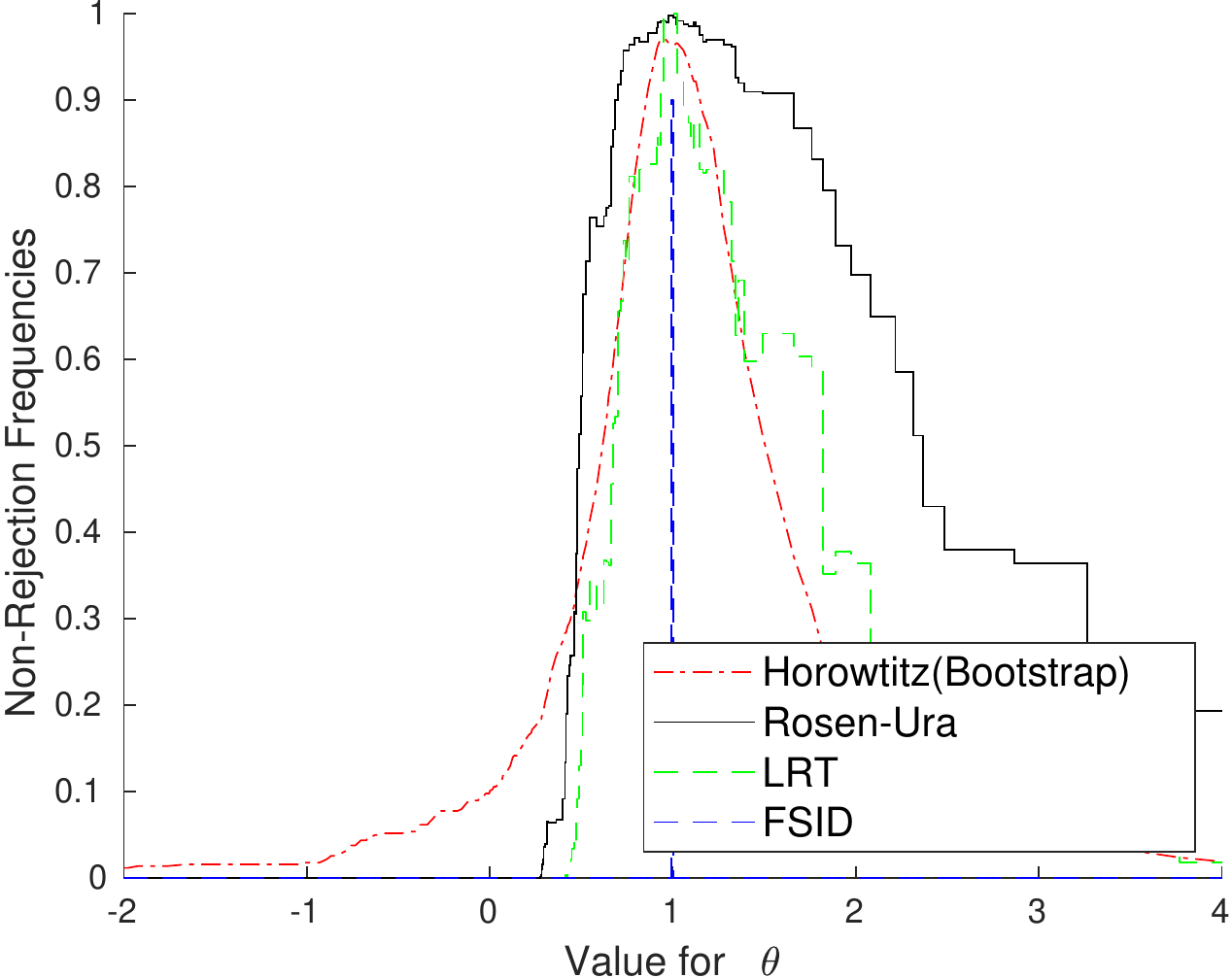}
\caption*{Figure \ref{fig_MC_app1}.c:  Design 3.}
}
\hspace{.05\textwidth}
\parbox{.4\textwidth}{
\centering
\includegraphics[width=.45\textwidth,keepaspectratio]{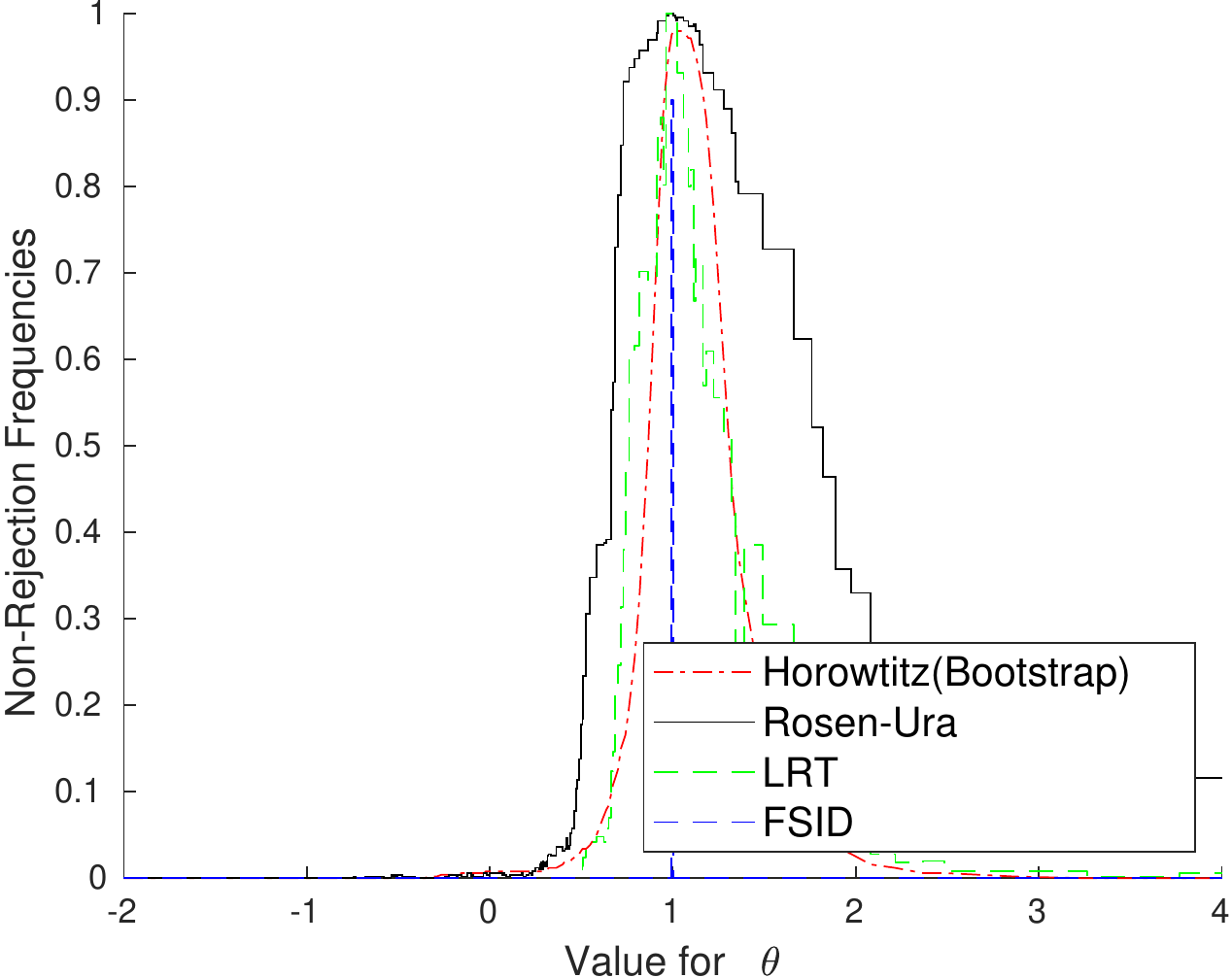}
\caption*{Figure \ref{fig_MC_app1}.d:  Design 4.}
}
\\
\\
\caption{Non-rejection frequencies with $1-\protect\alpha=90\%$ with true $\theta_0=1$ and $n=250$. }
\label{fig_MC_app1}
\end{figure}

\begin{figure}[ht]
\parbox{.4\textwidth}{
\centering
\includegraphics[width=.45\textwidth,keepaspectratio]{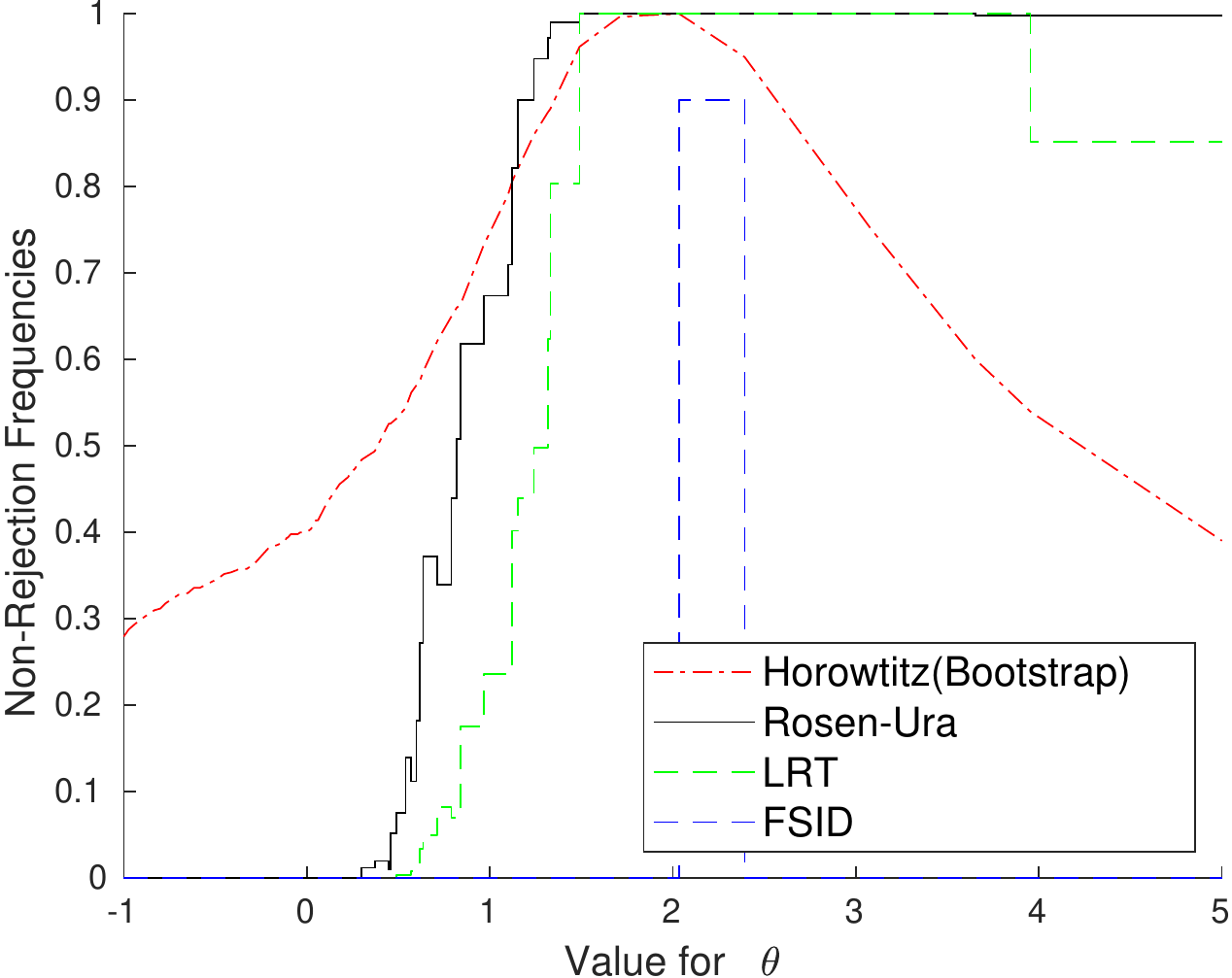}
\caption*{Figure \ref{fig_MC_app2}.a:  Design 1.}
}
\hspace{.05\textwidth}
\parbox{.4\textwidth}{
\centering
\includegraphics[width=.45\textwidth,keepaspectratio]{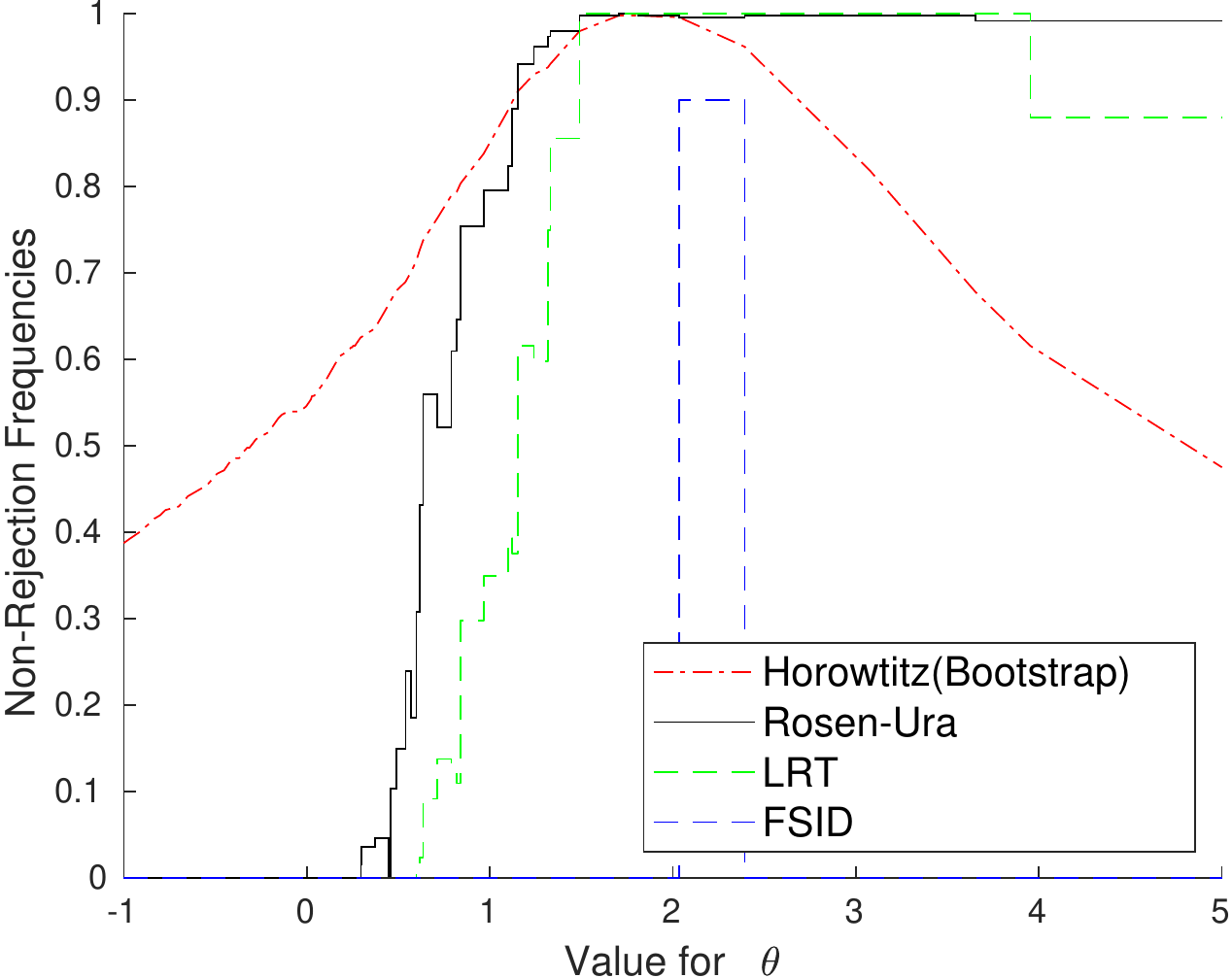}
\caption*{Figure \ref{fig_MC_app2}.b:  Design 2.}
}
\\
\bigskip
\\
\parbox{.4\textwidth}{
\centering
\includegraphics[width=.45\textwidth,keepaspectratio]{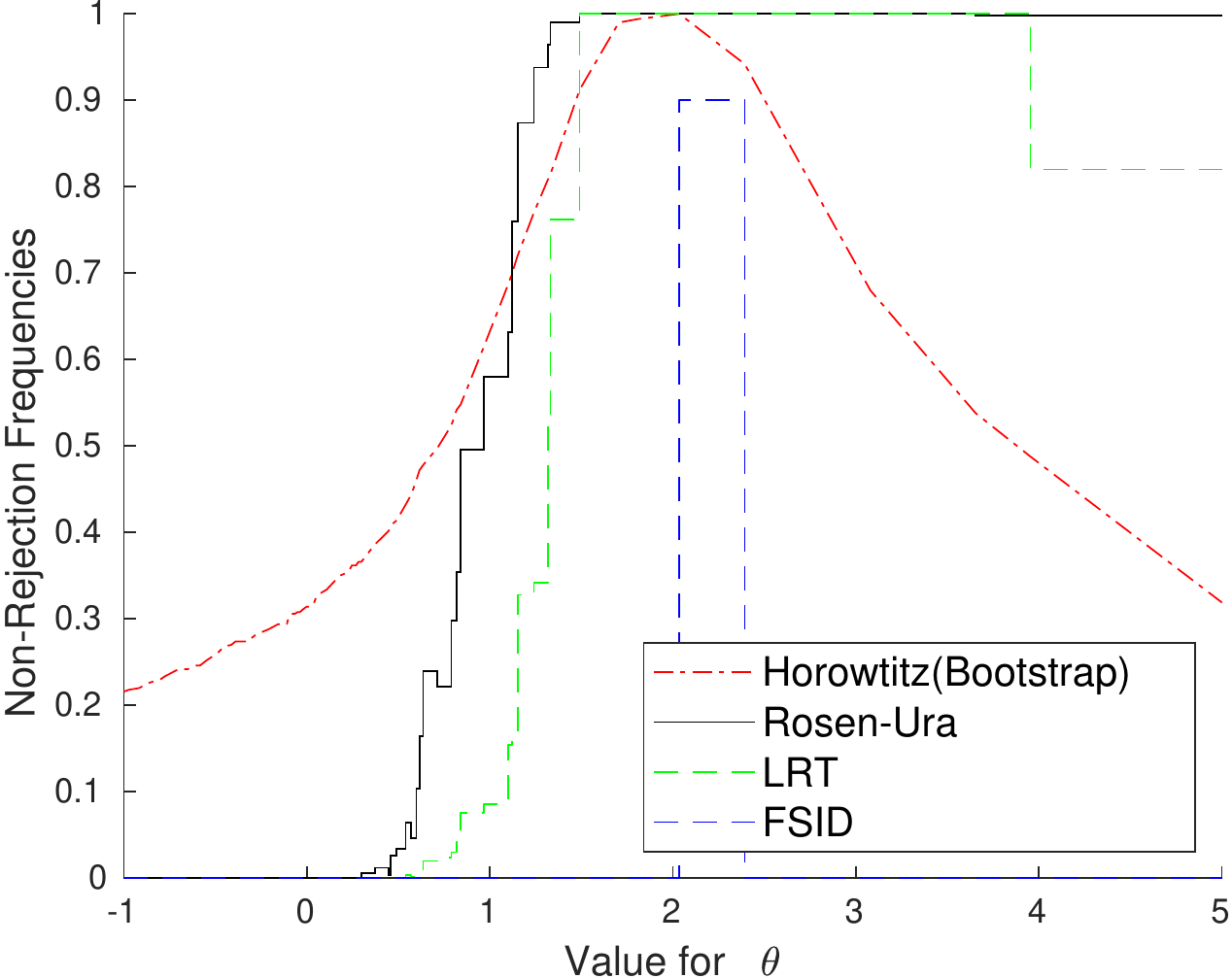}
\caption*{Figure \ref{fig_MC_app2}.c:  Design 3.}
}
\hspace{.05\textwidth}
\parbox{.4\textwidth}{
\centering
\includegraphics[width=.45\textwidth,keepaspectratio]{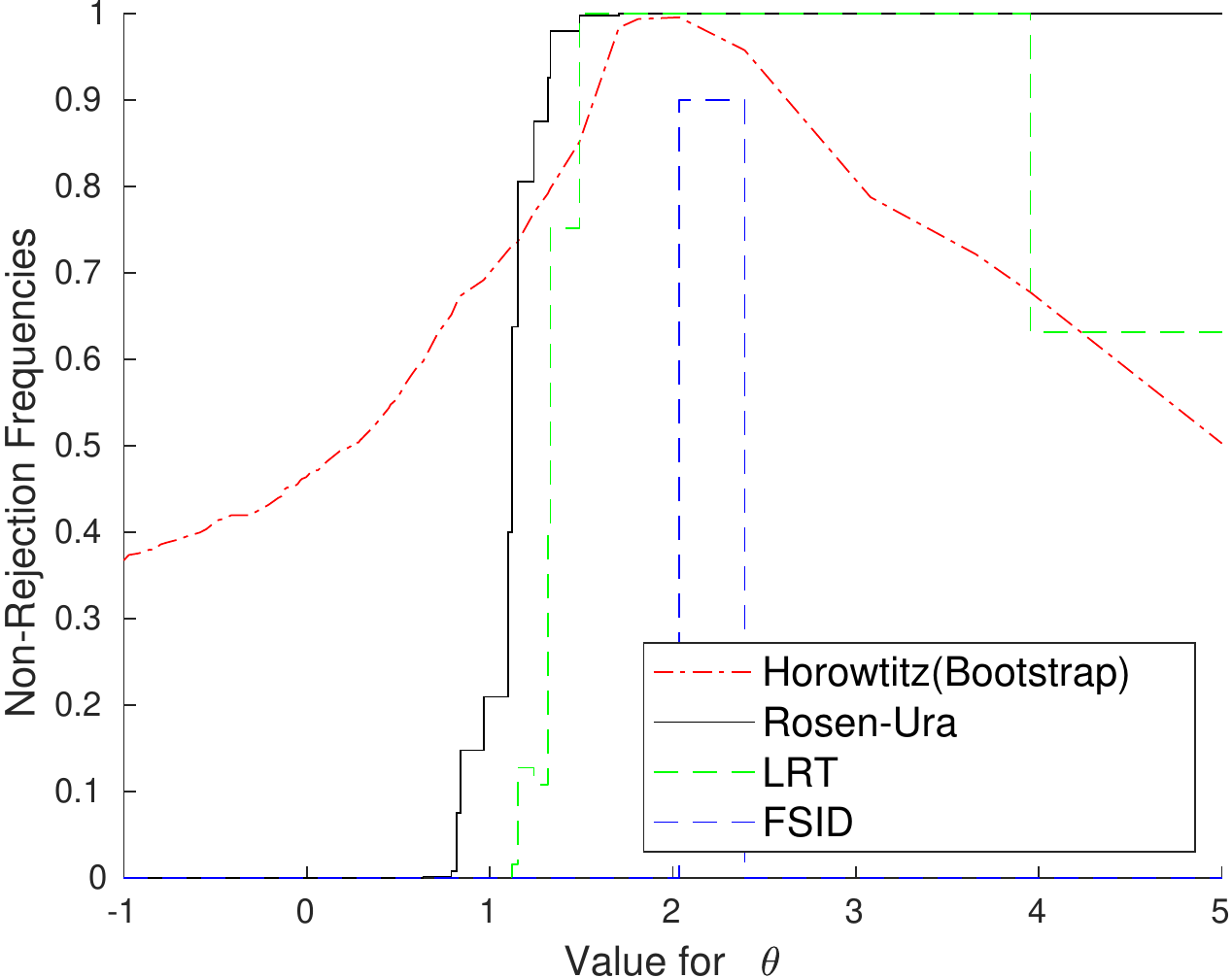}
\caption*{Figure \ref{fig_MC_app2}.d:  Design 4.}
}
\\
\\
\caption{Non-rejection frequencies with $1-\protect\alpha=90\%$ with true $\theta_0=2$ and $n=100$. }
\label{fig_MC_app2}
\end{figure}

\begin{figure}[ht]
\parbox{.4\textwidth}{
\centering
\includegraphics[width=.45\textwidth,keepaspectratio]{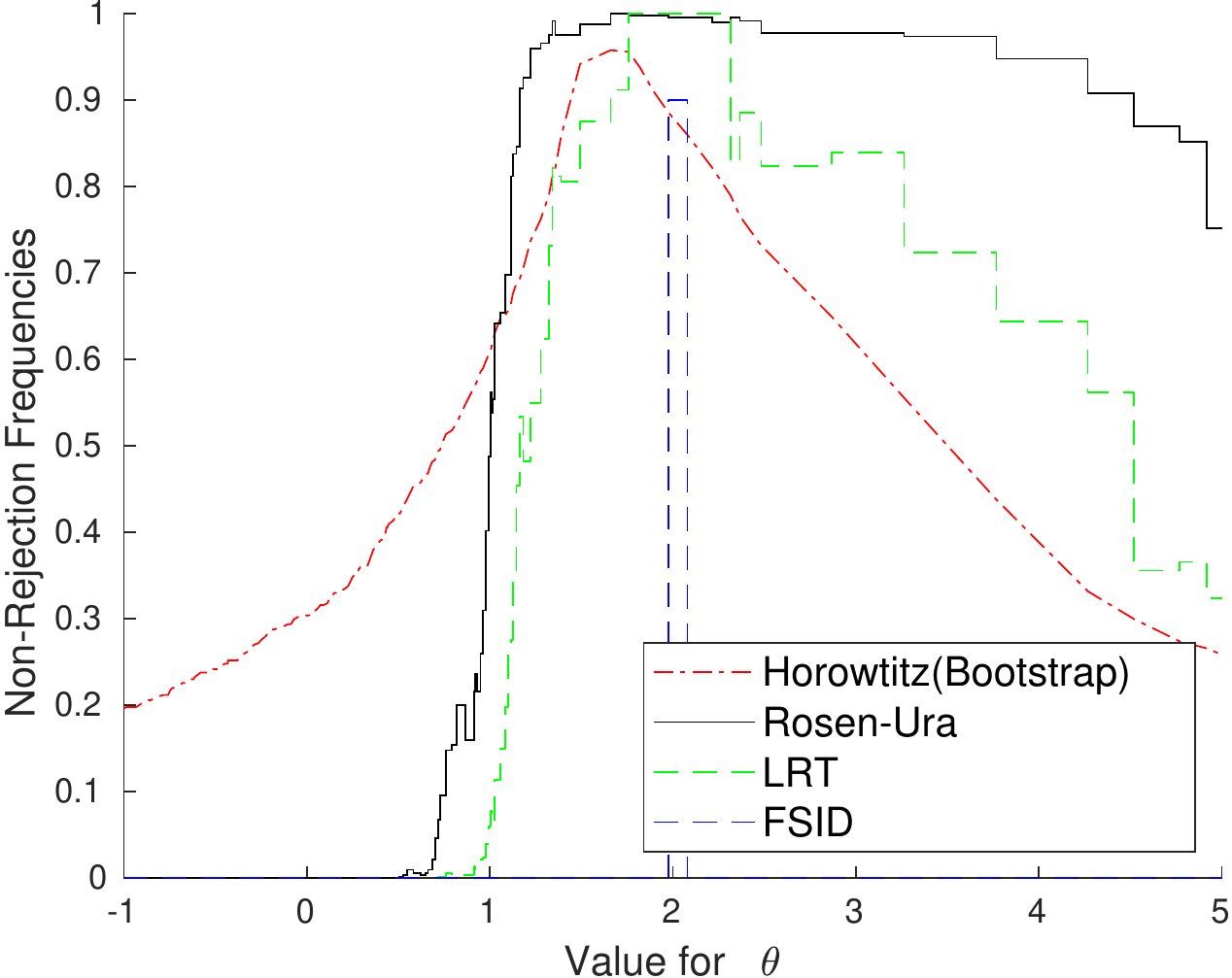}
\caption*{Figure \ref{fig_MC_app3}.a:  Design 1.}
}
\hspace{.05\textwidth}
\parbox{.4\textwidth}{
\centering
\includegraphics[width=.45\textwidth,keepaspectratio]{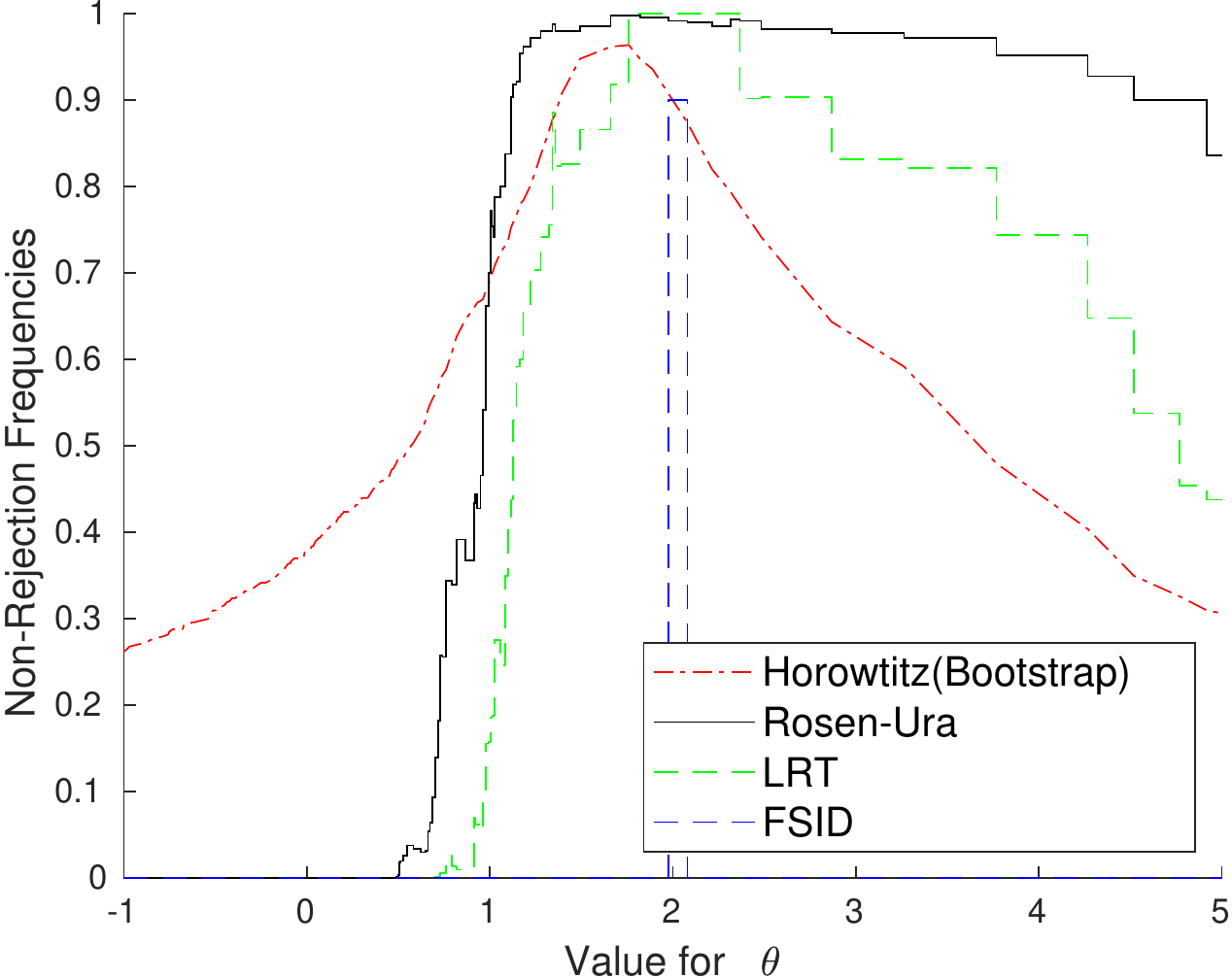}
\caption*{Figure \ref{fig_MC_app3}.b:  Design 2.}
}
\\
\bigskip
\\
\parbox{.4\textwidth}{
\centering
\includegraphics[width=.45\textwidth,keepaspectratio]{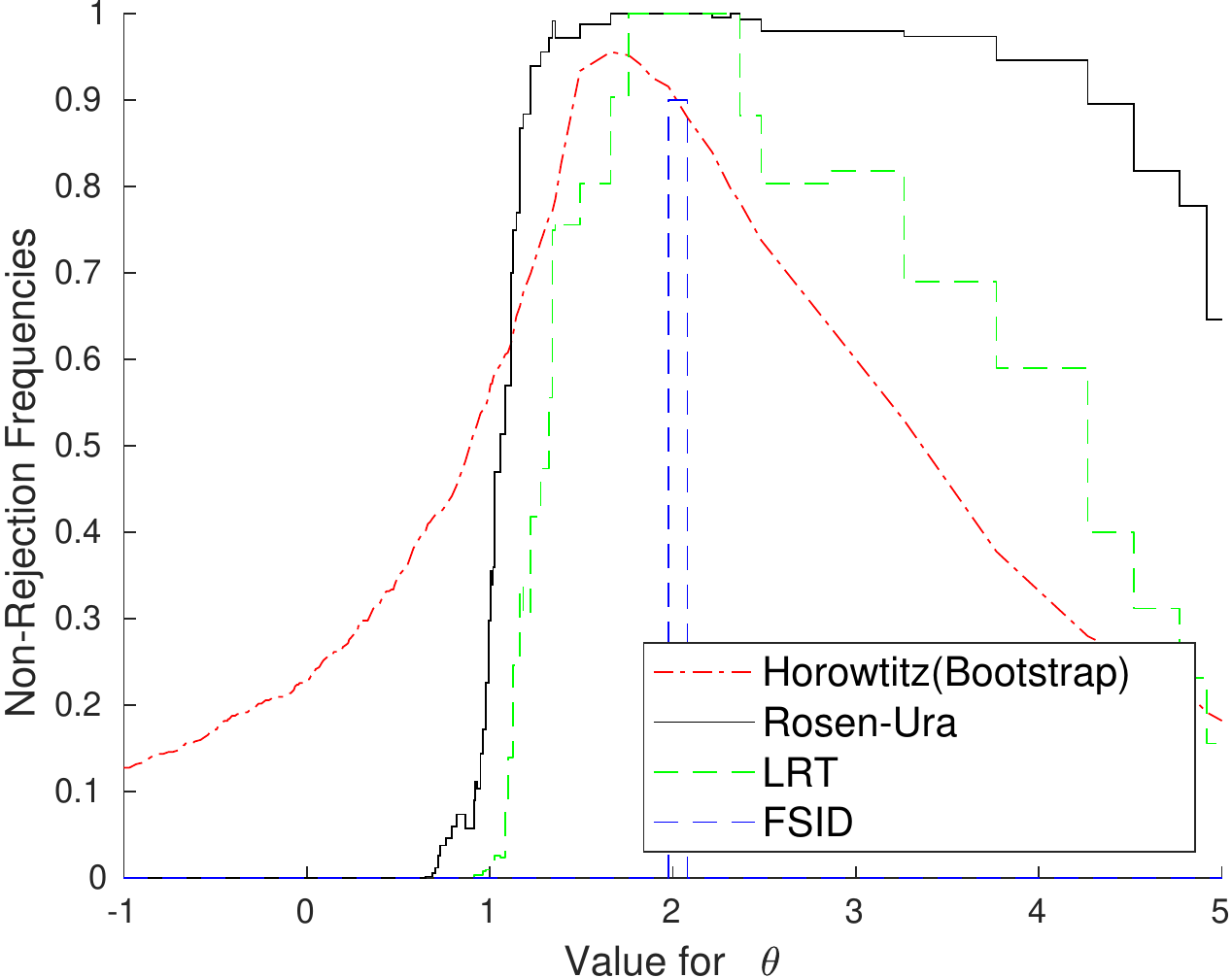}
\caption*{Figure \ref{fig_MC_app3}.c:  Design 3.}
}
\hspace{.05\textwidth}
\parbox{.4\textwidth}{
\centering
\includegraphics[width=.45\textwidth,keepaspectratio]{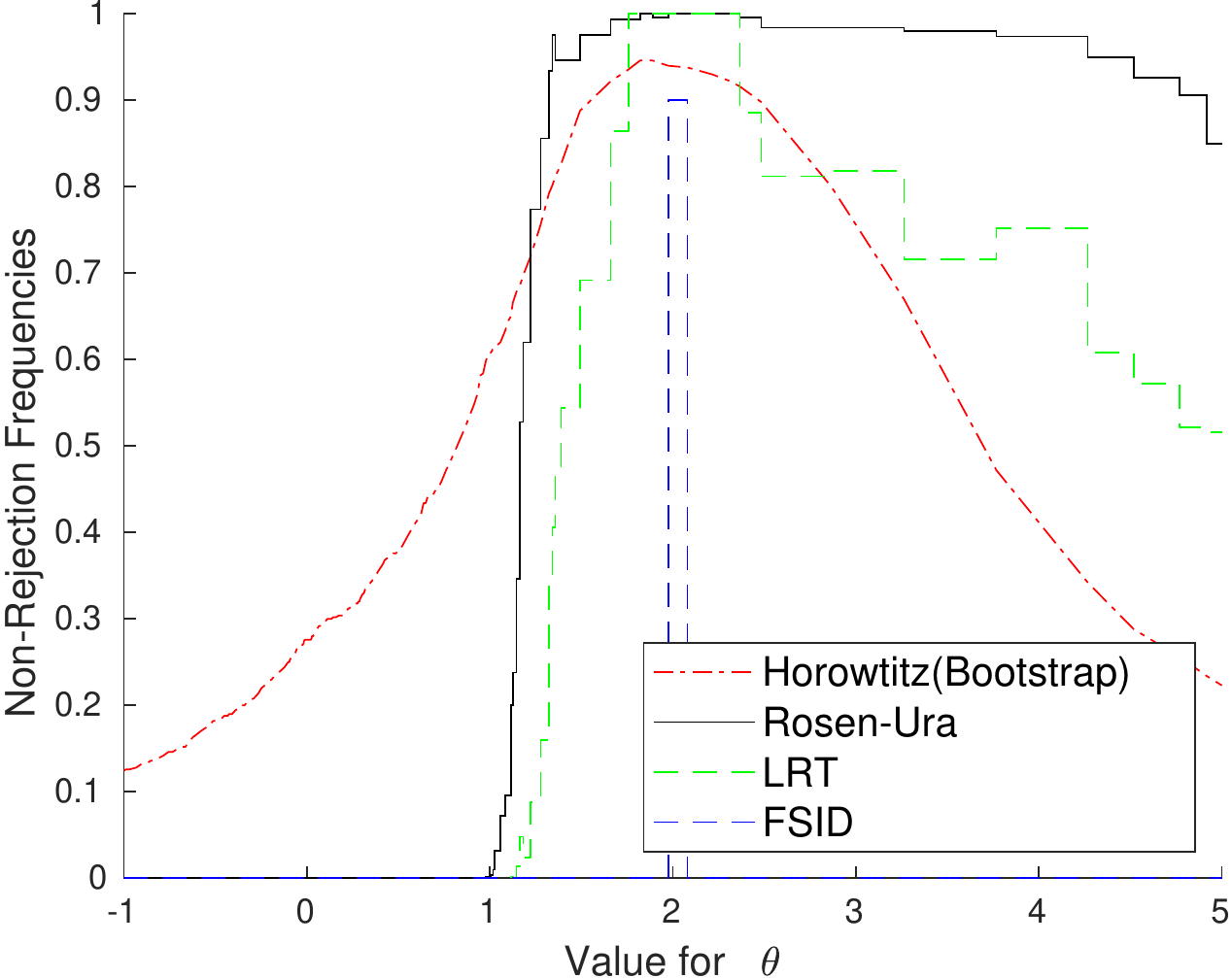}
\caption*{Figure \ref{fig_MC_app3}.d:  Design 4.}
}
\\
\\
\caption{Non-rejection frequencies with $1-\protect\alpha=90\%$ with true $\theta_0=2$ and $n=250$. }
\label{fig_MC_app3}
\end{figure}

\end{document}